\newcommand{\onlineversion}[2]{#1}{} 
\newcommand{\E}{\mathbb{E}}
\newcommand{\abs}[1]{\vert{#1}\vert}
\newtheorem{theorem}{Theorem}[section]
\newtheorem{lemma}[theorem]{Lemma}
\newtheorem{example}[theorem]{Example}
\newtheorem{remark}[theorem]{Remark}
\newtheorem{claim}[theorem]{Claim}
\newtheorem{defn}[theorem]{Definition}
\newtheorem{observation}[theorem]{Observation}
\newtheorem{open_problem}[theorem]{Open Problem}
\newtheorem{fact}[theorem]{Fact}
\def\squareforqed{\hbox{\rule{2.5mm}{2.5mm}}}
\def\QED{\ifmmode\squareforqed 
  \else{\nobreak\hfil   
    \penalty50                 
    \hskip1em                  
    \null                      
    \nobreak                   
    \hfil                      
    \squareforqed              
    \parfillskip=0pt           
    \finalhyphendemerits=0     
    \endgraf}                  
  \fi}
\def\blksquare{\rule{2mm}{2mm}}
\def\qedsymbol{\blksquare}
\newcommand{\bg}[1]{\medskip\noindent{\bf #1}}
\newcommand{\ed}{{\hfill\qedsymbol}\medskip}
\newenvironment{proof}{\bg{Proof : }}{\ed}
\newenvironment{proofof}[1]{\textbf{Proof of #1 : }}{\ed}
\newcommand{\R}{\ensuremath{\mathbb R}}
\newcommand{\poly}{\operatorname{poly}}
\newcommand{\buyersurplus}{\xi}
\newcommand{\comment}[1]{}
\newcommand{\junk}[1]{}
\newcommand{\prob}{\mathbb{P}}
\newlength{\tmp} \newlength{\lpsx} \newlength{\lpsy} \newlength{\upsx}
\newlength{\upsy}
\newcommand{\note}[1]{{\color{blue}\emph{#1}}}
\newcommand{\xhdr}[1]{\vskip 6pt \noindent {\bf #1.}}
\newcommand{\one}{\mathbbm{1}}
\newcommand{\sspace}{\mathcal{S}}
\newcommand{\email}[1]{\texttt{#1}}
\begin{document}

\onlineversion{

\setcounter{page}{0}

\title{Optimal Mechanisms for Selling Information}

\author{
Moshe Babaioff\\
       Microsoft Research SVC\\
       \email{moshe@microsoft.com}
\and
Robert Kleinberg\\
       Cornell University\\
       \email{rdk@cs.cornell.edu}
\and
Renato Paes Leme\thanks{This work was done while the author was an intern in
Microsoft Research SVC. He is supported by a Microsoft Research Fellowship.
}\\
       Cornell University\\
       \email{renatoppl@cs.cornell.edu}
}

}{


\markboth{Babaioff, Kleinberg and Paes Leme}{Optimal Mechanisms for Selling
Information}

\title{Optimal Mechanisms for Selling Information}
\author{Moshe Babaioff
\affil{Microsoft Research Silicon Valley}
Robert Kleinberg
\affil{Cornell University}
Renato Paes Leme
\affil{Cornell University}}

}

\onlineversion{
\maketitle
}{}

\begin{abstract}
The buying and selling of information is taking place at a
scale unprecedented in the history of commerce, thanks to
the formation of online marketplaces for user data.
Data providing agencies sell user information to advertisers
to allow them to match ads to viewers more effectively.
In this paper we study the design of optimal
mechanisms for a monopolistic data provider
to sell information to a buyer, in a model where
both parties have (possibly correlated) private
signals about a state of the world, and the buyer
uses information learned from the seller, along with
his own signal, to choose an action
(e.g., displaying an ad) whose payoff depends on
the state of the world.

We provide sufficient conditions under which there is a simple one-round
protocol (i.e. a protocol where the buyer and seller each sends a single message,
and there is a single money transfer) achieving optimal revenue. In
these cases we present a polynomial-time algorithm that computes the optimal mechanism.
Intriguingly, we show that multiple rounds of partial information disclosure (interleaved by payment to the seller) are sometimes necessary to achieve optimal revenue if the buyer is allowed to abort his interaction with the seller prematurely.
\comment{
We formulate a version of the revelation principle
in this setting and provide sufficient conditions
under which it holds, along with a polynomial-time
algorithm for computing the optimal mechanism in these
cases.
Intriguingly, we show
that the revelation principle may \emph{fail to hold}
in situations where the buyer is allowed to abort his
interaction with the seller prematurely.
}
We also prove some negative results about the
inability of simple mechanisms for selling information
to approximate more complicated ones in the worst case.
\end{abstract}

\onlineversion{
\renewcommand{\thepage}{}
\clearpage
\pagenumbering{arabic}
}{






\begin{bottomstuff}
This work was done while Renato Paes Leme was an intern in Microsoft Research
SVC. During the academic year he is supported by a Microsoft Research
Fellowship.
\end{bottomstuff}
}

\onlineversion{}{
\category{J.4}{Social and Behavioral Sciences}{Economics}
\category{K.4.4}{Computers and Society}{Electronic Commerce}
\category{F.2.2}{Analysis of Algorithms and Problem Complexity}{Nonnumerical Algorithms and Problems}

\terms{Algorithms, Theory, Economics}

\keywords{Mechanism design; Revenue maximization; Selling Information}

\maketitle
}

\section{Introduction}
\label{sec:intro}

A growing trend in online advertisement is the usage of
behavioral targeting and user information (like demographics)
to better match advertisements to the viewer. This is possible due to the
existence of data providing agencies, like Bluekai, Bluecava, eXelate
Media, Clearsprings and RapLeaf, whose business consists in collecting,
curating and selling information about {\em user intent} to advertisers. An
article in NYT \cite{NYT_article_bluekai} analyzes this phenomenon and points
out that data agencies are not exclusively an Internet phenomenon.
For example, for many years companies
like Acxiom and Experian (founded in 1969 and 1980 respectively) have been
collecting information about consumer habits and selling this
information to marketers, who then can use it to send catalogs by mail.

A concrete situation is as follows: an advertiser
has multiple different ads that he can present to the viewer,
and the effectiveness of each of them depends on both the ad
and the characteristics of the viewer. For example, a car maker
would rather show sport car advertisement to affluent young
bachelors while showing ads for family cars to older viewers
with kids. A data providing agency (the seller), might have
some information about the viewer generating the impression,
like gender, age and past interaction on that site. Such
information could be valuable to the advertiser as he would
be able to use it for better targeting, and the monopolist
seller would like to extract as much as possible out of this
value as revenue. The advertiser (buyer) might have some
information about the viewer as well, and this information
might possibly be correlated with the seller's information.
The seller has uncertainty about the buyer's information or
utilities, yet possesses some belief about those.

While selling information about viewers raises
obvious privacy questions, it also raises
fascinating questions of a purely economic nature.
How does one quantify the value of this information?
What is the optimal (i.e.\ revenue-maximizing) selling
strategy for information?
What are the qualitative differences between
selling information and selling physical goods and services?
How do these differences influence the design of
markets for information, and the algorithmic
problems underlying such markets?

To highlight the issues inherent in such questions,
it is helpful to highlight some differences between
a seller offering $n$ distinct goods for sale
and a seller offering $n$ bits of information.
\begin{compactenum}[(1)]
\item  A seller of goods can group them into
bundles, offering a subset of the goods at a specified
price.  A seller of bits can do many other things:
for example, she\footnote{Throughout this paper,
we use female pronouns for the seller
and male pronouns for the buyer.} can set a specified price for
revealing the Boolean XOR of the first two bits or some more complex function
of the bits.
\item  A consumer of goods generally knows
their value even before they are allocated.
The value of a piece of information is typically not
known until the information is revealed.
\item  By the taxation principle, a buyer of goods
can be assumed, without loss of generality,
to be facing a posted-price for each bundle (that is independent of his type).
A seller of
information may, in some cases, be able to extract
strictly more revenue using an interactive protocol\footnote{In this paper we will use the terms {\em protocol} and {\em mechanism} interchangeably.}
rather than posted pricing.  (See Section~\ref{sec:uncommitted}.)
\end{compactenum}
To be sure, there are some cases in which trading goods
possesses some of the characteristics of trading
information noted above.  For example, a customer
in a restaurant does not necessarily know the
quality of the food he is about to consume; in turn,
this can lead to sellers using interactive protocols,
for example, allowing the restaurant
customer to try a limited sample of food
for a reduced price (or even for free) before
deciding whether to order more.  We interpret
such situations as markets in which information
and goods are coupled together, i.e.\ revealing
the quality of the food occurs in conjunction
with selling the food itself.

This paper addresses some of these questions
raised above, by situating them in a model that eliminates
extraneous features --- such as coupling of goods and
information, or competition between
multiple buyers and sellers of information --- while
attempting to remain quite general in the model's
assumptions about information and its utility.
Our only such assumption is that the
utility of information lies in guiding future
actions of the party receiving the information.
Thus, in our model there is a single seller and
single buyer.  A state of the world (denoted by $\omega$
henceforth) is known to the seller
but not the buyer\footnote{Our model also
incorporates cases in which the buyer and seller
\emph{each} receive (possibly different) signals about the
state of the world.  In such cases, we simply treat
the buyer's signal as being encoded in her payoff type,
$\theta$.}, and the buyer's payoff
type (denoted by $\theta$)
is known to the buyer but not the seller.
The two parties engage in an
interactive protocol, consisting of one or more
rounds in which signals and/or money are exchanged.
After this interaction, the buyer chooses an
action (based on his posterior beliefs)
and receives a payoff
that depends on the state of the world, his own
payoff type, and the chosen action.

Crucially, we assume that the seller designs the protocol
and can be trusted to faithfully follow the protocol he designs.
The buyer, on the other hand, need not be honest:
he may send signals
that are inconsistent with his true payoff type if it
is rational to do so.
We do, however, distinguish between \emph{committed} buyers ---
who can be committed to complete the specified protocol even
if they are sending dishonest signals ---
and \emph{uncommitted} buyers, who may abort the protocol
if it is rational to do so, for example when they have received
information and not yet paid for it.

\xhdr{Our results}
The set of all interactive protocols is a large and ill-structured
space.  Searching for the revenue-maximizing one is unfathomably complex
unless there is a way to limit the search space. Our first set of results
provide the tools necessary for that. In mechanism
design this is often done by invoking some form of the
\onlineversion{\emph{revelation principle}, due to Gibbard \cite{gibbard-73} and
Myerson \cite{myerson-79}}{
\emph{revelation principle}~\cite{gibbard-73,myerson-79}}. In their setting,
buyers have private types and the seller (mechanism designer) needs to choose among a
set of outcomes and can charge payments from the buyers. The \emph{revelation
principle} states that if a certain outcome and payments can be implemented in
equilibrium of a possibly complicated and interactive mechanism, then it can be
implemented in a simple direct revelation mechanism, where buyers report their types,
and the mechanism chooses an outcome and payments. Moreover, this mechanism has
a simple equilibrium where each buyer reports his type truthfully. If the
outcome is the allocation of traditional goods, the revelation principle
implies that the mechanism can be implemented as a protocol consisting of three
steps: (i) buyers report their type, (ii) payments take place,
(iii) outcome is determined.
We say that such a mechanism has the \emph{one-round revelation}
property, a property which we define precisely later (in
Definition \ref{def:one_round_revelation}), but intuitively it means that the
buyers move only once (by declaring their type), payments happen only once and
the sellers move only once (by choosing the outcome). 

Now, consider the case where instead of an allocation of traditional goods
the outcome is the disclosure of information. Myerson's revelation principle
still holds in the sense that any outcome can be implemented by a mechanism
where buyers truthfully report their type in the first step. It is not clear,
however, if the stronger property of \emph{one-round revelation} still holds.
After the buyer report his type, a sequence of payments and partial information
disclosures might be required in order to implement a certain outcome.

Our first set of results (Theorems~\ref{thm:rev-principle-independent}
and~\ref{thm:rev-principle-correlated}) provide conditions under which the
one-round revelation property holds. Their precise theorem statement,
to be given in Sections~\ref{sec:independent} and~\ref{sec:sec:correlated},
are a bit stronger: they supplement the revelation principle with additional
information about the relative timing of signals and payments.

\begin{theorem} \label{thm:rev-principle-summary}
When buyers are committed, or when buyers are uncommitted but
$\omega$ and $\theta$ are independent random variables,
any mechanism can be transformed into a mechanism that
extracts the same revenue and has the following form:
the buyer and seller each sends a single message, payment
takes place only once, the buyer's message is simply an
announcement of his type, and truthful reporting maximizes
the buyer's utility.
\end{theorem}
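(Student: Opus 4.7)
The plan is to apply a strengthened revelation-principle argument. Start with an arbitrary mechanism $M$ and let $\sigma^*_\theta$ denote the buyer's equilibrium strategy when his payoff type is $\theta$. Define a candidate direct mechanism $M'$ in which the buyer reports a type $\hat\theta$ and the seller then internally simulates a full run of $M$, using $\sigma^*_{\hat\theta}$ to generate the buyer's side of the interaction while using the true $\omega$ and $M$'s prescribed responses to generate her own side. Let $S^{\hat\theta}$ denote the random final posterior signal that the simulation would ultimately deliver to the buyer, and let $P(\omega,\sigma^*_{\hat\theta})$ denote the total payment the simulation produces. In $M'$, the seller transmits $S^{\hat\theta}$ as a single message and all payments from the simulated run are consolidated into one money transfer; thus $M'$ has the target one-round structure, and revenue is preserved by construction once we verify that truthful reporting is an equilibrium.

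For the committed case, take the single payment in $M'$ to be the random quantity $P(\omega,\sigma^*_{\hat\theta})$ itself. Under truthful reporting, the joint distribution of payment and signal conditional on $\omega$ matches that of $M$'s equilibrium, so expected revenue, buyer utility, and hence IR are all inherited. For IC, a type-$\theta$ buyer who reports $\theta'$ in $M'$ faces exactly the same distribution of outcomes as he would obtain by playing $\sigma^*_{\theta'}$ in $M$ without aborting; since $M$'s IC constraint already rules this out, truthful reporting maximizes utility in $M'$. Commitment is essential here because it lets us collect payment after the signal is delivered and prevents the buyer from improving on the mimicking strategy by aborting mid-simulation.

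For the uncommitted case with $\omega \perp \theta$, the single payment must precede the single signal (otherwise an uncommitted buyer could abort after receiving the information), and it must be a deterministic function of $\hat\theta$ (otherwise its magnitude would leak information about $\omega$ and distort the buyer's continuation decision). Accordingly, set $p(\hat\theta) := \E_\omega[P(\omega,\sigma^*_{\hat\theta})]$ and schedule the protocol as: buyer reports $\hat\theta$, buyer either pays $p(\hat\theta)$ or aborts, and finally the seller sends $S^{\hat\theta}$. Revenue preservation is the first use of independence: the expected revenue collected from a truthful type $\theta$ in $M$'s equilibrium is $\E[P(\omega,\sigma^*_\theta)\,|\,\theta]$, which by $\omega\perp\theta$ equals $\E_\omega[P(\omega,\sigma^*_\theta)] = p(\theta)$. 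IR follows since the truthful buyer's expected payoff in $M'$ matches his payoff in $M$'s equilibrium, hence is nonnegative.

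The IC argument is the delicate step and the second, crucial, use of independence. Consider a type-$\theta$ buyer who reports $\theta'$ in $M'$. If he continues, his expected utility is his expected action payoff under $S^{\theta'}$ minus $p(\theta')$. By independence, $p(\theta') = \E_\omega[P(\omega,\sigma^*_{\theta'})\,|\,\theta]$, which is exactly the expected payment a type-$\theta$ buyer would pay if he played $\sigma^*_{\theta'}$ in $M$ without aborting; hence this continuation yields the same expected utility as the corresponding non-aborting deviation in $M$, and is bounded above by $U^*(\theta)$ by $M$'s IC. Aborting yields $0 \le U^*(\theta)$ by IR. So truthful reporting is optimal in $M'$. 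The main obstacle is precisely this identification, which fails without independence: in general $\E_\omega[P(\omega,\sigma^*_{\theta'})\,|\,\theta]$ and $p(\theta') = \E_\omega[P(\omega,\sigma^*_{\theta'})\,|\,\theta']$ differ, so the flat upfront charge can undercharge a mimicking buyer relative to what $M$ would have collected, opening a profitable deviation that no constraint of $M$ forbids. This pinpoints independence as exactly the property that lets the one-round collapse go through in the uncommitted setting.
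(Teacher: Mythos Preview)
Your argument is correct and follows essentially the same route as the paper. In both cases you construct the one-round mechanism by having the seller simulate the original protocol using the reported type and then collapse the transcript into a single signal plus a single transfer, with the transfer placed \emph{after} the signal and tied to the simulated leaf in the committed case (this is the paper's Pricing Outcomes reduction), and placed \emph{before} the signal as the deterministic quantity $p(\hat\theta)=\E_\omega[P(\omega,\sigma^*_{\hat\theta})]$ in the uncommitted independent case (the paper's Pricing Mappings reduction). Your IC verification in the independent case is exactly the paper's: the crucial identity $\E[P(\omega,\sigma^*_{\theta'})\mid\theta]=\E_\omega[P(\omega,\sigma^*_{\theta'})]=p(\theta')$ is precisely where independence enters, and your closing remark about why the collapse fails without it matches the paper's own commentary. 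Two cosmetic points: the qualifier ``non-aborting deviation in $M$'' is unnecessary (the simulation already incorporates whatever aborts $\sigma^*_{\theta'}$ prescribes), and ``$M$'s IC constraint'' should really be phrased as optimality of $\sigma^*_\theta$ in $M$, but neither affects the argument.
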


Interestingly, the revelation principle \emph{fails} in the remaining case, when there are uncommitted buyers and correlated signals.  The usual logic justifying the revelation principle --- that the agents can always report their types to the mechanism and let it simulate their optimal strategy given their type --- does not apply for a subtle reason having to do with the timing of payments, the correlation of the signals, and the fact that the buyer is uncommitted. The direct mechanism that attempts to simulate an interactive protocol cannot determine an unbiased estimate of the buyer's expected payment before observing $\omega$, because, unlike in the independent case, the conditional distribution of $\omega$ depends on the value of $\theta$ (the buyer's \emph{true} type) and not necessarily on the type that is reported. On the other hand, if the mechanism simulates the protocol using the realization of $\omega$ and posts a price that depends on the simulation outcome,  this fails because the buyers are uncommitted: the price  reveals information about $\omega$, and the buyers may take this information for free while refusing to pay.

\comment{
Interestingly, the one-round revelation property \emph{does fail} in the
remaining case, when there are uncommitted buyers and correlated signals.
We show in Section \ref{sec:uncommitted} that more than one message
from the seller might be required in order to implement a certain outcome.
This result may seem quite surprising, since the one-round revelation
property appears to be justified by a simple simulation argument
that can be phrased as follows.
``An interactive protocol can be replaced with a protocol in which
the buyer starts by reporting his type, the seller then simulates
the buyer's optimal strategy in the interactive protocol, and she
charges the buyer the sum of the transfers that
took place in the simulation.  Buyers should be indifferent between
playing optimally in the interactive protocol and reporting their true
type to the one-round protocol and letting the seller simulate their
optimal play.''  The error in this reasoning is that when the seller
asks the
buyer to pay the sum of the transfers in the simulation, this reveals
information about the outcome of the simulation and hence about
$\omega$.  The uncommitted
buyer may choose to receive this price quote and then
abort without paying for the information revealed.
}

\comment{
The usual logic justifying the
revelation principle --- that the agents can always report
their types to the mechanism and let it simulate their
optimal strategy given their type --- does not apply
for a subtle reason having to do with the timing of
payments.  The direct mechanism that attempts to simulate
an interactive protocol cannot charge buyers their
expected payment before it observes $\omega$ and runs
the simulation of the protocol, because
the conditional distribution of $\omega$ depends on
the value of $\theta$ (the buyer's \emph{true} type)
and not necessarily on the type that is reported.
On the other hand, if the mechanism simulates the protocol
and charges buyers an amount that depends on the simulation
outcome, this
fails because the buyers are uncommitted: the price
charged to them reveals information about $\omega$,
and the buyers may take this
information for free while refusing to pay.
Section~\ref{sec:uncommitted} proves that this
subtlety actually has consequences for the optimal
revenue: direct-revelation mechanisms can actually
be strictly suboptimal when buyers are uncommitted and
signals are correlated.
The main open problem we leave for future work is to completely characterize
the optimal protocol for this case.
}

Our next  results concern algorithms for
computing the optimal mechanism.  Even when the
one-round revelation property holds, it is far from obvious
how to compute the optimal mechanism efficiently.
In a one-round revelation mechanism, the seller
allocates information to the buyer by revealing
a (possibly random) signal sampled from a distribution
that depends on $\omega$ and $\theta$.
The main difficulty is that seller is free to choose
the support size of this distribution (i.e., the number of
potential signals) and in principle, this leads to an
optimization problem of unbounded dimensionality.
Nevertheless, we show that the optimal mechanism can
be computed in polynomial time by solving a convex
program of bounded dimensionality; a by-product of the
proof is an explicit upper bound on the number of potential
signals.
\begin{theorem} \label{thm:algorithms-summary}
Suppose that $\omega$ can take only $m$ possible values and
$\theta$ can take only $n$ possible values.
When buyers are committed, or when buyers are uncommitted but
$\omega$ and $\theta$ are independent random variables,
there is an algorithm that computes the optimal mechanism
in time $\poly(m,n)$.  Furthermore, there is an optimal
mechanism in which every signal transmitted from the seller
to a buyer is sampled from a set of size $O(m+n)$.
\end{theorem}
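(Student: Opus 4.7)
The plan is to first invoke Theorem~\ref{thm:rev-principle-summary} to restrict attention to one-round direct-revelation mechanisms. Such a mechanism is specified by, for each reported type $\theta$, a joint distribution $\mu_\theta(a,\omega)$ (where $a$ is a recommended action) with $\omega$-marginal equal to the correct conditional prior, together with a transfer $t(\theta)$. Standard obedience constraints ensure that on path the buyer follows the recommendation, and revenue $\sum_\theta \Pr[\theta]\, t(\theta)$ is linear in $t$. We are then free to optimize over the $\mu_\theta$'s and $t$'s subject to IR, obedience, and IC.

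Next I would set up the optimization as a linear program. Obedience, IR, marginal matching, and the on-path utilities $V(\theta,\theta)=\sum_{a,\omega}\mu_\theta(a,\omega)u(a,\omega,\theta)$ are all linear. The subtle constraint is IC: a buyer who misreports $\hat\theta$ is free to deviate from the recommendation, and so receives $\sum_a \max_{a'}\sum_\omega \mu_{\hat\theta}(a,\omega)u(a',\omega,\theta)-t(\hat\theta)$, a sum of maxima of linear functionals of $\mu_{\hat\theta}$. I linearize this by introducing an auxiliary variable $z_{\theta,\hat\theta,a}$ for each triple, constrained by $z_{\theta,\hat\theta,a}\ge\sum_\omega \mu_{\hat\theta}(a,\omega)u(a',\omega,\theta)$ for every alternative action $a'$, and rewriting the type-$\theta$ IC as $V(\theta,\theta)-t(\theta)\ge \sum_a z_{\theta,\hat\theta,a}-t(\hat\theta)$. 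This relaxation is tight at optimum because setting $z$ to the relevant maximum is both feasible and worst-case binding.

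The heart of the argument is an extreme-point/Carath\'eodory bound on the signal-support size. Fix an optimal solution and a type $\theta$, and re-optimize over $\mu_\theta$ alone with everything else held fixed. The equality constraints on $\mu_\theta$ are the $m$ marginal-matching conditions, while the inequality constraints that can bind at the vertex are the $O(n)$ IC constraints in which $\mu_\theta$ appears linearly after the $z$-linearization. A standard LP vertex argument then yields an optimal $\mu_\theta$ supported on at most $m+O(n)=O(m+n)$ pairs $(a,\omega)$, and in particular on $O(m+n)$ distinct recommended actions per type. Substituting this support bound back into the LP gives $\poly(m,n)$ variables and constraints, and it can be solved in $\poly(m,n)$ time by the ellipsoid or interior-point method.

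The main obstacle I anticipate is ensuring that this argument goes through uniformly in both the committed case and the uncommitted-independent case. In the former, the buyer's posterior over $\omega$ after a signal is computed from his true type $\theta$ together with the possibly correlated prior, whereas in the latter the prior over $\omega$ is decoupled from $\theta$ and so the posterior after signal $a$ is simply proportional to $\mu_{\hat\theta}(a,\omega)$. Careful bookkeeping of which conditional distribution is used where keeps all constraints (after the $z$-linearization) linear, so the same support bound and the same polynomial-time LP handle both settings; the Carath\'eodory bound and polynomial solvability are then routine once the linearization is correctly set up.
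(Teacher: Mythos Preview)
Your approach has two real gaps.

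\textbf{Gap 1: the transfer structure is wrong for the correlated committed case.} You parameterize the mechanism by a single number $t(\theta)$ per reported type. That is exactly the Pricing Mappings format, which by Theorem~\ref{thm:rev-principle-independent} is optimal when $\omega$ and $\theta$ are independent. But when signals are correlated, Theorem~\ref{thm:rev-principle-correlated} says the optimal one-round mechanism is Pricing \emph{Outcomes}: the transfer must depend on the seller's message as well, i.e.\ $t_\theta(s)$. Example~\ref{example:surplus-correlated} shows this is strictly more powerful: full surplus extraction there requires signal-dependent transfers and cannot be achieved by any single-price-per-type menu. If you switch to $t_\theta(a)$ the objective becomes bilinear in $\mu_\theta$ and $t_\theta$, and you need the change of variables $\tilde t_\theta(a)=\bigl(\sum_\omega \mu_\theta(a,\omega)\bigr)\,t_\theta(a)$ (as in $\mathbf{LP_2}$) plus the $D_\theta$ reference-frame translation to keep everything linear. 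None of this is in your plan; ``careful bookkeeping of which conditional distribution is used where'' does not cover it.

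\textbf{Gap 2: the support bound does not follow from your vertex argument.} When you freeze everything except $\mu_\theta$ and look for a vertex, you count only the $m$ marginal equalities and $O(n)$ IC constraints. But $\mu_\theta$ also appears in the $|A|(|A|-1)$ obedience constraints for type $\theta$ (which you need so that $V(\theta,\theta)$ is linear), and in the $(n-1)|A|^2$ auxiliary constraints $z_{\theta',\theta,a}\ge \sum_\omega \mu_\theta(a,\omega)\,u(a',\omega,\theta')$ that your own $z$-linearization introduced. At a vertex, any of these can bind, so the support you get is $O(m+n|A|^2)$, not $O(m+n)$. The paper sidesteps this entirely by indexing the primal by \emph{posteriors} $q\in\Delta(\Omega)$ rather than action recommendations: then the value of signal $q$ to any type $\theta'$ is the fixed scalar $v_{\theta'}(q)$, so every IC constraint is a single linear inequality in $x_\theta(\cdot)$ with no auxiliary variables and no obedience constraints, and the vertex argument of Theorem~\ref{thm:quadratic support} goes through with exactly $m+\theta$ non-sign constraints.

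Your direct poly-size LP does give a $\poly(m,n,|A|)$ algorithm for the independent case, which is a legitimate (and simpler) alternative to the paper's dual-plus-separation-oracle route for that half of the theorem. But for the correlated committed case and for the $O(m+n)$ signal bound, you would need to switch to the posterior parameterization.
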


In Theorem~\ref{thm:full-surplus}, we prove an
analogue of the 
result of \onlineversion{Cremer and
McLean~\cite{cremermclean88}}{Cremer and
McLean~\citeyear{cremermclean88}}
on optimal auctions with correlated bids.
We show that when the correlation of $\omega$ and $\theta$
is complex enough that a certain matrix has full rank,
the optimal mechanism extracts the full surplus.
However, as in~\cite{cremermclean88}, when this matrix is
ill-conditioned the optimal mechanism can be quite exotic,
using a mixture of
unboundedly large positive and negative payments.
This raises the following question:
To what extent can its revenue be approximated by
simpler and more natural mechanisms?  We explore
this question in Section~\ref{sec:continuity} by
investigating the relative power of four progressively
more general types of mechanisms:
\begin{compactenum}[(i)]
\item
a ``sealed envelope'' mechanism that treats $\omega$
as an indivisible good by writing its value inside a
sealed envelope and posts a price for the 
the envelope;
\item
mechanisms that reveal a signal about $\omega$ but
charge the buyer for this signal {\em before} revealing it;
\item
mechanisms that reveal a signal about $\omega$ and then
charge the buyer a non-negative amount that depends on
the signal;
\item
arbitrary mechanisms.
\end{compactenum}
It is not hard to show that if one compares the optimal
mechanisms from two of these four classes, their revenue
never differs by more a factor of more than $|\Theta|$,
the number of potential buyer types.
Section~\ref{sec:continuity} shows that
this multiplicative gap is tight up to a constant factor:
for any two of the
aforementioned classes of mechanisms, one can find examples where
mechanisms in the more general class obtain
$\Omega(|\Theta|)$ times as much revenue as
the optimal mechanism in the more specific class.

Our work leaves many interesting open problems, we discuss these problems in
Section \ref{sec:open-problems} and point out potential connections to other
areas in computer science (as cryptographic\comment{ and privacy}) and in
economics (as
cheap talk and dynamic mechanism design).


\comment{
{\bf Our results --- this paragraph must be fleshed out!}
What is the computational complexity of determining the
seller's optimal protocol, and what can be said about its
structure?  Can simple protocols approximate the revenue
of the optimal one?  We give fairly complete answers to
these questions in the case that the state of the world
and the buyer's type are independent random variables,
and also in the case that the buyer is committed.  In all
of the aforementioned cases, the seller can maximize
revenue via a relatively simple type of protocol in which
payment only takes place at one point during the execution
of the protocol.  Furthermore, there is an algorithm
to compute the optimal protocol by solving a convex program;
the algorithm's running time is polynomial in the
support size of the joint distribution of the state
of the world and the buyer's type.  However for the
remaining case, namely
uncommitted buyers and correlated distributions,
we present
an example illustrating that
revenue maximization may require the seller
to use a multi-round protocol.
} 

\xhdr{Related work}
The concept of information occupies a notable position in
Auction Theory. In the classic work of
\onlineversion{Milgrom and
Weber \cite{milgrom_weber_82}}{\citeN{milgrom_weber_82}}, 
the authors consider different auction formats for a single item and 
discuss how revenue changes as the seller reveals information (but not directly charging for it) 
regarding the quality of the good. \onlineversion{Persico~\cite{Persico00}}{
\citeN{Persico00}} remarks that the information structure is almost always
assumed to be exogenous and out of the control of the mechanism designer. He
initiates a line of inquire that proposed to endogenize the information
acquisition process in the auction. The information acquisition in his model
occurs by a competing firm paying some fixed exogenous cost (say by
performing R\&D). \comment{ and as a result observing some signal correlated.
\note{MOSHE: correlated with what?}}

It is interesting to consider the qualitative changes when information moves
from an auxiliary device to the position of the central object being sold.
It was noted by many authors that classic results in economics that were
designed for dealing with traditional goods fail when selling information.
\onlineversion{Varian \cite{varian1999markets}}{\citeN{varian1999markets}}
presents such discussion in a much broader context.
Although expressing some similar concerns on the relation with
traditional goods, his definition of information is very different from ours.

Closer to our model are the work of
\onlineversion{Es\"{o} and Szentes \cite{EsoSzentes07}}{\citeN{EsoSzentes07}}
and \onlineversion{H\"{o}rner and Skrzypacz
\cite{HornerSkrzypacz09}}{\citeN{HornerSkrzypacz09}}. In \onlineversion{Es\"{o}
and Szentes \cite{EsoSzentes07}}{\citeN{EsoSzentes07}} the authors develop a
model of consulting, in which the consultant, sells information about a certain
random variable to a client. The assumptions on the nature on the random
variables are orthogonal to ours: theirs are continuous random variables on
$[0,1]$ while ours are discrete random variables. Also, while our utility
function is generic, their utility function is linear. The mechanism obtained
resembles our Pricing Outcomes mechanism, where the payment of the client
depends on the action he takes (in their paper, whether the client decided to
undertake a project or not). Given that their utility function is simpler, the
structure of the disclosed information is also a lot simpler.

The paper of \onlineversion{H\"{o}rner and Skrzypacz
\cite{HornerSkrzypacz09}}{\citeN{HornerSkrzypacz09}} is close to our model of
uncommitted buyers. Their main goal is to design self-enforcing contracts for
an agent to sell a binary High-Low signal about the state of the world to a
firm. The authors remark: ``Lack of commitment creates a hold-up problem: since
the Agent is selling information:, once the Firm learns it, it has no reason to
pay for it''. The model again is orthogonal to ours, since the type of the Firm
(information buyer) is known - so there is no uncertainty from the sellers
perspective. On the other hand, no side has the ability to commit. This ability
in our model leads us to a mechanism design approach to the problem, while the
authors analyze the set of equilibria of a game very much in the spirit of the
cheap talk literature \cite{AumannHart03}.

Our model is also related the treatment of information in \onlineversion{Athey
and Levin \cite{LA01}}{\citeN{LA01}}. The authors model a decision maker who
chooses an action and gets
a reward depending on both the action chosen and an unknown state of the world
from which the decision maker gets an imperfect signal. \comment{They consider,
however,
both the unknown state of the world and the signal acquired by the decision
maker to be real numbers, 
\note{MOSHE: what is the implication of this "real number" assumption. I think it relates to assuming some order, if so we should be explicit about that.}
and they restrict their analysis to a special class of
decision problems called monotone decision problems.}
They assume a total ordering on the space of states of the world and on the
space of signals and restrict their attention to monotone decision processes.
We, on the other hand,
do not make any structural assumptions about the set of possible states of the
world and set of possible signals (except being finite) and also do not assume
anything about
the decision problem. Our approach is also different: while the authors in
\cite{LA01} derive comparative statics for the demand of information, we take
a mechanism design approach.

Also related to our paper is the work of \onlineversion{Admati and
Pfleiderer}{} \cite{AP86,AP90}. Motivated by
financial advice in the market for securities, the authors analyze the
following setting: a monopolist seller has information about how much a share
of a risky asset pays off, which is distributed according to a {\em normal}
random variable. They consider a two step process: in the first round, the
seller is able to sell this information to a continuum of traders using some
mechanism. In the second round, the buyers (traders) trade in a speculative
market. The authors consider the problem of how to design mechanisms to
maximize revenue. Although of a similar spirit, the model of Admati and
Pfleiderer is orthogonal to ours. Their model considers multiple buyers that
trade in the same market in the second round, instead of simply making a
decision. On the other hand, many features of our model are absent in theirs:
they do not consider uncertainty of the seller about the buyer's utility
function and the fact that the buyers might already have some private
information correlated with the seller's information. Moreover, they also limit
the seller's power to transform the signal. In their model, the only
transformation the seller is able to perform on the signal is to add normally
distributed noise to it.

Our notion of {\em uncommitted buyers} is in the spirit of participation
constraints used in the literature on dynamic mechanism design \cite{BV06,AS07}.
This literature considers multi-round interaction between a mechanism and the
agents, and for such mechanisms it is required that individual participation
constraints hold in {\em every} period.
In parallel, when considering uncommitted buyers we require that participation is
voluntary at every point the buyer plays, that is, that the buyer does not
defect throughout the protocol. Note that unlike that literature, our buyer only
gets one exogenous signal, so the different ``periods'' are not points of
getting new exogenous information, but rather points where
he takes an action in a multi-round protocol with the seller.

\section{Setting: Context and Protocols}
\label{sec:model}

We consider a setting with a buyer and a monopolist seller. The buyer is a decision maker
and his decision can be represented as picking an action $a \in A$.
His reward from this action depends on the state of the world which is unobservable.
However, both buyer and seller get private signals about the state of the world.
Let $\theta \in \Theta$ be the private signal of the buyer and $\omega \in
\Omega$ be the private signal of the seller.
In this paper we consider the case that both  $\Theta$  and $\Omega$ are finite.
The buyer's expected reward for
taking action $a$ when the two signals are realized to 
$\theta, \omega$ is given by $u(\theta, \omega,
a)$. The pair of private signals comes from a joint distribution $\mu \in
\Delta(\Omega \times \Theta)$.\footnote{As usual, given any set $X$ we represent the set of
probability distributions over $X$ by $\Delta(X)$. We denote by
$\mu(\omega,\theta)$ the probability of the event $(\omega,\theta)$ and by
$\mu(\omega)=\sum_{\theta} \mu(\omega,\theta)$ and $\mu(\theta)=\sum_{\omega} \mu(\omega,\theta)$ the marginals of $\omega$ and $\theta$
respectively.} We denote the prior on $\omega$ by the vector $p \in \R_+^\Omega$,
i.e., $p(\omega) = \mu(\omega)$. We call the pair $(u,
\mu)$ the \textbf{context} and assume it is common knowledge.
\comment{ among the seller and the buyer.}

To illustrate the model, suppose the buyer is an Internet advertiser who has
acquired one display-ad slot and is deciding which ad to show to the user. So,
the set $A$ represents the possible ads he can place on this slot. The
effectiveness of each ad depends on who the user is exactly. This is unknown to
the buyer, but he has a private signal $\theta$, which is the user browsing
history in the website. Consider now the seller as a data provider who has
information about age, gender, geographic location and income range of
the user. Let this information be encapsulated in a signal~$\omega$.

Another interpretation is to consider $\theta$ as the type of the buyer. Since
the reward of the buyer $u(\theta, \omega, a)$ is a function of $\theta$, one
can use the same model to express the seller's uncertainty about the
buyer's reward function.

The information that the seller holds is valued by the buyer.
If the buyer observe his signal $\theta$ and nothing more,
his expected reward is $\max_a \E_{\omega}[u(\theta, \omega, a) \vert \theta]$,
where the expectation is taken over $\omega$ sampled from $\mu(\cdot\vert \theta)$.
If he also learns the value of $\omega$ exactly, his expected
reward increases to $\E_{\omega}[\max_a u(\theta, \omega, a) \vert \theta]$.
His surplus from knowing $\omega$  is thus:
$$\buyersurplus(\theta) = \E_{\omega}[\textstyle\max_a u(\theta, \omega, a)
\vert
\theta] - \textstyle\max_a \E_{\omega}[u(\theta, \omega, a) \vert \theta]$$
Ideally, the seller would like to extract this extra surplus as revenue, but as
she faces uncertainty
regarding the buyer (she does not know $\theta$) and since the buyer act
strategically,
generically the seller would not be able to extract all that surplus.
The central question of this paper is, ``What mechanisms can the seller use in
order to
extract the largest possible fraction of this surplus?''

\subsection{Sealed Envelope Mechanism}\label{subsec:Sealed-Envelope}

Before we start exploring the space of all possible mechanisms, we present a
very simple (and usually suboptimal) mechanism --- the Sealed Envelope
Mechanism. We do so in order to highlight some basic difficulties in designing
mechanisms for selling information. In this mechanism the seller treats the
information as if it were a regular good.  She
 writes $\omega$ on a piece of paper, puts it inside an envelope and then offers
the
envelope for a fixed price $t$ to the buyer. If the buyer's type is $\theta$,
then his value for the envelope equals to $\buyersurplus(\theta)$, his surplus
from knowing $\omega$,
so the revenue is $t \cdot \prob_{\theta \sim \mu} [\buyersurplus(\theta)\geq
t]$.
Note that the seller is {\em not} using her knowledge of $\omega$ in determining
$t$.
It is easy to optimize $t$ to obtain the best Sealed Envelope Mechanism.

Notice that, after seeing $\omega$, the seller can update her belief
about $\theta$. The seller might try to
change the mechanism in the following way: enclose $\omega$ in an envelope and
sell it for price $t(\omega)$ maximizing
$t \cdot \prob_{\theta \sim \mu} [\buyersurplus(\theta)\geq t \vert \omega]$.
By doing so, the seller leaks information in the prices. Upon observing price $t(\omega)$, the buyer gains
information about $\omega$ even {\em without} buying the envelope.

\subsection{Generic Interactive Protocol}

\comment{
Mechanism design usually starts by invoking the 
celebrated revelation principle \cite{gibbard-73,myerson-79}. One might be
tempted to apply it out-of-the-box to  this setting, but this would ignore many
subtleties of the fact that the good in question is {\em information}.
Many of those subtleties were discussed in the introduction.

Therefore, our first steps are to define a generic interactive protocol between
the buyer and the seller, to recast the revelation principle in terms
of structural properties of the protocol, and then to investigate to what
extent the revelation principle holds for our setting. We assume that the
protocol is designed by the seller with only knowledge of the context $(u,\mu)$.
The protocol prescribes the behavior of the seller for each $\omega$ and we
assume that the  seller always follows this prescription. All of this is known
to the buyer.
}

Our first step is to define a generic interactive protocol between
the buyer and the seller. We assume that the protocol is designed by
the seller with only knowledge of the context $(u,\mu)$. The protocol prescribes
the behavior of the seller for each $\omega$ and we assume that the seller
always follows this prescription. All of this is known to the buyer.

After this we state Myerson's revelation principle for this setting and then
we formally define the stronger notion of \emph{one-round revelation}. Then we
discuss to what extend it is possible to obtain one-round revelation mechanisms
for this setting.

\begin{defn}[Generic interactive protocol]
 A generic interactive protocol for a context $(u,\mu)$ is a finite decision
tree defined on a set of nodes $N$. For each non-leaf node,
let $C(n)$ be the children of node $n$. Each non-leaf node is labeled either as a seller-node, as
buyer-node or a transfer-node.  Furthermore:
\begin{itemize}
 \item each seller node $n$ has a prescription of the seller behavior, which
associates for each $\omega$ a probability distribution over $C(n)$. Formally,
the prescription on node $n$ is a collection of distributions
$\psi^\omega_n \in \Delta(C(n))$, one for each $\omega$.
\item each transfer node $n$ has only one child and has associated with it a
fixed (possibly negative) amount $t(n)$.
\end{itemize}
\end{defn}
In practice we can think of each edge as labeled with a different message.
Starting from the root, if the seller or buyer moves, she or he sends a
message, which is represented by moving down the tree (picking a child).
As the seller pledges to follow the protocol, her behavior (distribution of
messages she sends conditional on $\omega$) is encoded in seller nodes in
advance. On the other hand, the buyer  strategically decides on the messages he
sends at buyer nodes given the protocol tree defined by the seller.  Moving down
the tree from a transfer node $n$ to its child represents a money transfer; the
value $t(n)$ designates how much is transferred from buyer to seller, so a
negative value represents a payment to the buyer.
No action is taken at leaf nodes.
In each such a node the buyer will update his
belief about $\omega$ based on the protocol history.
This belief is called the {\em posterior probability distribution}.

We consider two types of strategies for the buyer: committed and uncommitted
strategies.  A {\em committed strategy} is one where we can trust the buyer
to follow the entire protocol, i.e., when reaching a buyer node, he sends one
of the messages specified by the protocol and when reaching a transfer node, he
sends or receives the specified amount of money. An {\em uncommitted strategy} is one
where the buyer has, on top of that, the option of defecting from the protocol
in each node,  by simply leaving the mechanism (which is formally captured by allowing him to play ''$\bot$'').
More formally:

\begin{defn}[Buyer strategies]
A \textbf{committed strategy} is a collection of distributions
$\phi_n^\theta \in \Delta(C(n))$ for each
buyer-node or transfer-node $n$ and each type $\theta$.

An \textbf{uncommitted strategy} is a collection of
distributions $\phi_n^\theta \in \Delta(C(n) \cup \{\bot\})$ for each
buyer-node or transfer-node $n$ and each type $\theta$.
\end{defn}
At this point, it is instructive to represent the Sealed Envelope Mechanism in
the form of a generic protocol. It is a tree consisting of three interior nodes:
in the root there is a buyer-node with two children corresponding to the
messages: ``Do not accept the offer'' and ``Accept the offer''. The child
corresponding to ``Do not accept the offer'' is a leaf. The other
child is a
transfer-node with the specified amount of $t$. Its only child is a
seller-node. This seller node has a child for each $\omega \in \Omega$ and
the seller prescription for this node is simply $\psi_n^{\omega}(\omega) = 1$.
See Figure~\ref{fig1} for an illustration.

Two other natural selling strategies are important in this work.
\textbf{\em Pricing Mappings} refers to any posted-price mechanism in which the
seller presents a menu of offers each having the following form: for a
fixed amount of money, the buyer obtains the right to observe a random
signal sampled by the seller
from a distribution that depends on the value of $\omega$
in a pre-specified way.  \textbf{\em Pricing Outcomes} refers to a
similar type of posted-price mechanism
with one crucial difference: rather than charging the buyer
a fixed amount of money after he selects an offer from the menu,
the amount that the buyer pays (or receives) is allowed to depend
on the signal that is revealed by the seller.  This gives the
seller the potential to price-discriminate among buyers whose
different types lead to their having different  beliefs
about the conditional distribution of $\omega$,
and therefore different assessments about the expected cost
of accepting a given offer.  Mechanisms that price
mappings or price outputs can easily be represented in the
form of generic interactive protocols, as illustrated in Figure~\ref{fig1}.

It is important to notice that the seller designs the protocol solely based on
the context, and after she designs the protocol its description becomes
common knowledge. This happens before the pair $(\theta,\omega)$ is drawn.
For example, the price at which
the item is offered in the Sealed Envelope Mechanism is hard-coded in the
protocol, so there is no need for the seller to send a message announcing it.

Now we define the utility associated with a committed strategy for a given protocol.
Each committed strategy $\phi$ induces a distribution over the leaves of the tree: sample
$(\theta, \omega) \sim \mu$, then start in the root and use
$\psi_n^\omega$ and $\phi_n^\theta$ to move down the tree until a leaf is
reached. Let $Z$ be the leaf reached. For each leaf
$\ell$ associate $\tau(\ell)$ the sum of the amounts of the transfer-nodes in the
path between $\ell$ and the root. We define the utility of buyer of
type $\theta$ for $\phi$ as:
$$U(\theta, \phi) = \E_{\omega}[ \textstyle\max_a \E[u(\theta,\omega,a) \vert
\theta, Z] - \tau(Z) \vert \theta]$$
We say that a protocol is \textbf{voluntary} is there is a committed strategy
$\phi$ such that \onlineversion{$$U(\theta, \phi) \geq \max_a
\E_{\omega}[u(\theta, \omega, a) \vert
\theta], \quad \forall \theta.$$}{$U(\theta, \phi) \geq \max_a
\E_{\omega}[u(\theta, \omega, a) \vert
\theta]$ for all $\theta$.} This means that the expected utility the buyer gets
from participating in the protocol is at least as large as the utility he would
get by not participating in it.
A committed strategy is called \textbf{optimal} if for all $\theta$ and for all
alternative committed strategies $\phi'$, $U(\theta, \phi) \geq U(\theta, \phi')$.
The \textbf{revenue} extracted by this protocol is: $\E[\tau(Z)]$.

We can define similar concepts for uncommitted strategies: for a given node $n$,
let $\tau(n)$ be the sum of the amounts in the transfer nodes in the path
between the root and $n$ (not including $n$). An uncommitted strategy defines a
distribution $Z$ over the nodes of the tree (not necessarily leaves) and we can
therefore define $Z, U(\theta, \phi)$, optimal strategy and revenue in the exact
same way.
Notice that every protocol is trivially voluntary for uncommitted buyers,
since there is always a strategy guaranteeing
the buyer $\max_a \E_{\omega}[u(\theta, \omega, a) \vert
\theta]$, which is the strategy that defects at any transfer node.

\begin{defn}
 We say that it is possible to {\em extract revenue $R$} from a committed (uncommitted)
buyer in a context $(u,\mu)$ if there is a voluntary protocol for this context
and an optimal committed (uncommitted) strategy $\phi$ for this protocol with
revenue at least $R$.
\end{defn}

Notice that in the case of committed buyers 'voluntary' is an important
restriction, because otherwise, one could simply have a mechanism consisting
solely of a transfer node of amount $R$ and a leaf. For uncommitted buyers,
however, the only optimal strategy in such a mechanisms would be to defect in
the root.

\begin{defn}
We define the {\em optimal revenue} that can be extracted from a committed (uncommitted)
buyer in a context $(u,\mu)$ to be the maximum $R$ such that for any
$\epsilon>0$ it is possible to extract revenue $R-\epsilon$ from a committed
(uncommitted) buyer in a context $(u,\mu)$.
\end{defn}

\begin{figure}
\centering
\includegraphics[scale=.9]{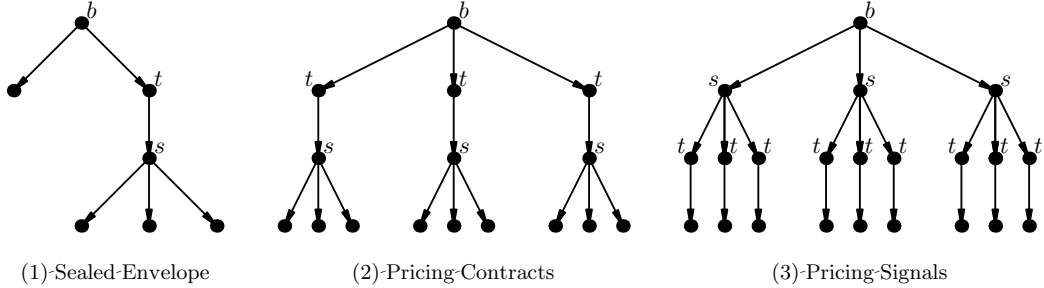}
\caption{Illustration of various protocols. A node labeled by $b$ is a buyer node, a node labeled by $s$ is a seller node, and a node labeled by $t$ is a transfer node.}
\label{fig1}
\end{figure}

\subsection{Revelation Principle and One-Round Revelation Mechanisms}

First we define the concept of a \emph{revelation mechanism}, in the sense of
Gibbard \cite{gibbard-73} and Myerson \cite{myerson-79} and recast their
celebrated revelation principle in our setting (its proof is included, for
completeness, in Appendix \ref{appendix:model}). 
For our purposes it would be sufficient to formulate
it in terms of revenue (traditional formulations are somewhat more general).

\begin{defn}[Revelation Mechanism]
\label{def:revelation mechanism}
A \emph{revelation mechanism} is a protocol represented by a
tree where the root is buyer node, where the buyer is asked to report his
type. Moreover, there are no other buyer nodes in the tree.  A strategy in
such a protocol is \emph{truthful} if the buyer reports his type truthfully in
the root.
\end{defn}

\begin{theorem}[Revelation Principle]\label{thm:uncommitted_partial_revelation}
\label{thm:myerson_revelation}
Consider any context $(u,\mu)$. If it is possible to extract revenue $R$ from a
committed (uncommitted) buyer in the context $(u,\mu)$, then there is a revelation
mechanism and a committed (uncommitted) truthful strategy that is optimal for it and
produces revenue $R$.
\end{theorem}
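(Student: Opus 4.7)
The plan is to use the standard simulation-style construction that justifies the revelation principle. Starting from any voluntary protocol $P$ together with an optimal buyer strategy $\phi = \{\phi_n^\theta\}$ achieving revenue $R$, I construct a revelation mechanism $P'$ as follows. The root of $P'$ is a buyer node with one child per reported type $\hat\theta \in \Theta$. Underneath the $\hat\theta$-branch I place a modified copy of the tree $P$, in which every original buyer node $n$ is relabeled as a seller node whose prescription is $\psi_n^\omega := \phi_n^{\hat\theta}$ for every $\omega$, so that the seller now randomizes on behalf of a buyer of reported type $\hat\theta$. Seller nodes and transfer nodes of $P$ are copied verbatim. To accommodate the possibility that $\phi^\theta$ puts mass on $\bot$ at a buyer node, I append a leaf child to the corresponding new seller node and route the $\bot$-probability mass there.

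For the committed case, I would verify two things. First, when a committed buyer of type $\theta$ reports truthfully, the induced joint distribution over $(\omega, \text{leaf})$ in $P'$ coincides exactly with the one induced by strategy $\phi^\theta$ in $P$; in particular the buyer's expected utility is $U(\theta,\phi)$ and the seller's expected revenue is $R$. Second, if the buyer of type $\theta$ misreports $\hat\theta$, the induced play corresponds to using strategy $\phi^{\hat\theta}$ in $P$ with true type $\theta$, which by optimality of $\phi$ in $P$ yields utility no greater than $U(\theta,\phi)$. Thus truthful reporting is an optimal committed strategy, and voluntariness of $P'$ is inherited from voluntariness of $P$.

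For the uncommitted case, the key observation is that in $P'$ the buyer retains a decision at every transfer node (he may still play $\bot$). I would exhibit the truthful uncommitted strategy that, at each surviving transfer node, mirrors the defection behavior prescribed by $\phi^\theta$ at the corresponding transfer node of $P$; this induces the same distribution over visited nodes in $P'$ as $\phi$ induces in $P$, hence the same revenue $R$. Any deviation in $P'$ — a misreport $\hat\theta$ together with arbitrary choices at the ensuing transfer nodes — can be lifted to a bona fide uncommitted strategy in $P$ that plays $\phi^{\hat\theta}$ at the original buyer nodes and matches the deviator's behavior at transfer nodes, and such a strategy cannot beat $\phi^\theta$ by optimality. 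Voluntariness is automatic, since an uncommitted buyer can always defect at the first transfer node.

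The main obstacle is conceptual rather than technical: one must recognize that the revelation mechanism collapses only the \emph{informational} decisions of the buyer (all buyer nodes folded into a single type report at the root) while leaving intact the \emph{continuation} decisions (transfer-node defections, in the uncommitted case). It is precisely because the construction does not strip away this latter flexibility that the buyer's best-response set is preserved and the revenue argument goes through. This unobstructed equivalence is what lets the \emph{revelation} principle hold in full generality here, even though the stronger \emph{one-round revelation} property will later require additional hypotheses such as independence or commitment.
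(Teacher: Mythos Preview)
Your proposal is correct and follows essentially the same simulation-based construction as the paper: attach to each $\hat\theta$-branch a copy of the protocol in which every buyer node is converted to a seller node playing $\phi^{\hat\theta}$, and argue that any deviation in the new mechanism lifts to a deviation in the original one. The only cosmetic difference is that the paper first observes that defecting at a buyer node is weakly dominated by defecting at the next transfer node (so one may assume $\phi$ never plays $\bot$ at buyer nodes), whereas you instead route the $\bot$-mass to an appended leaf; both treatments are valid and yield the same conclusion.
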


Theorem \ref{thm:myerson_revelation} says that we can restrict our attention to
Revelation Mechanisms. However, this still allows trees with arbitrary depths
and complicated arrangements of seller and transfer nodes. Next we define the
stronger notion of \emph{One-round revelation mechanisms}, where  only a very
simple interaction between the seller and the buyer is allowed:

\begin{defn}[One-round Revelation Mechanism]
\label{def:one_round_revelation}
A \emph{one-round revelation mechanism} is a revelation mechanism represented
by a tree of depth three where each path from the root has at most one
vertex of each type (i.e. at most one seller node and at most one transfer
node).
\end{defn}

Now we present formal definitions of two special types of One-round Revelation
Mechanisms which were briefly discussed earlier in this section:

\begin{defn}[Pricing Mappings and Pricing Outcomes]
\label{def:contracts-signals}
A \emph{Pricing Mappings Mechanism} is a
truthful direct revelation mechanism in which
all the children of the root are transfer nodes
and all their children are seller nodes.
A \emph{Pricing Outcomes Mechanism} is a
truthful direct revelation mechanism in which
all the children of the root are seller nodes
and all their children are transfer nodes.
\end{defn}

\section{Independent signals}\label{sec:independent}

In this section we analyze the case when $\omega$ and $\theta$ are independent,
which is simpler than the general case.
In this case, the seller's belief about $\theta$ and the buyer's belief about
$\omega$ are common knowledge. First we prove that for this setting, we can
focus on One-round Revelation Mechanism when searching for the optimal
mechanism. Then we show how to compute it efficiently using a convex program. 
Finally we show that there always exists a protocol with a fairly small tree.

\begin{theorem}[Existence of a One-Round
Optimal Mechanism]\label{thm:rev-principle-independent}
Consider any context $(u, \mu)$ such that $\theta$ and $\omega$ are
independent.
If it is possible to extract revenue $R$ from a {\em committed} buyer in the
context $(u,\mu)$, then it is possible to extract revenue $R$ from an {\em
uncommitted} buyer in the same context using a Pricing Mappings Mechanism.
\end{theorem}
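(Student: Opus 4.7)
The plan is to start from an arbitrary protocol extracting revenue $R$ from a committed buyer, apply Myerson's revelation principle (Theorem~\ref{thm:myerson_revelation}) to obtain a truthful revelation mechanism $\pi$ with the same revenue, and then exploit the independence of $\omega$ and $\theta$ to collapse the subtree under each reported type into a single prepaid payment followed by a single seller message. Concretely, fix the committed buyer's optimal strategy $\phi$ in $\pi$; for each reported type $\tilde\theta$ and each state $\omega$ this induces a distribution $q^\omega_\ell(\tilde\theta)$ over leaves $\ell$, with $\tau(\ell)$ denoting the sum of transfers along the path to $\ell$. Define a Pricing Mappings Mechanism $\pi'$ in which the buyer first reports a type $\tilde\theta$, then pays
\[
t(\tilde\theta) \;=\; \sum_\omega p(\omega) \sum_\ell q^\omega_\ell(\tilde\theta)\,\tau(\ell),
\]
and finally the seller sends a signal $\ell$ drawn from $q^\omega_\cdot(\tilde\theta)$ (the seller's distribution at this single seller-node depends on $\omega$ but not on $\tilde\theta$, since a distinct seller-node is used for each reported type).

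Because $\omega \perp \theta$, the buyer's posterior on $\omega$ upon receiving signal $\ell$ after truthful report $\theta$ is proportional to $p(\omega)\,q^\omega_\ell(\theta)$ in both $\pi$ and $\pi'$, so his continuation utility from the action choice is identical. Combined with $t(\theta) = \E[\tau(Z)\mid\theta]$ under $\phi$, this yields $U_{\pi'}(\theta,\text{truthful}) = U_{\pi}(\theta,\phi)$ for every $\theta$, and in particular the expected revenue of $\pi'$ equals $R$. Truth-telling remains incentive-compatible: a misreport $\tilde\theta \neq \theta$ in $\pi'$ yields the same induced joint distribution over $(\ell,\tau(\ell))$ as the committed deviation in $\pi$ that reported $\tilde\theta$ and then followed $\phi$'s $\tilde\theta$-subtree strategy, and that deviation was already weakly dominated by truthful play in $\pi$ by Myerson's principle --- crucially, independence of $\omega$ and $\theta$ is what ensures the seller-node distribution $q^\omega_\cdot(\tilde\theta)$ is well-posed without conditioning on the reporter's true type.

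The main obstacle, and the step that specifically uses the uncommitted-vs-committed distinction, is ruling out profitable defection. In $\pi'$ the single transfer node sits between the report and the seller's message, so at the moment the buyer decides whether to pay $t(\theta)$ he has released no information beyond his own report and has received none; defection is therefore information-equivalent to refusing to participate at all, yielding exactly $\max_a \E_\omega[u(\theta,\omega,a)\mid\theta]$. Voluntariness of $\pi$ gives $U_\pi(\theta,\phi) \geq \max_a \E_\omega[u(\theta,\omega,a)\mid\theta]$, so participating dominates defecting, and the uncommitted buyer's best response is truthful participation. Independence is essential precisely here: had the payment $t(\theta)$ been replaced by one depending on $\omega$ (as would be forced in the correlated case), the quoted price itself could leak information about $\omega$ that an uncommitted buyer could consume before defecting, consistent with the failure of one-round revelation flagged in the introduction for correlated signals.
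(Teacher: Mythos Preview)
Your construction and overall argument mirror the paper's proof exactly: start from an optimal committed strategy, replace each $\theta$-branch by a single prepaid transfer equal to the expected sum of transfers along that branch, followed by a single seller node reproducing the original leaf distribution, and then check IR, IC, and no-defection.

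One step is misstated. In the IC paragraph you claim that a misreport $\tilde\theta$ in $\pi'$ ``yields the same induced joint distribution over $(\ell,\tau(\ell))$'' as the corresponding deviation in $\pi$. That is false: in $\pi'$ the transfer is the deterministic constant $t(\tilde\theta)$, whereas in $\pi$ it is the random variable $\tau(\ell)$. What is actually true --- and what the paper argues --- is that the \emph{marginal} distribution over signals $\ell$ is the same, and the \emph{expected} payment is the same; the latter is exactly where independence enters, since a type-$\theta$ buyer evaluating the deviation computes $\E_{\omega\mid\theta}\bigl[\sum_\ell q^\omega_\ell(\tilde\theta)\tau(\ell)\bigr]$, and only when $\omega\perp\theta$ does this equal $t(\tilde\theta)$. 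Your remark that independence makes $q^\omega_\cdot(\tilde\theta)$ ``well-posed'' is off target (that distribution is well-defined regardless); independence is needed so that the flat price $t(\tilde\theta)$ matches every type's perceived expected cost of that branch. Correspondingly, your final paragraph locates the essential use of independence in the defection argument, but it really lives in the IC step; the defection argument only needs that the transfer precedes the seller's message, which is a structural feature of Pricing Mappings and holds regardless of correlation.
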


The proof is presented in Appendix~\ref{sec:ap-independent}.
A consequence of Theorem \ref{thm:rev-principle-independent} is that with
independent signals the fact that the
buyer is committed does not help the seller to extract more revenue.
In this setting it is possible to extract revenue $R$ from uncommitted buyers if
and only if it is possible to extract revenue $R$ from committed
buyers.\\
\xhdr{Pricing Mappings Mechanism}\label{subsec:princing-contracts}

\noindent Theorem \ref{thm:rev-principle-independent} allow us to focus on
Pricing Mappings
Mechanisms.
An alternative way of describing a Pricing Mappings Mechanism is as a
fixed menu of contracts
$\{(Y_\theta, t_\theta)\}_{\theta\in \Theta}$.
The contract $(Y_\theta, t_\theta)$ is intended for a buyer of type $\theta$ (in
the sense that it would be optimal for such a buyer to choose that contract out
of the menu).
A buyer choosing the contract $(Y_\theta, t_\theta)$ would pay $t_\theta$ and
observe one realization of the random variable $Y_\theta$ that is
correlated with $\omega$, taking values in a finite set $\sspace_\theta$.
We call the elements of $\sspace_\theta$ {\em signals}, since they reveal
to the buyer some information about $\omega$.
By buying the contract $(Y_{\theta'}, t_{\theta'})$ a buyer of type $\theta$ gets
utility\footnote{Since this section talks about the case that $\omega$ and $\theta$ are independent, sampling $\omega\sim \mu(\cdot\mid \theta)$ is the same as sampling $\omega\sim \mu$, and thus in the rest of the section when writing $\E_{\omega}$ we mean $\E_{\omega\sim \mu(\cdot\mid \theta)}$}
$\E_{Y_{\theta'}} [\max_a  \E_{\omega} [u(\theta, \omega, a) \vert Y_{\theta'}]] - t_{\theta'}$.
Now, we discuss how to design the menu of contracts
in order to maximize revenue.

To be more precise, $Y_\theta$ is a random variable that is
produced by the seller using $\omega$ and possibly some random bits $r$ that are
independent of $(\omega,\theta)$. Without loss of generality
we represent $Y_\theta$ by a family
$\{\psi^\omega_\theta\}_{\omega \in \Omega}$ for $\psi^\omega_\theta \in \Delta(\sspace_\theta)$.
In order to sample $Y_\theta$, the seller observes $\omega$ and then samples
the value of $Y_\theta$ according to $\psi^\omega_\theta$.

Without loss of generality, we can call $(Y_\theta, t_\theta)$ the {\em favorite
contract} of a buyer of type $\theta$. In order for such set of contracts to be
{\em valid} we need to make sure that: (1) the protocol is voluntary, i.e., the
utility of a buyer of type $\theta$ by taking contract $(Y_\theta, t_\theta)$ is
at least as high as his utility of not participating in the mechanism and acting using his belief given only $\theta$,
and (2) contract $(Y_\theta, t_\theta)$ is indeed his favorite one, i.e., he would not strictly prefer to
misreport his type and buy a contract $(Y_{\theta'}, t_{\theta'})$ for some $\theta'\neq \theta$.
Property (1) ensures \textbf{individual rationality} (\textbf{IR}) and property
(2) ensures \textbf{incentive compatibility} (\textbf{IC}). Formally:

\begin{defn}\label{dfn:valid_contracts}
 A menu of contracts $\{(Y_\theta, t_\theta)\}_{\theta\in \Theta}$ is {\em
valid} if and only if: \comment{ the following
conditions hold:}
$$ \begin{aligned} & \E_{Y_\theta} [\textstyle\max_a
\E_\omega[u(\theta, \omega, a) \vert Y_\theta]] - t_\theta \geq
\textstyle\max_a
\E_\omega[u(\theta, \omega, a)], \quad \forall \theta, &(IR_\theta) \\
& \E_{Y_\theta} [\textstyle\max_a
\E_\omega[u(\theta, \omega, a) \vert Y_\theta]] - t_\theta \geq  \E_{Y_{\theta'}}
[\textstyle\max_a
\E_\omega[u(\theta, \omega, a) \vert Y_{\theta'}]] - t_{\theta'},  \quad \forall \theta
\neq \theta', &(IC_{\theta,\theta'}) \end{aligned}$$
\end{defn}

Given a valid menu of contracts, its associated revenue is given by
$\sum_{\theta\in \Theta} \mu(\theta) \cdot t_\theta$. This definition implicitly
assumes that whenever the
buyer of type $\theta$ is indifferent between contract $(Y_\theta, t_\theta)$
and not buying anything, i.e. the IR constraint is tight, then he buys contract
$(Y_\theta, t_\theta)$. It also assumes that whenever he is indifferent between
 $(Y_\theta, t_\theta)$ and  $(Y_{\theta'}, t_{\theta'})$, he buys  $(Y_\theta,
t_\theta)$.
This assumption is without loss of generality, since given any menu of contracts
$\{(Y_\theta, t_\theta)\}_{\theta\in \Theta}$ with revenue $R$,
for every $\epsilon > 0$ it is possible to produce a menu $\{(Y'_\theta,
t'_\theta)\}_{\theta\in \Theta}$ with revenue $(1-\epsilon)\cdot R$ such that all
IR and IC inequalities hold strictly. We defer the formal proof of this fact to
Lemma \ref{lemma:strict-preferences}.

Our goal is, for any given context, to design the valid
menu of contracts with largest possible associated revenue. We call it the
optimal menu.

Before starting to optimize the menu, consider a couple of definitions: if
$Y_\theta$ is a variable
taking values in a space $\sspace_\theta$, then for each $s \in
\sspace_\theta$, the $Y_\theta$-posterior associated with $s$ is the
distribution $q \in \Delta(\Omega)$ such that $q(\omega) = \prob(\omega \vert
s)$. We define the value of buyer of type $\theta$ for posterior $q$ as:
$$v_\theta(q) = \textstyle\max_{a \in A} \textstyle\sum_\omega q(\omega)
u(\theta, \omega, a)$$
which is a piecewise-linear convex function $v_\theta : \R^\Omega_+
\rightarrow \R$. Usually $v_\theta(q)$ is defined for $q \in \Delta(\Omega)$,
but sometimes it is used for vectors $q \geq 0$ such that $\sum_\omega
q(\omega) \neq 1$. The reader should note, however that it is a homogeneous
function: $v_\theta(\lambda q) = \lambda v_\theta(q), \forall \lambda > 0$.

We next show that we can represent a signal by a distribution over a finite set
of posteriors (Observation~\ref{obs:feasibility-posteriors}) without repetition
(Observation~\ref{obs:no-duplicated-posteriors}). The proofs of the
observations are immediate, but we include in Appendix \ref{sec:ap-independent}
for completeness.
\comment{
We first show that we can assume w.l.o.g.\ that for two elements $s, s' \in
\sspace_\theta$ the $Y_\theta$-posteriors associated with them are different.
This is encapsulated in the following observation:}

\begin{observation}\label{obs:no-duplicated-posteriors}
 Given a menu $\{(Y_\theta, t_\theta)\}_\theta$, if there are $s,s' \in
\sspace_\theta$ such that the $Y_\theta$-posteriors associated with them are
equal, consider the menu obtained by substituting $Y_\theta$ by
$\tilde{Y}_\theta$, where: $\tilde{Y}_\theta =  s$ for $Y_\theta =
s'$ and $\tilde{Y}_\theta =  Y_\theta$ otherwise.
The obtained menu is also valid and the revenue associated with it is the same
as the one of the original menu.
\end{observation}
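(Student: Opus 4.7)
My plan is to observe that both the revenue and every IR/IC constraint of the menu depend on $Y_\theta$ \emph{only through the joint distribution it induces over posteriors}, not through the labels of the signals themselves. Since $s$ and $s'$ carry the same $Y_\theta$-posterior, replacing $Y_\theta$ by $\tilde Y_\theta$ merely relabels a probability mass from the signal $s'$ to the signal $s$, leaving this distribution over posteriors unchanged, which is what I would need to push through.

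Concretely, I would begin by rewriting, for any reporting type $\theta'$ and any true type $\theta''$, the quantity
\[
\E_{Y_{\theta'}}\bigl[\textstyle\max_a\E_\omega[u(\theta'',\omega,a)\mid Y_{\theta'}]\bigr]
\;=\;\E_{q\sim D_{\theta'}}\bigl[v_{\theta''}(q)\bigr],
\]
where $D_{\theta'}$ is the distribution over posteriors $q\in\Delta(\Omega)$ induced by sampling $\omega\sim p$ and $Y_{\theta'}\sim\psi^\omega_{\theta'}$ and then applying Bayes' rule. In the independent case the posterior $q$ depends only on the signal observed and not on $\theta''$, so this rewriting is valid uniformly across buyer types. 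I would then verify the key combinatorial identity: for every posterior $q^\star\in\Delta(\Omega)$,
\[
\Pr[\text{posterior under }\tilde Y_\theta=q^\star]
=\sum_{\substack{r\in\sspace_\theta,\ r\ne s'}}\mathbf 1\!\left[\text{posterior}(r)=q^\star\right]\Pr[Y_\theta=r]
+ \mathbf 1\!\left[\text{posterior}(s)=q^\star\right]\Pr[Y_\theta=s'],
\]
which, using $\text{posterior}(s)=\text{posterior}(s')$, equals $\Pr[\text{posterior under }Y_\theta=q^\star]$. Hence $\tilde Y_\theta$ and $Y_\theta$ induce identical distributions $D_\theta$ over posteriors.

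Having established this, every term of the form $\E_{Y_\theta}[\max_a\E_\omega[u(\theta'',\omega,a)\mid Y_\theta]]$ appearing in $IR_\theta$ or in an $IC_{\theta'',\theta}$ constraint is unchanged; the terms corresponding to contracts other than the one being modified are unaffected by construction. Therefore every inequality in Definition~\ref{dfn:valid_contracts} continues to hold, and the modified menu remains valid. Finally, the revenue $\sum_\theta \mu(\theta)\cdot t_\theta$ is preserved trivially since the transfers $t_\theta$ are untouched. There is essentially no obstacle here; the only small care needed is the independence-case remark that the posterior on $\omega$ is computed the same way regardless of the buyer's true type $\theta''$, which is precisely what lets the equality $\E_{Y_\theta}[\max_a\E_\omega[u(\theta'',\omega,a)\mid Y_\theta]]=\E_{q\sim D_\theta}[v_{\theta''}(q)]$ hold uniformly in $\theta''$.
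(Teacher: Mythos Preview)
Your proposal is correct and is essentially the same argument as the paper's own proof, which consists of the single sentence ``Simply notice that the utilities of each player for each contract in the original and new menu are the same.'' You have merely unpacked this observation in detail by making explicit that all IR/IC terms depend on $Y_\theta$ only through its induced distribution over posteriors, which is unchanged by the relabeling.
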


Given that the posteriors associated with each signal $s \in \sspace_\theta$
are different,
we can represent $Y_\theta$ by a distribution over a finite set of
posteriors,
i.e., a set $Q$ of posteriors, each $q\in Q$ being of the form $q \in \Delta(\Omega)$, and a probability
$x_\theta(q)$ of each posterior $q\in Q$. The condition that a distribution over
posteriors represents a random variable correlated with $\omega$ is in the
following observation, whose proof is immediate.


\comment{
\begin{observation}\label{obs:feasibility-posteriors}
Fix $\theta\in \Theta$. Let $Y_{\theta}$ be a random variable correlated with $\omega$ with support $\sspace_\theta= [k]$ of size $k$.
$Y_{\theta}$ is represented by $\psi^\omega_\theta \in \Delta(\sspace_\theta)$ and it is sampled by sampling $\psi^\omega_\theta$.
Consider a set $Q = \{q_1, \hdots, q_k\}$ such that $q_i\in \Delta(\Omega)$ for all $i$,
and $x_{\theta} \in \Delta(Q)$.
The pair $(Q,x_{\theta})$ represent a distribution over posteriors of a random variable
$Y_{\theta}$ correlated with $\omega$ if and only if:
$$
\sum_{q_i\in Q} x_{\theta}(q_i) \cdot q_i(\omega) = \mu(\omega| \theta)\ \ \ \  \forall \omega\in \Omega,
$$
In the case that $\theta$ and $\omega$ are independent this is equivalent to
$$
\sum_{q_i\in Q} x_{\theta}(q_i) \cdot q_i(\omega) = \mu(\omega)\ \ \ \  \forall \omega\in \Omega, \qquad (F_\theta)
$$
\end{observation}
}

\begin{observation}\label{obs:feasibility-posteriors}
Let $Y$ be a random variable correlated with $\omega$, i.e., given $\psi^\omega
\in \Delta(\sspace)$, $\sspace=[k]$, $Y$ is sampled by first sampling $\omega \sim
\mu$ and then
sampling from $\psi^\omega$. Consider a set $Q = \{q_1, \hdots, q_k\} \subset
\Delta(\Omega)$ and $x \in \Delta(Q)$.
The pair $(Q,x)$ represent a distribution over posteriors of a random variable
$Y$ correlated with $\omega$ if and only if:
$$\textstyle\sum_{q_i\in Q} x(q_i) \cdot q_i(\omega) = \mu(\omega)\ \ \ \
\forall \omega\in \Omega, \qquad (F)
$$\end{observation}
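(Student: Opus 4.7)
The plan is to establish both directions by direct Bayesian calculation; the condition $(F)$ is simply the classical Bayes-plausibility constraint, requiring that the $\omega$-marginal be preserved when the prior is split into a convex combination of posteriors.

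For the forward direction, suppose $(Q,x)$ arises from some scheme $\{\psi^\omega\}_{\omega\in\Omega}$, so that $x(q_i) = \prob[Y=s_i]$ and $q_i(\omega)=\prob[\omega\mid Y=s_i]$. Then the law of total probability gives
$$\sum_{q_i \in Q} x(q_i)\cdot q_i(\omega) \;=\; \sum_{i} \prob[Y=s_i]\,\prob[\omega\mid Y=s_i] \;=\; \prob[\omega] \;=\; \mu(\omega),$$
which is exactly $(F)$.

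For the converse, given $(Q,x)$ satisfying $(F)$, I would define
$$\psi^\omega(s_i) \;:=\; \frac{q_i(\omega)\, x(q_i)}{\mu(\omega)}$$
for every $\omega$ with $\mu(\omega)>0$ (states with $\mu(\omega)=0$ are irrelevant and may be discarded). Nonnegativity is immediate, and condition $(F)$ guarantees $\sum_{i}\psi^\omega(s_i)=1$, so $\psi^\omega\in\Delta(\sspace)$. A one-line Bayes calculation then verifies the posteriors: first $\prob[Y=s_i]=\sum_\omega \mu(\omega)\psi^\omega(s_i)=x(q_i)\sum_\omega q_i(\omega)=x(q_i)$ (using $q_i\in\Delta(\Omega)$), and therefore $\prob[\omega\mid Y=s_i]=\mu(\omega)\psi^\omega(s_i)/\prob[Y=s_i]=q_i(\omega)$, which recovers the desired distribution over posteriors.

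The only obstacles here are bookkeeping for degenerate cases: zero-probability states $\omega$ or signals $s_i$ with $x(q_i)=0$ can simply be deleted from their respective supports, and then nonnegativity of $\psi^\omega$ is automatic since each of $q_i(\omega), x(q_i), \mu(\omega)$ is nonnegative. With these dispatched, the argument reduces to two one-line applications of Bayes' theorem, which is why the excerpt asserts the proof is ``immediate.''
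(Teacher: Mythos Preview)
Your proof is correct and follows essentially the same approach as the paper: the forward direction is the same one-line application of the law of total probability, and for the converse the paper simply defines the joint distribution $\prob(\omega,Y=i)=x(q_i)\cdot q_i(\omega)$, which is equivalent to your explicit construction of $\psi^\omega$ as the resulting conditional. Your version is slightly more detailed in verifying that $\psi^\omega$ is a valid distribution and that the posteriors come out right, but the content is identical.
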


From this point on, we represent each $Y_\theta$ as a function $x_\theta :
\Delta(\Omega) \rightarrow [0,1]$ with finite support, satisfying
Equation~$(F)$ for each $\theta$.
$$
\textstyle\sum_{q_i\in Q} x_\theta(q_i) \cdot q_i(\omega) = \mu(\omega)\ \ \ \
\forall
\omega\in \Omega, \qquad (F_\theta)
$$
For any finite set of posteriors $Q \subset \Delta(\Omega)$, we can
formulate a restricted revenue maximization problem --- for mechanisms that
offer a menu of contracts with posteriors restricted to belong to $Q$ ---
as $\mathbf{LP_1}$ below
with variables $x_\theta(q)$ for each $\theta \in \Theta$, $q \in Q$
and $t_\theta$ for each $\theta \in \Theta$. Recall that $p \in \Delta(\Omega)$
is the prior on $\omega$, i.e., $p(\omega) = \mu(\omega)$.

$$\begin{aligned} \mathbf{LP_1:} \quad
   & \max \textstyle\sum_\theta \mu(\theta) t_\theta  \text{ s.t. }\\
   & \quad \begin{aligned}
   & \textstyle\sum_q v_\theta(q) x_\theta(q) - t_\theta \geq v_\theta(p),
&\forall
\theta, &\quad (IR_\theta)\\
   & \textstyle\sum_q v_\theta(q) x_\theta(q) - t_\theta \geq \textstyle\sum_q
v_\theta(q)
x_{\theta'}(q) - t_{\theta'}, &\forall
\theta' \neq \theta, &\quad (IC_{\theta,\theta'}) \\
   & \textstyle\sum_{q} x_{\theta}(q) \cdot q(\omega) = \mu(\omega)  &\forall
\theta,
\omega, &\quad (F_{\theta})\\
   & x_\theta(q), t_\theta \geq 0, &\forall \theta, q
  \end{aligned} \end{aligned} $$

The constraints in $\mathbf{LP_1}$ correspond to the characterization of valid
contracts in Definition~\ref{dfn:valid_contracts} and the feasibility of the
representation of contracts as distributions over posteriors in
Observation~\ref{obs:feasibility-posteriors}.

To work with a linear program of finite size,
we have restricted $q$ to belong to a finite set $Q$.
It is conceptually easier
to think of $q$ as ranging over the entire set $\Delta(\Omega)$, in
which case $\mathbf{LP_1}$ would represent the revenue maximization
problem in full generality.  The following lemma, whose proof
is in the appendix, shows that the
restriction to a finite set of posteriors is without loss of
generality.  There is a finite set $Q^*$ that can be precomputed
from knowledge of the function $u$ alone, such that solving
$\mathbf{LP_1}$ with $Q=Q^*$ is guaranteed to produce
an optimal menu of contracts.

\begin{lemma}[Interesting Posteriors]\label{lemma:interesting-posteriors}
 Given $u:\Theta \times \Omega \times A \rightarrow \R$, there is a finite set
$Q^* \subset \Delta(\Omega)$ such that for all $\mu$, the
maximum revenue that can be extracted by any
protocol can also be extracted by a protocol that is limited
to use posteriors in $Q^*$.
Moreover, all the
elements $q \in Q^*$ can be represented with polynomially many bits.
\end{lemma}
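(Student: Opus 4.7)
The plan is to use the piecewise-linear structure of $v_\theta$ to reduce the uncountable set $\Delta(\Omega)$ of potential posteriors to a finite set of vertices of a suitable hyperplane arrangement determined by $u$ alone.

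First, I would define the relevant arrangement. For each $\theta \in \Theta$ and each ordered pair of actions $a, a' \in A$, consider the hyperplane $H_{\theta,a,a'} = \{q \in \R^\Omega : \sum_\omega q(\omega)\bigl(u(\theta,\omega,a)-u(\theta,\omega,a')\bigr)=0\}$. Together with the simplex facets $\{q(\omega)=0\}$ and the equality $\sum_\omega q(\omega)=1$, these hyperplanes subdivide $\Delta(\Omega)$ into finitely many closed cells $C_1,\dots,C_N$. By construction, within any single cell $C_j$, the $\argmax_a\sum_\omega q(\omega)u(\theta,\omega,a)$ is constant for every $\theta$, so each $v_\theta$ restricted to $C_j$ is a linear function of $q$. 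Let $Q^*$ be the (finite) union of the vertex sets of $C_1,\dots,C_N$. Crucially, $Q^*$ depends only on $u$, not on $\mu$.

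Next, I would argue that restricting to $Q^*$ loses no revenue. Take any feasible menu for $\mathbf{LP_1}$ with support in $\Delta(\Omega)$, and consider any posterior $q$ used by some contract $Y_\theta$. Let $C_j$ be a cell containing $q$; since $C_j$ is a bounded polytope, we may write $q = \sum_{v \in V(C_j)} \lambda_v\, v$ as a convex combination of its vertices. Now replace the single ``atom'' at $q$ in $x_\theta$ by the collection of atoms at the vertices $v$ with weights $\lambda_v \cdot x_\theta(q)$. The feasibility constraint $(F_\theta)$ is preserved because $\sum_v \lambda_v v(\omega) = q(\omega)$ for every $\omega$. The quantities $v_\theta(q)$ and $v_{\theta'}(q)$ appearing in the $(IR_\theta)$ and $(IC_{\theta,\theta'})$ constraints are likewise preserved, because on $C_j$ every $v_\theta$ is linear and so $v_\theta(q) = \sum_v \lambda_v v_\theta(v)$. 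Since the objective depends only on the $t_\theta$ (which we leave untouched), revenue is unchanged. Iterating over all posteriors in the support yields an equivalent menu supported on $Q^*$.

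Finally, I would bound the bit complexity of elements of $Q^*$. Each vertex of a cell is the unique solution of a system of $|\Omega|$ linearly independent equations drawn from $\{H_{\theta,a,a'}\}$, $\{q(\omega)=0\}$, and $\{\sum_\omega q(\omega)=1\}$. The coefficients of these equations are either $0$, $1$, or differences $u(\theta,\omega,a)-u(\theta,\omega,a')$, each of which fits in the bit length of the description of $u$. By Cramer's rule, each coordinate of such a vertex is a ratio of two $|\Omega|\times|\Omega|$ determinants of matrices with entries of bounded bit length, so a bound of the form $O(|\Omega|\log|\Omega| + |\Omega|\cdot L)$ bits suffices, where $L$ is the bit length of the largest $u$-value. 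This establishes polynomial representability.

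The main obstacle, which the linearity argument is designed to finesse, is that an arbitrary optimal mechanism might use posteriors spread over an open region of $\Delta(\Omega)$; the key insight is that within a cell of this arrangement all the relevant functionals ($v_\theta$ and the marginal over $\omega$) are linear, so an arbitrary posterior can be ``split'' into vertex posteriors without affecting any IR, IC, feasibility, or revenue quantity in $\mathbf{LP_1}$.
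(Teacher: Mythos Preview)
Your proof is correct and follows essentially the same approach as the paper: build the common refinement of the linearity regions of the functions $v_\theta$ (equivalently, the cells of the hyperplane arrangement you describe), take $Q^*$ to be the vertices of those cells, and split each posterior into a convex combination of cell vertices to preserve all $(F_\theta)$, $(IR_\theta)$, and $(IC_{\theta,\theta'})$ constraints without changing revenue. Your explicit Cramer's-rule argument for polynomial bit complexity is a welcome addition that the paper only asserts.
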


By passing to the dual of $\mathbf{LP_1}$,
we get an LP with $O(\abs{\Theta}^2)$ variables.
Below, we give a separation oracle for the dual
showing that it can be solved in polynomial time. The variables of the dual are
$g_\theta, h_{\theta,\theta'} \in \R$ and $y_\theta \in \R^{\Omega}$:

$$\begin{aligned} \mathbf{DLP_1:} \quad
 & \min \textstyle\sum_\theta p^t y_\theta   \text{ s.t. }\\
   &  \begin{aligned}
   & -v_\theta(q) g_\theta + \textstyle\sum_{\theta' \neq \theta} [
v_{\theta'}(q)
h_{\theta',\theta} - v_\theta(q) h_{\theta,\theta'} ] + q^t y_\theta \geq 0, &
\forall \theta, q\\
   & g_\theta + \textstyle\sum_{\theta'\neq\theta} [h_{\theta,\theta'} -
h_{\theta',\theta}] \geq \mu(\theta), & \forall \theta\\
   & g_\theta, h_{\theta, \theta'} \geq 0, y_\theta \in \R^{\Omega}, &\forall
\theta, \theta'
\end{aligned} \end{aligned}$$

\noindent \textbf{Separation oracle}: The second family of constraints is of
size $\abs{\Theta}$, so separating it is trivial. In order to separate the first
family we re-write the constraints in a different way. Notice that the
$v_\theta(\cdot)$ is the maximum over $\abs{A}$ linear functions. So, we can
substitute each constraint of the first family for the following $\abs{A}$
constraints:
$$ \textstyle\sum_{\theta' \neq \theta}  v_{\theta'}(q) h_{\theta',\theta} \geq
-q^t y_\theta +
\left( g_\theta + \textstyle\sum_{\theta' \neq \theta} h_{\theta,\theta'}
\right) \textstyle\sum_\omega u(\theta,\omega, a) q(\omega)$$
for all $a\in A$, $\theta \in \Theta$ and $q \in Q^*$. Now, for fixed
$a, \theta$ we want to check if this constraint is satisfied by all $q \in
Q^*$ or find one $q$ for which the constraint is violated.

Relaxing the requirement $q \in Q^*$ to $q \in \Delta(\Omega)$,
this is equivalent to solving the following convex programming problem:
$$\min_{q \in \Delta(\Omega)}  \textstyle\sum_{\theta' \neq \theta}
v_{\theta'}(q) h_{\theta',\theta} +q^t
y_\theta -
\left( g_\theta + \textstyle\sum_{\theta' \neq \theta} h_{\theta,\theta'}
\right)
\textstyle\sum_\omega u(\theta,\omega, a) q(\omega) $$

\begin{lemma}\label{lemma:convex}
The convex programming problem above can be solved exactly in polynomial time.
Given an optimal solution $q \in \Delta(\Omega)$, there must exist another
optimal solution $q^*$ that belongs to $Q^*$, and we can find such a $q^*$
in polynomial time.
\end{lemma}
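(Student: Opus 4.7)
The plan is to establish the two claims in sequence. First, I recast the minimization as a linear program by exploiting the structure $v_{\theta'}(q) = \max_{a' \in A} \sum_\omega u(\theta', \omega, a') q(\omega)$: since every $h_{\theta',\theta} \ge 0$ and the $v_{\theta'}$ terms enter the objective with a positive sign, I introduce an auxiliary variable $z_{\theta'}$ for each $\theta' \neq \theta$, together with the constraints $z_{\theta'} \geq \sum_\omega u(\theta',\omega,a') q(\omega)$ for every $a' \in A$, and then minimize
\[
\textstyle\sum_{\theta'\neq\theta} h_{\theta',\theta}\, z_{\theta'} + q^t y_\theta - \bigl(g_\theta + \sum_{\theta'\neq\theta} h_{\theta,\theta'}\bigr)\sum_\omega u(\theta,\omega,a) q(\omega)
\]
over $q \in \Delta(\Omega)$ and $z$ unrestricted. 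At the optimum each $z_{\theta'}$ equals $v_{\theta'}(q)$, so this LP shares the optimal value of the original convex program. Its size is polynomial in $\abs{A},\abs{\Theta},\abs{\Omega}$ and the bit-complexity of $u$, hence it is solvable exactly in polynomial time by any standard LP algorithm.

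For the second claim, I exploit the fact that an LP admits a vertex-optimal solution. Starting from any optimal $(q,z)$ returned by the solver, a polynomial number of pivot steps yields a vertex-optimal $(q^*,z^*)$. At such a vertex, $\abs{\Omega}+\abs{\Theta}-1$ linearly independent constraints must be tight; after eliminating the $z_{\theta'}$ variables (each of which is pinned by at least one tight auxiliary inequality of the form $z_{\theta'} = \sum_\omega u(\theta',\omega,a') q^*(\omega)$), the conditions remaining on $q^*$ reduce to $\sum_\omega q^*(\omega) = 1$ together with $\abs{\Omega}-1$ equations of the form $q^*(\omega) = 0$ or $\sum_\omega [u(\theta',\omega,a') - u(\theta',\omega,a'')] q^*(\omega) = 0$ (the latter recording that two actions tie in the argmax defining $v_{\theta'}(q^*)$). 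Any posterior satisfying such a system is identified as an ``interesting posterior'' in the construction underlying Lemma~\ref{lemma:interesting-posteriors}, and therefore lies in $Q^*$, with polynomial bit-complexity as asserted there.

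The main obstacle will be verifying that the vertex-defining systems described above coincide with the ones enumerated in the construction of $Q^*$ from Lemma~\ref{lemma:interesting-posteriors}: both families arise from the same collection of hyperplanes (coordinate vanishings and action-tie hyperplanes), but one must check that every maximal linearly independent subsystem of these hyperplanes, intersected with the simplex, yields an element of $Q^*$. Once that combinatorial correspondence is pinned down, producing $q^*$ in polynomial time amounts to running an off-the-shelf LP solver on the reformulation and then pivoting to a vertex of the optimal face, both of which are standard polynomial-time routines.
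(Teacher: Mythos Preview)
Your approach is correct and genuinely different from the paper's. The paper solves the convex program directly via the standard machinery for piecewise-linear convex minimization: it argues that the objective value and a subgradient can be computed at any $q$, and it establishes polynomial bit-complexity of the optimum by appealing back to the finiteness of $Q^*$. To locate an optimal $q^* \in Q^*$, the paper takes any optimal $q$, fixes for each $\theta' \in \Theta$ an action $a(\theta')$ that is optimal at $q$, observes that on the polytope $\mathcal{P} \subseteq \Delta(\Omega)$ where these actions remain optimal the objective is linear, and then moves to an extreme point of $\mathcal{P}$; the extreme points of $\mathcal{P}$ are vertices of the common-refinement partition and hence lie in $Q^*$.

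Your LP lift with the auxiliary variables $z_{\theta'}$ is more direct: it bypasses the subgradient machinery and the separate bit-complexity argument entirely, since an LP of polynomial size automatically has a polynomial-bit vertex optimum. The trade-off is that you must then verify the combinatorial claim you flagged, namely that the $q$-component of any vertex of your lifted polyhedron lies in $Q^*$. This does go through: at a vertex each $z_{\theta'}$ is pinned by at least one tight inequality (the feasible region contains no line, so vertices exist), and after eliminating the $z_{\theta'}$'s you are left with $\abs{\Omega}$ linearly independent equalities on $q$ drawn from the simplex constraint, the coordinate hyperplanes, and action-tie hyperplanes for the various $\theta' \neq \theta$. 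These hyperplanes are a subset of those defining the partition in Lemma~\ref{lemma:interesting-posteriors}, so $q^*$ is a $0$-dimensional face of that arrangement and hence a vertex of some region, i.e., an element of $Q^*$. Note that your LP omits the action-tie hyperplanes for the fixed $\theta$ itself (since $a$ is already chosen there), but this is harmless: being determined by a linearly independent subsystem of the full hyperplane family is all that is needed to land in $Q^*$.
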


The proof (given in the appendix) assumes
that the set $A$ is finite and polynomially
bounded. But even if $A$ has exponential size,
if one is able to solve the problem $\max_a \E_{\omega \sim q} u(\omega, \theta,
a)$ for any posterior $q$, we can still solve the problem.
\comment{
\note{\textbf{RDK}: Unsure what the following sentence means.}
If $A$ is infinite,
corresponding to a general convex function $v_\theta(\cdot)$, then we also can
still prove if the conditions in the lemma hold.}

\xhdr{Comparison with Sealed Envelope}
The sealed envelope mechanism presented in Section
\ref{subsec:Sealed-Envelope} can extract at least $1/\abs{\Theta}$ revenue of
the optimal mechanism, by the following simple observation: if a there is a
voluntary protocol that extracts revenue $R$ from a certain context $(u, \mu)$,
then
there is at least one $\theta$ for which $\frac{R}{\abs{\Theta}} \leq
\mu(\theta) \cdot \buyersurplus(\theta)$, where
$\buyersurplus(\theta) =
\E_\omega[\max_a u(\theta, \omega, a)] - \E_\omega[\max_a u(\theta, \omega,
a)]$ is the maximum surplus that can be extracted from a buyer of type $\theta$.
By setting the price of the envelope to $t = \buyersurplus(\theta)   -
\epsilon$ for some tiny $\epsilon > 0$, the mechanism guarantees revenue at
least $\mu(\theta) \cdot \buyersurplus(\theta)$.
In Example \ref{example:contracts-vs-envelope} we
show this bound is tight by presenting a context where the
sealed envelope mechanism can not extract more then
$\Omega(\frac{1}{\abs{\Theta}})$ of the
optimal mechanism.

\xhdr{Protocols with Small Trees}
We just showed how to compute the revenue optimal protocol in polynomial time
when $\theta$ and $\omega$ are independent. We know that the protocol has
polynomial size, where its size is measured by the number of nodes in the
tree representing the protocol.
Here, we make it more explicit and show that there is a
protocol of size $O(\abs{\Omega} \cdot \abs{\Theta} + \abs{\Theta}^2)$. We show
that by bounding the sum of the support size of the random variables in the
menu of contracts.
This bounds the number of leaves of the One-round Revelation Mechanism
(Theorem~\ref{thm:rev-principle-independent}).
As the number of nodes in the tree representing this protocol is at most 3 times the number of leaves (twice the number of leaves plus one node for the buyer and $\abs{\Theta}$ seller nodes), the bound on the size of the tree follows from the bound on the number of leaves.
The proof of the next theorem is in Appendix~\ref{sec:ap-independent}.

\begin{theorem}\label{thm:quadratic support}
Let $n = \abs{\Theta}$ and $m = \abs{\Omega}$. Denote the
support of a vector $x$ by $\Vert x \Vert_0$.
The program $\mathbf{LP_1}$ has a solution where
$\Vert x_\theta \Vert_0 \leq m+n-1$ for all $\theta$, and
$\sum_\theta \Vert x_\theta\Vert_0 \leq mn + \binom{n}{2}$.
Moreover, there are settings in which support of size
quadratic in $n$ is necessary, even when $m=2$.
\end{theorem}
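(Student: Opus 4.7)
For the per-type upper bound $\Vert x_\theta\Vert_0 \leq m+n-1$, I start from any optimal solution $(\bar{x},\bar{t})$ of $\mathbf{LP_1}$ and, for each $\theta$ in turn, re-solve the LP while varying only $x_\theta$ and freezing every other variable. The constraints then touching $x_\theta$ collapse to: the $m$ equalities $(F_\theta)$; a single lower bound $\sum_q v_\theta(q)x_\theta(q)\geq c_\theta$ that simultaneously captures $(IR_\theta)$ and every $(IC_{\theta,\theta'})$, since all these inequalities share the identical linear functional of $x_\theta$ on the left-hand side and differ only in their right-hand sides; and $n-1$ upper bounds $\sum_q v_{\theta'}(q)x_\theta(q)\leq d_{\theta'}$ coming from $(IC_{\theta',\theta})$. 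Maximize $\sum_q v_\theta(q)x_\theta(q)$ over this polyhedron; the maximum is attained because $x_\theta$ is a probability vector, and the lower-bound constraint is automatically preserved by the maximization. A basic optimal $x_\theta$ then has support bounded by the number of linearly independent active constraints, at most the $m$ equalities plus the $n-1$ upper bounds, yielding $\Vert x_\theta\Vert_0 \leq m+n-1$. Replacing $\bar{x}_\theta$ by this basic solution keeps the revenue unchanged and leaves every other constraint of $\mathbf{LP_1}$ intact.

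For the global bound $\sum_\theta\Vert x_\theta\Vert_0\leq mn+\binom{n}{2}$ I iterate the per-type reduction while simultaneously controlling the number of tight $IC$ constraints. The key claim is that at an optimum one may assume, for each unordered pair $\{\theta,\theta'\}$, that at most one of $(IC_{\theta,\theta'})$ and $(IC_{\theta',\theta})$ is tight. If both bind, subtracting the two equalities forces $\sum_q(v_\theta(q)-v_{\theta'}(q))(x_\theta(q)-x_{\theta'}(q))=0$, and generically one can find a direction $d$ lying in the kernel of the $(F_\theta)$-equations with $\sum_q v_{\theta'}(q)d(q)=0$ and $\sum_q v_\theta(q)d(q)\neq 0$; the perturbation $x_\theta\mapsto x_\theta+\epsilon d$, $x_{\theta'}\mapsto x_{\theta'}-\epsilon d$, with $\epsilon$ of appropriate sign and sufficiently small magnitude, strictly slackens one of the two $IC$s without affecting any other constraint or the revenue. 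After breaking every such two-cycle, the tight-$IC$ graph is an oriented simple graph on $n$ vertices and has at most $\binom{n}{2}$ edges; since each tight $(IC_{\theta',\theta})$ contributes one to $k_\theta$ in the per-type step, $\sum_\theta k_\theta\leq\binom{n}{2}$, and combining with the per-type bound yields $\sum_\theta\Vert x_\theta\Vert_0\leq mn+\binom{n}{2}$. I expect the two-cycle breaking step to be the main technical obstacle, since handling degenerate cases (where the required direction $d$ fails to exist) needs either a small generic perturbation of $(u,\mu)$ and a limiting argument, or a direct structural claim that such degeneracies can be consolidated into identical menu entries.

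For the matching lower bound I plan to construct a context with $m=2$ in which every optimal menu requires $\Theta(n^2)$ total support. With $m=2$ each posterior reduces to a single number $q\in[0,1]$, and in the absence of $IC$ constraints each $x_\theta$ could be supported on just two points, so the quadratic blow-up must be driven entirely by $IC$. The construction takes $n$ types with piecewise-linear utility functions whose kinks lie at $n$ distinct posteriors forming a nested sequence, and tunes the prior $\mu$ (e.g.\ with geometrically decreasing mass near the kinks) so that the cheapest way for the menu to satisfy downward $IC$ for type $\theta_i$ is to give $\theta_i$ a signal whose support is exactly the first $i$ kink-posteriors; then $\Vert x_{\theta_i}\Vert_0=\Theta(i)$ and $\sum_i\Vert x_{\theta_i}\Vert_0=\Theta(n^2)$. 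The main obstacle here is to rigorously verify optimality of the proposed menu---specifically, ruling out any atom-consolidation across types by a careful exchange argument showing that every such merger must either violate an $IC$ inequality or strictly decrease revenue.
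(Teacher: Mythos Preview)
Your per-type bound is correct and matches the paper's observation that $(IR_\theta)$ and all $(IC_{\theta,\theta'})$ share the same linear functional of $x_\theta$, so they collapse into a single constraint.

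Your global bound, however, has a real gap. You first break two-cycles in the tight-$IC$ graph so that $\sum_\theta k_\theta\le\binom{n}{2}$, where $k_\theta$ is the in-degree of $\theta$, and then you want the per-type reduction to give $\Vert x_\theta\Vert_0\le m+k_\theta$. But in your per-type step you enforce \emph{all} $n-1$ upper bounds $(IC_{\theta',\theta})$ and then maximize $\sum_q v_\theta(q)x_\theta(q)$; the resulting basic optimum has support bounded by the number of constraints active \emph{at that new point}, and maximizing can drive previously slack $(IC_{\theta',\theta})$ to tightness. So the support bound you actually obtain is $m+n-1$, not $m+k_\theta$, and the connection to the global edge count is lost. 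If instead you drop the slack constraints before re-optimizing, the new $x_\theta$ can violate them. Your two-cycle breaking, besides the degeneracy issue you already flag, simply does not feed into the support count the way you need.

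The paper's argument avoids all of this with a single ordering idea. Sort the types so that $t_1\ge t_2\ge\cdots\ge t_n$. When processing type $\theta$, re-optimize $(x_\theta,t_\theta)$ subject only to $F_\theta$, the combined $(IR_\theta)+(IC_{\theta,\cdot})$ constraint, and the $\theta-1$ constraints $(IC_{\theta',\theta})$ for $\theta'<\theta$ (the higher-priced types). That is $m+\theta$ constraints, so a basic solution gives $\Vert x_\theta\Vert_0\le m+\theta-1$. If some $(IC_{\theta',\theta})$ with $\theta'>\theta$ is now violated, simply replace $(x_{\theta'},t_{\theta'})$ by the new $(x_\theta,t_\theta)$; this is feasible and, because $t_\theta\ge t_{\theta'}$, does not lower revenue. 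Summing $m+\theta-1$ over $\theta$ gives exactly $mn+\binom{n}{2}$. No tight-$IC$ graph, no genericity, no perturbation.

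Your lower-bound sketch is in the right direction (nested kinks, a prior making the lexicographic optimization exact, so that type $\theta$'s contract must carry all the earlier kinks), and the paper's Example~\ref{example:quadratic-dependence} does exactly this; but the verification that every optimum has the full support is a non-trivial triangular-system argument, and your proposal does not yet supply it.
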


The fact that the quadratic lower bound holds even
for $m=2$ (Example \ref{example:quadratic-dependence}) is somewhat
counterintuitive: even when the
information being sold is a \emph{single bit},
there are contexts in which revenue maximization
requires using signals that consist of $\Omega(\log n)$
bits.

\section{Correlated signals with committed buyers}
\label{sec:sec:correlated}

In Section~\ref{sec:independent} we have considered independent signals and observed that for that case the seller does not care if the buyer is committed or not. In this section we consider correlated signals and committed buyers, the case of correlated signals and uncommitted buyers will be discussed in Section~\ref{sec:uncommitted}. Throughout this section we assume that the buyer is committed. 

\subsection{Pricing Outcomes Mechanism}\label{sec:correlated}

In the previous section we showed that if $\theta,\omega$ are independent, then
the optimal protocol had the form of offering a menu of contracts with a
fixed price for each contract. In this section we show that this is {\em not}
sufficient to optimize the revenue whenever $\theta,\omega$ are correlated. In
order to optimize revenue, we need to add a twist: we still offer a menu of
options, each option having a random variable $Y_\theta$ correlated with
$\omega$ and taking values in $\sspace_\theta$. Instead of a fixed price,
however, we charge a specific price for each outcome $s \in \sspace_\theta$ of
the signaling scheme.  We continue to refer to the options on the menu
as contracts, although this means that the word \emph{contracts} has
a slightly different meaning in this section than in the preceding one
(as it is no longer the case that the buyer pays before observing any signal).

Why does this construction help in designing mechanisms to optimize revenue?
Suppose a seller designs a variable $Y$ taking values in $\sspace$. Consider
$\psi^\omega \in \Delta(\sspace)$ for each $\omega$ such that the seller
produces $s \in \sspace$ by sampling $s$ according to
$\psi^\omega$. If $\omega$ and $\theta$ were independent, the seller
would be always choosing the same distribution from the buyer's perspective,
which would be $\prob(s) = \sum_\omega \mu(\omega) \cdot \prob(\psi^\omega =
s)$. However, since $\theta$ and $\omega$ are correlated, different buyer-types
perceive different distributions over $\sspace$: a buyer of type $\theta$
perceives $\prob(s \vert \theta) = \sum_\omega \mu(\omega \vert \theta) \cdot
\prob(\psi^\omega = s)$. So, if we condition the prices on the outcomes $s$, two
different buyer-types see different prices for the same contract.  This
increases
our power of price-discrimination.
For the case of {\em committed} buyers we are able to show the existence of an
optimal One-Round Revelation Mechanism.

\begin{theorem}[Existence of a One-Round
Optimal Mechanism]\label{thm:rev-principle-correlated}
For any context $(u, \mu)$,
if it is possible to extract revenue $R$ from a {\em committed} buyer in this
context, then there is a Pricing Outcomes Mechanism that does so.
\end{theorem}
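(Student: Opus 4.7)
The plan is to start from the revelation principle (Theorem~\ref{thm:myerson_revelation}), which already gives us that revenue $R$ can be extracted by some revelation mechanism with a truthful committed strategy. In such a mechanism the root is the only buyer-node, and for each reported type $\theta'$ the corresponding subtree $T_{\theta'}$ consists solely of seller-nodes and transfer-nodes in some arbitrary arrangement. My goal is to collapse each $T_{\theta'}$ into a single seller-node followed by a single transfer-node, which is exactly the structure of a Pricing Outcomes Mechanism.

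First, I would fix a reported type $\theta'$ and analyze $T_{\theta'}$. Because there are no buyer-nodes inside $T_{\theta'}$, play through $T_{\theta'}$ is driven only by the seller's randomization (which depends on $\omega$ through the prescriptions $\psi^\omega_n$) and by the deterministic transfer nodes. Hence, marginalizing over the seller's internal randomness, there is a well-defined distribution $\psi^{\omega}_{\theta'} \in \Delta(\mathcal{S}_{\theta'})$ over the set of leaves $\mathcal{S}_{\theta'}$ of $T_{\theta'}$, together with a deterministic cumulative transfer $\tau_{\theta'}(\ell)$ along the path from the root to each leaf $\ell$. I would then define the candidate Pricing Outcomes Mechanism as follows: the buyer reports $\theta'$; the seller observes $\omega$ and sends the single message $\ell \in \mathcal{S}_{\theta'}$ drawn from $\psi^{\omega}_{\theta'}$; the buyer then transfers $\tau_{\theta'}(\ell)$. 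This is a depth-three tree in which all children of the root are seller-nodes and all their children are transfer-nodes, matching Definition~\ref{def:contracts-signals}.

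The key step is to verify that payoffs, incentives, and revenue are preserved. Since the buyer is committed, under any reporting strategy $\theta \mapsto \theta'$ he never defects in the original protocol, so his expected utility depends only on (i)~which leaf $\ell$ is reached, and (ii)~the accumulated transfer along the path to $\ell$. In a tree, the identity of the leaf is informationally equivalent to the full sequence of seller messages along its unique root-to-leaf path, so the buyer's posterior over $\omega$ given $\ell$ in the new mechanism coincides with the posterior induced at that leaf in the original protocol; consequently his best action and its expected payoff are identical in the two mechanisms. The cumulative transfer $\tau_{\theta'}(\ell)$ is unchanged as well, so for every pair $(\theta,\theta')$ the buyer's expected utility $U(\theta,\theta')$ is preserved. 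Therefore every $IR_\theta$ and $IC_{\theta,\theta'}$ inequality of the original mechanism carries over, truthful reporting remains optimal, and $\mathbb{E}[\tau(Z)]$ is the same, yielding revenue $R$.

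The main obstacle is the subtlety that in the new mechanism the buyer learns the signal $\ell$ \emph{before} paying $\tau_{\theta'}(\ell)$, whereas in the original protocol payments could be interleaved with signals. For a \emph{committed} buyer this is innocuous, because commitment means he executes the prescribed action at every transfer-node regardless of the information he has seen so far; so whether a transfer happens before or after the informational content is revealed does not affect incentives or feasibility. This is exactly the point at which the argument would fail for uncommitted buyers---an uncommitted buyer could accept the information and then refuse the subsequent payment---which is why the correlated uncommitted case, treated in Section~\ref{sec:uncommitted}, requires a genuinely multi-round mechanism rather than a Pricing Outcomes Mechanism.
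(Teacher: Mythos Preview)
Your proposal is correct and takes essentially the same approach as the paper. The paper's proof starts directly from an arbitrary protocol with an optimal committed strategy $\phi$ and in one step builds the depth-three tree (root buyer-node, then a seller-node in the $\theta$-branch sampling leaves according to $Z(\omega,\theta)$, then transfer-nodes with amount $\tau(\ell)$), whereas you first pass through Theorem~\ref{thm:myerson_revelation} to obtain a revelation mechanism and then collapse each subtree $T_{\theta'}$; but your $\psi^{\omega}_{\theta'}$ is exactly the paper's $Z(\omega,\theta')$, your $\tau_{\theta'}(\ell)$ is the paper's $\tau(\ell)$, and the incentive argument---that the joint distribution over leaves and transfers under any misreport $(\theta,\theta')$ is unchanged because transfers are now tied to leaves rather than averaged over $\omega$---is identical.
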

\comment{
\note{MOSHE: It is hard to understand the argument below as we did not explain
the idea of the proof of Theorem 3.1 in the body of the paper.} 
}
The proof, which is given in Appendix~\ref{sec:ap-correlated}, is
essentially the same as the proof of
Theorem~\ref{thm:rev-principle-independent},
except for the last step.  Previously this step relied
on the independence of $\theta$ and $\omega$; here, we
instead rely on the fact that seller nodes 
are 
situated
\emph{above} transfer nodes in our protocol, which eliminates
the need to estimate expected transfers over a random
execution of the original protocol, and instead lets us
match transfers pointwise.

It should be noted that the fact that buyers are committed to follow the
mechanism until the end is crucial.  In fact, in any protocol containing
a transfer node in which the buyer needs to pay the seller and whose child is a leaf,
the optimal uncommitted strategy
would be to defect at that transfer node.
In other words,
uncommitted buyers could acquire the information and leave without paying.
We mention one way to solve this problem: before the mechanism starts, ask a
large sum of money from the buyer. Run the mechanism and then gives the large
sum of money back to the buyer. This will guarantee that the buyer follows
the mechanism until the end, so as not to lose his initial deposit.
This will add an
extra level to the mechanism: one transfer node in the beginning to charge this large
sum of money. The rebate in the end can simply blend with the last transfer.
\comment{
This solution is sometimes implemented by data providers
on the web: charge a reimbursable fee for participating, but retain the
fee in case a buyer does not complete the transaction.}
\comment{
\note{MOSHE: we should be careful here. To motivate the study of uncommitted
buyers in Section 6 we need to state that this solution of rebate is not always
feasible or desirable.}}


\xhdr{Pricing Outcomes Mechanism}\label{subsec:princing-signals}
We will describe a Pricing Outcomes Mechanism using the following
notation.  The seller designs a menu
of contracts solely based on the context.
The menu is a collection $\{(Y_\theta,
t_\theta)\}_{\theta \in \Theta}$ where $Y_\theta$ is a random variable
correlated with $\omega$ and taking values in a finite set $\sspace_\theta$,
just like
in the independent case. The payment function, however, is a function $t_\theta
: \sspace_\theta \rightarrow \R$ and it allows for both positive and negative
payments. The seller outputs $s \sim Y_\theta, t_\theta(s)$, which is a signal
and a payment request of $t_\theta(s)$. If $t_\theta(s) < 0$, the seller
transfers $\abs{t_\theta(s)}$ to the buyer.

In the following, we
discuss how to find the menu maximizing
revenue using a convex program.
The derivation of this convex program closely parallels the
derivation of the corresponding convex program in the
independent case.  We give the full details
in Appendix~\ref{sec:ap-correlated}, and here
we limit ourselves to discussing the two most salient
differences between the derivation of the convex program
in the independent and correlated cases.
\begin{compactenum}[(1)]
\item  The posterior vector after receiving a given signal
depends upon the buyer's type.  To compare posteriors across
different types, we adopt a common ``frame of reference'' ---
that of an outside observer who observes the signal sent by
the seller but does not observe $\theta$ --- and we
translate from the type-$\theta$ reference frame to
the outside observer's reference frame using a matrix
$D_\theta$ that expresses Bayes' rule.
\item  Since payments are now associated with signals, the
obvious way of expressing the expected revenue is as a
sum of products, where each term is the probability of
sending a particular signal, $x_\theta(q)$, multiplied by the amount
charged in the event of sending it, $t_\theta(q)$.  Thus, our program
has a quadratic objective function if we treat both
the probability and the transfers as primal variables.
(This issue does not arise in Pricing Mappings,
since a buyer of type $\theta$ pays the same amount
regardless of what signal is sent, hence the
variables $x_\theta(q)$ do not appear in the objective function.)
To make the objective function linear when pricing outputs,
we define new variables
$\tilde{t}_\theta(q) = x_\theta(q) \cdot t_\theta(q)$.
Fortunately, this change of variables makes
the constraints linear as well.
\end{compactenum}


\xhdr{Full Surplus Extraction}
Correlation can be very valuable to the seller.  In fact, if the
distribution exhibits
sufficiently complex correlation,
the seller might be able to extract full surplus from
the buyers using a Pricing Outcomes Mechanism.
Given a context $(u,\mu)$ we
define the full surplus as the expected gain the buyer would get by learning
the value of $\omega$.  In other words, the full surplus is
$\E_{\theta} [ \buyersurplus(\theta) ] $,
where $\buyersurplus(\theta) = \E[\max_a u(\theta, \omega, a) \vert \theta] -
\max_a \E[u(\theta, \omega, a) \vert \theta]$.
Clearly no mechanism can extract more then the full surplus, and extracting a
$1/\abs{\Theta}$ fraction of it is trivial, even using a sealed envelope
mechanism, as was observed in Section~\ref{sec:independent}.
Now, we show that if $\mu$ is sufficiently correlated, then we can
extract the full surplus.
Our result leverages the ideas developed by \onlineversion{Cremer and
McLean~\cite{cremermclean85,cremermclean88}}{Cremer and
McLean~\citeyear{cremermclean85,cremermclean88}} in their work
on auctions with correlated bidders, although obviously the
setting in which we apply these ideas is different from theirs.

For a joint distribution $\mu$ over $\Omega \times \Theta$ we define
$\text{rank}(\mu)$ as the rank of the $\abs{\Omega} \times \abs{\Theta}$ matrix
defined by $\mu(\omega,\theta)$. For example, if $\omega$ and $\theta$ are
independent, then $\text{rank}(\mu)=1$.


\begin{theorem}\label{thm:full-surplus}
 If $\text{rank}(\mu) = \abs{\Theta}$ then the optimal Pricing Outcomes
Mechanism extracts the full surplus. Moreover, this can be done with a single
contract.
\end{theorem}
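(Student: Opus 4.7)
The plan is to construct an explicit Pricing Outcomes Mechanism with a \emph{single} fully-revealing contract $(Y, t)$, and to use the rank hypothesis to design the payment function so that every type's IR constraint holds with equality. I would take $Y$ to be the fully revealing signal: $\sspace = \Omega$ with $\psi^\omega$ the point mass on $\omega$, so that after observing the signal the buyer knows $\omega$ exactly. A type-$\theta$ buyer who participates then obtains utility $\E[\max_a u(\theta,\omega,a)\mid \theta] - \E[t(\omega)\mid \theta]$, while his outside option is $\max_a \E[u(\theta,\omega,a)\mid \theta]$. Hence IR collapses to the linear inequality $\sum_\omega \mu(\omega\mid\theta) t(\omega) \le \buyersurplus(\theta)$ for every $\theta$; since the contract is unique, no reporting decision is made and IC is automatic.

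The revenue of this mechanism is $\sum_\theta \mu(\theta)\sum_\omega \mu(\omega\mid\theta) t(\omega)$, and it matches the full surplus $\sum_\theta \mu(\theta)\buyersurplus(\theta)$ precisely when every IR inequality is tight. The problem therefore reduces to solving the linear system $M t = \barv$, where $M$ is the $\abs{\Theta}\times\abs{\Omega}$ matrix whose $\theta$-th row is $\mu(\cdot\mid\theta)$ and $\barv\in\R^{\abs{\Theta}}$ has entries $\buyersurplus(\theta)$. The hypothesis $\textup{rank}(\mu)=\abs{\Theta}$ is exactly what makes this system solvable: each row of $M$ is a nonzero rescaling of the corresponding column of $\mu$ (divide by $\mu(\theta) > 0$), so linear independence of the columns of $\mu$ transfers verbatim to linear independence of the rows of $M$. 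Thus $M$ has full row rank, the linear map $t \mapsto M t$ is surjective onto $\R^{\abs{\Theta}}$, and a solution $t\in\R^{\abs{\Omega}}$ exists.

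Picking any such $t$ completes the construction: every IR constraint holds with equality, the resulting revenue equals the full surplus $\E_\theta[\buyersurplus(\theta)]$, and this trivially matches the obvious upper bound on any mechanism's revenue. The main technical content is the one-line linear-algebra observation above, which is essentially the move used by Cremer and McLean in the correlated-bidder auction setting. What might initially look like the hard part --- that the entries of $t$ may be large or negative --- is in fact not an obstacle, because Pricing Outcomes Mechanisms explicitly allow arbitrary real payments. Any type $\theta$ with $\mu(\theta)=0$ contributes zero to the revenue and does not even enter the system $M t = \barv$, so it can be discarded without loss.
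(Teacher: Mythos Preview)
Your proof is correct and follows essentially the same route as the paper: construct the single fully-revealing contract, use the rank hypothesis to solve the linear system making every IR tight, and conclude that revenue equals full surplus. The paper writes the system as $\sum_\omega t(\omega)\,\mu(\omega,\theta)=\mu(\theta)\,\buyersurplus(\theta)$, which is your $Mt=\barv$ after dividing through by $\mu(\theta)$; your version simply spells out more of the linear-algebra bookkeeping. One small remark: your closing caveat about types with $\mu(\theta)=0$ is unnecessary, since a zero column in $[\mu(\omega,\theta)]$ would already force $\textup{rank}(\mu)<\abs{\Theta}$.
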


\begin{proof}
We define one single contract in the following way. Calculate $t \in \R^\Omega$
 such that:
$$ \textstyle{\sum}_{\omega \in \Omega} t(\omega) \cdot \mu(\omega,\theta) =
\mu(\theta) \cdot \buyersurplus(\theta), \forall \theta$$
Since $\text{rank}(\mu) = \abs{\Theta}$, this system is guaranteed to have a
feasible solution. Now, offer the following contract to all the buyers: the
seller reveals the value of $\omega$ and requests a payment of $t(\omega)$.
By the definition of $t$, each buyer is indifferent between
buying this contract or not buying anything, so the mechanism is voluntary.
\end{proof}

Notice that one can, in the manner of Lemma \ref{lemma:strict-preferences},
offer
the above contract with price $(1-\epsilon)\cdot t(\omega)$ for each outcome,
getting revenue arbitrarily close to the full surplus and making the players
strictly prefer to buy the contract.

At this point it is instructive to consider a concrete example for full surplus
extraction.
\begin{example}\label{example:surplus-correlated}
Imagine a box that has a locker on it and there are two keys labeled with $0$
and $1$, exactly one of which can open the box. The buyer can choose one key and
try
it. If he opens the box, he gets the object inside. Let the type of the
buyer $\theta$ encode his value $z_\theta$ for the object and some signal that
gives him a hint of which might be the right key. The seller knows exactly what
is the correct key, and let this be $\omega$. How should the seller sell the
information to the buyer?

Consider $\Omega = A = \{0,1\}$, $\Theta = [2]$ and the reward function as
$u(\theta, \omega, a) = z_\theta \cdot \one\{\omega = a\}$, with $z_1=3, z_2 =
5$. The joint distribution is $[\mu(\omega,\theta)] =
\left[ \begin{smallmatrix} 0.2 & 0.3 \\ 0.3 & 0.2 \end{smallmatrix} \right]$.

Before participating in the mechanism a buyer of type $\theta=1$ has interim
belief 
$(0.4, 0.6)$ on $\omega$, so his best action is to pick key $1$, getting
expected reward $0.6 \cdot 3 = 1.8$. If he is able to pick key $0$ whenever
$\omega=0$ and key $1$ whenever $\omega=1$, then he always get a reward of $3$,
so his
value for the information is $\buyersurplus(1) = 3-1.8 = 1.2$. Similarly
$\buyersurplus(2) = 2$. In order to design a contract that extracts full
surplus, find $t(\omega)$ such that:
$$
 0.4 \cdot t(0) + 0.6 \cdot t(1) = \buyersurplus(1) \qquad \text{and} \qquad
 0.6 \cdot t(0) + 0.4 \cdot t(1) = \buyersurplus(2)
$$
Solving this system, we get: $t_1(0) = 3.6, t_1(1)= -0.4$. This means that if
the seller reveals signal $\omega=0$, the buyer needs to pay $3.6$, if the
seller reveals $\omega=1$, the buyer receives $0.4$ from the seller.
Both buyer types see the full information contract being offered, but because of
the correlation of $\theta$ and $\omega$, they perceive its expected cost to be
different. Player of type $\theta$ perceives the expected cost to be
$\E[t(\omega) \vert \theta] = t(0) \mu(\omega=0 \vert \theta) + t(1)
\mu(\omega=1 \vert \theta).$
This feature of correlation gives the seller a great power to do price
discrimination.
\end{example}

\subsection{Continuity and Approximation}
\label{sec:continuity}
One might be tempted to conclude from Theorem \ref{thm:full-surplus} that for
any distribution $\mu$ such that $\abs{\Theta} \leq \abs{\Omega}$, one is able
to extract full surplus since all matrices $[\mu(\omega,\theta)]$ can be
approximated arbitrarily closely
by matrices of full rank. The flaw in this argument is
obvious. If one sees $\mu$ as a $\Theta \times \Omega$ matrix, and
$\buyersurplus
\in
\R^\Theta$ as a vector with the surplus $\buyersurplus(\theta)$ in component
$\theta$, the payment vector $t \in \R^\Omega$ in Theorem
\ref{thm:full-surplus} can be found by solving the linear system $\mu \cdot t =
\buyersurplus$. If $\mu$ is a perturbation of, say, a rank-one matrix
(corresponding to $\theta$ and $\omega$ being independent) then the
linear system is very ill-conditioned, and therefore the solution $t$ has a
very high norm. This causes $t$ to diverge as $\mu$ becomes closer to being
independent. To illustrate this point, let us revisit Example
\ref{example:surplus-correlated}.

\begin{example}
 Consider the same reward function as in Example
\ref{example:surplus-correlated} but with a different probability distribution:
$\Omega = A = \{0,1\}$, $\Theta = [2]$,
$u(\theta, \omega, a) = z_\theta \cdot \one\{\omega = a\}$, with $z_1=3, z_2 =
5$, and distribution $[\mu(\omega,\theta)] =
\left[ \begin{smallmatrix} 0.25 & 0.25 \\ 0.25 & 0.25 \end{smallmatrix}
\right]$.

The surplus of each buyer is given by $\buyersurplus(1) = 1.5, \buyersurplus(2)
= 2.5$.
Proceeding as described in Section \ref{sec:independent}, one finds
that the optimal mechanism is to offer one single contract that outputs
$\omega$ exactly and costs $1.5$. Clearly both types buy this contract and the
expected revenue is $1.5$.
Notice however that the expected surplus is $\frac{1}{2} \buyersurplus(1) +
\frac{1}{2} \buyersurplus(2) = 2$ and Theorem \ref{thm:full-surplus} guarantees
that for a slightly perturbed joint distribution one can extract it entirely.
For example, consider
$\mu = \left[ \begin{smallmatrix}
         0.25001 & 0.24999 \\
         0.24999 & 0.25001
        \end{smallmatrix} \right].$
Applying the proof of Theorem \ref{thm:full-surplus} we get the following
mechanism extracting revenue $2$: offer the contract that outputs the full
information and if the outcome is $\omega$, a player pays $t(\omega)$ where
$[t(0), t(1)] = [-24996, 25004]$.
\end{example}

This example highlights two problems with the optimal mechanism.
The most obvious is that it somehow abuses risk-neutrality. The optimal mechanism
produces very large payments which are balanced by large rebates. This
situation is clearly not desirable in practice.
The second problem is that the revenue
that can be extracted from a certain context might change abruptly whenever the
context changes slightly.

It turns out that these discontinuities in the optimal-revenue
function are only unidirectional: as one varies the context,
the revenue can abruptly decrease but it cannot abruptly increase.
Furthermore, for certain restricted classes of mechanisms, the
optimal revenue depends continuously on the context.  In particular,
this holds for the first three members of the
following sequence of progressively more general types of mechanisms.\footnote{
A protocol has {\em no positive transfers} if at any transfer node $n$ the seller never pays the buyer, i.e.\ $t(n) \geq 0$. We discuss Pricing Outcomes Mechanism with No Positive Transfers as this mechanism achieves the optimal revenue for committed buyers with no positive transfers.  (This follows from the proof of Theorem~\ref{thm:rev-principle-correlated}.)

Note that we use the term \emph{no positive transfers} as it is used in literature on auction theory, to exclude mechanisms in which the seller pays money to the buyer. We chose to use this standard term although in our setting such a payment is  \emph{represented} by a transfer node with \emph{negative} transfer $t(n) < 0$.}

{ \small
$$\left\{ \begin{array}{c} \text{Sealed} \\ \text{Envelope} \end{array} \right\}
\subseteq
\left\{ \begin{array}{c} \text{Pricing} \\ \text{Mappings} \end{array} \right\}
\subseteq
\left\{ \begin{array}{c} \text{Pricing Outcomes} \\ t\geq 0
\end{array} \right\}
\subseteq
\left\{ \begin{array}{c} \text{Pricing} \\ \text{Outcomes} \end{array}
\right\}$$
}
To formalize these continuity assertions, let us fix
$\Omega, \Theta,$ and $A$, and regard a context
$(u,\mu)$ as a point in the topological space
$\mathcal{C} = \R^{\Theta\times\Omega\times A} \times \Delta(\Omega \times
\Theta)$ equipped with its standard topology.  The
revenue of the optimal mechanism for committed buyers (or, equivalently,
the optimal Pricing Outcomes Mechanism) will be denoted
by $R(u,\mu)$.  Similarly, we use $R_e,R_c,R_p$ respectively to
denote the revenue of the optimal Sealed Envelope
Mechanism, Pricing Mappings Mechanism, or
Pricing Outcomes Mechanism with No Positive Transfers.
The following theorem formalizes the continuity assertions
claimed above.
Recall that a function $f:\R^k \rightarrow \R$ is {\em lower-semicontinuous}
if for all converging sequences $x_i \rightarrow x$, $f(x) \leq \liminf f(x_i)$.
\begin{theorem} \label{thm:continuity-summary}
The function $R : \mathcal{C} \to \R$ is lower-semicontinuous.
The functions $R_e,R_c : \mathcal{C} \to \R$ are continuous.
Let $\mathcal{C}^{\circ}$ be the set of contexts
$(u,\mu)$ such that $\mu(\omega,\theta) > 0$ for all $\omega,\theta$
(henceforth called \emph{non-degenerated contexts}), the
function $R_p : \mathcal{C}^{\circ} \to \R$ is continuous.
\end{theorem}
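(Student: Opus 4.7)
The approach is to express each revenue function as the optimal value of a mathematical program whose constraints and objective depend continuously (respectively, semicontinuously) on $(u,\mu)$, and then to argue continuity or semicontinuity of the optimal value. The central technical trick will be to rewrite all conditional expectations after multiplication by $\mu(\theta)$, using the identity $\mu(\theta)\Pr[Y_\theta = s \mid \theta] = \sum_\omega \mu(\omega,\theta)\psi^\omega_\theta(s)$, so that the buyer's expected utility and the IR/IC quantities become polynomial (hence continuous) in $\mu(\omega,\theta)$ and the mechanism parameters. This suppresses the division-by-zero issues that conditional probabilities introduce at degenerate contexts.

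Lower-semicontinuity of $R$ (and of $R_e, R_c, R_p$ as special cases): Given $(u,\mu)$ and $\epsilon > 0$, I would exhibit a mechanism $M$ of the appropriate class with revenue at least $R(u,\mu) - \epsilon$ and every IR/IC inequality satisfied \emph{strictly}; this follows from the Pricing Outcomes analog of Lemma~\ref{lemma:strict-preferences} by scaling payments by $(1-\eta)$ for small $\eta$. For a sequence $(u_i,\mu_i) \to (u,\mu)$, the constraints defining feasibility of the fixed mechanism $M$ vary continuously by the multiplication-through rewriting and are strict at the limit, so they hold for all large $i$; the revenue of $M$ at $(u_i,\mu_i)$ also converges. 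Hence $\liminf_i R(u_i,\mu_i) \geq R(u,\mu) - \epsilon$, and letting $\epsilon \downarrow 0$ proves the claim.

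Upper-semicontinuity of $R_e, R_c, R_p$ on their stated domains: I would parametrize each class as a compact subset of Euclidean space, invoking the support-size bounds from Lemma~\ref{lemma:interesting-posteriors} and (the correlated-case analog of) Theorem~\ref{thm:quadratic support} to restrict to mechanisms whose posteriors lie in a fixed finite set $Q^{*}$. Payments are then also bounded: for $R_e$, $t \leq \max_\theta \xi(\theta;u,\mu)$; for $R_c$, $t_\theta \leq \xi(\theta;u,\mu)$ by IR; for $R_p$, the reparametrization $\tilde{t}_\theta(q) = x_\theta(q) t_\theta(q)$ from the convex program, together with the no-positive-transfers constraint, gives $0 \leq \tilde{t}_\theta(q) \leq \xi(\theta;u,\mu)$. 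Taking optimal mechanisms $M_i$ for a sequence $(u_i,\mu_i) \to (u,\mu)$, passing to a convergent subsequence $M_i \to M^{*}$, and invoking continuity of the rewritten IR/IC constraints and revenue functional, I would conclude that $M^{*}$ is feasible at the limit and that $R_{*}(u,\mu) \geq R_{*}(M^{*};u,\mu) = \lim_i R_{*}(u_i,\mu_i)$.

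The main obstacle will be explaining why $R_p$ requires non-degeneracy while $R_c$ does not, even though both involve menus of signal-revealing contracts. In Pricing Mappings the payment $t_\theta$ is signal-independent, so the IR/IC constraints can be written purely in terms of $\sum_\omega \mu(\omega,\theta)\psi^\omega_\theta(s) u(\theta,\omega,a)$ and a term $\mu(\theta) t_\theta$, both continuous in $\mu(\omega,\theta)$ on all of $\mathcal{C}$ (zero-measure types can be handled by assigning them a ``safe'' default contract with $t_\theta=0$). In Pricing Outcomes the signal-specific $t_\theta(s)$ couples with the type-$\theta$ posterior at $s$ and, after inverting the change of variables $\tilde{t}_\theta(q) \mapsto t_\theta(q)$, produces ratios that are only jointly continuous when $\mu(\omega,\theta) > 0$ uniformly, i.e., on $\mathcal{C}^{\circ}$; this same ratio is precisely the mechanism by which the unrestricted optimum $R$ achieves its discontinuous jumps, as the example preceding the theorem illustrates, explaining why upper-semicontinuity must fail for $R$ itself.
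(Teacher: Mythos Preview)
Your overall architecture matches the paper's exactly: lower-semicontinuity via the strict-preferences lemma (scaling payments by $1-\eta$), and upper-semicontinuity via a compactness argument on the space of optimal mechanisms. Two technical points deserve correction, however.

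First, your compact representation for the upper-semicontinuity step is not quite right. You propose restricting posteriors to ``a fixed finite set $Q^{*}$,'' but $Q^{*}$ (from Lemma~\ref{lemma:interesting-posteriors} or its correlated analogue) depends on the context $(u_i,\mu_i)$ and so varies along the sequence. The paper avoids this by representing each contract $Y^i_\theta$ via its \emph{signal distributions} $\psi^\omega_\theta \in \Delta(\mathcal{S})$ over a fixed signal alphabet $\mathcal{S}$ of size $O(|\Theta|^2)$ (from Theorem~\ref{thm:quadratic support}); this parametrization lives in $\Delta(\mathcal{S})^{|\Omega| \times |\Theta|}$, which is compact and context-independent.

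Second, and more importantly, your bound $0 \le \tilde{t}_\theta(q) \le \xi(\theta)$ for $R_p$ is incorrect, and this is precisely the step where non-degeneracy actually enters. The IR constraint only gives $\sum_q (\one^{t} D_\theta q)\,\tilde{t}_\theta(q) \le \mu(\theta)\,\xi(\theta)$; to bound an individual term $\tilde{t}_\theta(q)$ you need $(\one^{t} D_\theta q) = \sum_\omega \mu(\theta\mid\omega)\,q(\omega) \ge \delta > 0$, which holds uniformly near $(u,\mu)$ exactly when $\mu(\theta\mid\omega)>0$ for all $\omega,\theta$. The paper's bound is $\tilde{t}_\theta(q) \le \delta^{-1}(\xi(\theta)+1)$, and the role of non-degeneracy is to supply this $\delta$ and hence compactness of the payment variables---not, as you suggest, to ensure continuity of a ratio when inverting $\tilde{t}_\theta \mapsto t_\theta$. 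With that corrected, your argument goes through.
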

Appendix~\ref{subsec:ap-continuity-semicont} breaks
the theorem down into pieces and proves each piece.
The fact that $R$ is lower-semicontinuous
is proven by appealing to Lemma~\ref{lemma:strict-preferences}
which shows that for any context $(u,\mu)$,
the optimal revenue can be approximated arbitrarily
closely by mechanisms in which all of the IR and IC constraints are
slack.  Such a mechanism remains valid in a neighborhood of
$(u,\mu)$, and its revenue varies continuously in this neighborhood,
hence the optimal revenue throughout a neighborhood of $(u,\mu)$
remains nearly as great as the optimal revenue at $(u,\mu)$.
As a side effect of the method of proof, we obtain a
robustness result: if the seller believes that the context $(u, \mu)$ and
designs an protocol extracting $R(u, \mu)$, but the real context may actually
be slightly different, 
the seller can design a slightly different protocol extracting
revenue at least $(1-\epsilon) \cdot R(u, \mu)$
from any context $(u', \mu')$ that is sufficiently close to $(u, \mu)$.
We note that a proof in the same spirit for the Cremer and McLean setting can
be found in \onlineversion{Robert \cite{robert91}}{Robert
\citeyear{robert91}}. Notice that the heart of our proof is the
application of Lemma \ref{lemma:strict-preferences}. In this specific
ingredient, our proof is different from that of Robert.

The same argument establishes lower-semicontinuity of
the functions $R_e,R_c,R_p$.
Their upper-semicontinuity follows from a compactness
argument.  For each of these classes of mechanisms, the optimal
revenue can be found by solving a variant of $\mathbf{LP_2}$ that
reflects the additional constraints on the mechanism.  Each such
variant LP has an unbounded feasible region, and the first
step of the proof is to identify a compact (i.e.\ closed and
bounded) subset of the feasible region that is guaranteed to
contain the optimal LP solution, at least for
all contexts in a neighborhood of a given $(u,\mu)$.
Taking any sequence of contexts $(u_i,\mu_i)$ converging
to $(u,\mu)$ and passing to a subsequence if necessary,
we may assume that their optimal mechanisms $\mathcal{M}_i$
converge to a limit $\mathcal{M}$ belonging to
the feasible region of the LP.  This mechanism $\mathcal{M}$
then supplies a lower bound on the optimal revenue in
context $(u,\mu)$ that suffices to prove upper-semicontinuity.

\xhdr{Approximating Revenue}
We have seen that the Sealed Envelope Mechanism achieves
at least a $(1/n)$-approximation to the optimal revenue,
where $n=|\Theta|$,
and that this bound cannot be improved by more than a constant
factor in the worst case.  The other two classes of mechanisms
listed above --- Pricing Mappings, and Pricing Outcomes without
positive transfers --- are substantially simpler and more
natural than Pricing Outcomes in full generality, so it would
be desirable to approximate the optimal revenue using one
of these simpler classes of mechanisms.  Unfortunately, in the
worst case, the approximation achieved is rather poor: 
we prove in
Appendix~\ref{sec:ap-continuity-approx} that
there exist contexts $(u,\mu)$ such that
$ R_c(u,\mu) \leq R_p(u,\mu) \leq O(1/n) \cdot R(u,\mu).$
The proof is
an application of Theorem~\ref{thm:continuity-summary}.
We carefully construct a context with independent $\omega,\theta$ for which the
full surplus exceeds $R_c(u, \mu)$ by a factor of  $\Omega(n)$.
For such a context Pricing Mappings is optimal, i.e. $R_c(u, \mu) = R_p(u, \mu)
= R(u, \mu)$.
Now if we slightly
perturb $\mu$ to make it into a full-rank matrix,
Theorem~\ref{thm:full-surplus} tells us that $R(u,\mu)$
jumps up by a factor of $\Omega(n)$ to match the full surplus,
while Theorem~\ref{thm:continuity-summary} ensures that
$R_c(u,\mu)$ and $R_p(u,\mu)$ can only change by a tiny
amount.

Finally there remains the question of whether
$R_c$ achieves a good approximation to $R_p$ in
the worst case.  As it happens, once again the
worst-case ratio between these two quantities is
$\Omega(n)$, as evidenced by Example~\ref{ex:rcrp} in the Appendix. 
In this example the seller gets from a side channel the information about the
the buyer's surplus. In a Pricing
Outcomes Mechanism (even without positive transfers), the seller is capable of
leveraging this information, while in a Pricing Mappings Mechanism, she is not.

\section{Correlated Signals with Uncommitted Buyers}\label{sec:uncommitted}

In this section we examine mechanisms for uncommitted buyers
when signals are correlated, and we
formulate an intriguing open question related to a surprising failure of the
one-round revelation property. First, we review what is known about uncommitted
buyers from previous sections. Theorem \ref{thm:rev-principle-independent} says
that if $\omega$ and $\theta$ are independent, the revenue-optimal mechanism is
aPricing Mappings Mechanism.
The optimal strategy for both committed and uncommitted buyers is the same in
such a mechanism. However, when $\omega$ and $\theta$ are correlated,
the revenue-optimal mechanism for committed buyers is a Pricing Outcomes
Mechanism (Theorem \ref{thm:rev-principle-correlated}).
In such a mechanism, the buyer declares his type, the seller sends a signal and
the buyer pays a certain amount of money that depends on the signal sent by the
seller. Such a mechanism clearly does not work for uncommitted buyers, who can
defect after getting the
signal but before the payment.

In section \ref{sec:correlated}, we mentioned one way to get around this problem:
before executing the Pricing Outcomes Mechanism the seller can require the buyer
to deposit a large amount of money, then the mechanism executes, and after it
completes the seller refunds the buyer.
This mechanism has clear practical drawbacks:
in practice a deposit-refund scheme increases the cost of participation as it
requires the buyer to always be able to make large payments, even in  cases in
which at the end of the protocol the net payment is small, or no payments are
made in the execution.
The latter case is particularly problematic as it is usually costly to establish
a payment relationship (e.g. the buyer needs to spend time giving his credit
card information) so always imposing payments might deter some buyers.

A protocol has \textbf{no positive transfers} if at any transfer node money
always goes from the buyer to the seller, i.e.\ $t(n) \geq 0$ for every
transfer node $n$.
Once we exclude positive transfers the problem of designing optimal mechanisms
for uncommitted buyers becomes quite challenging.  We formulate it as the
following open problem:

\begin{open_problem}\label{open_problem:uncommitted}
Characterize the protocols that extract maximum revenue from uncommitted buyers
subject to  no positive transfers.  In particular, is it possible to design an
algorithm that decides, given any context
$(u,\mu)$ and target revenue $R$, whether there exists a protocol with no
positive transfers that extracts revenue $R$ from uncommitted
buyers?
\end{open_problem}

A natural first attempt to address this problem would be to prove, in the
manner of Theorem \ref{thm:rev-principle-independent} and Theorem
\ref{thm:rev-principle-correlated}, that for this setting one can extract
optimal revenue using a One-Round Revelation Mechanism. We show that this
approach fails.

\begin{theorem}[Failure of the One-round
Revelation Property]\label{thm:revelation_failure}
There exists a context $(u,\mu)$ with correlated $\omega$
and $\theta$ for which some generic protocol with no positive transfers extracts
strictly more revenue from uncommitted buyers than {\em any} One-round
Revelation Mechanism with no positive transfers.
\end{theorem}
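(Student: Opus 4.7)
The plan has two parts: (i) a structural reduction from one-round revelation mechanisms to Pricing Mappings Mechanisms, and (ii) construction of a separating context. For (i), I would fix any one-round revelation mechanism; by definition its tree has depth three and each path from the root contains at most one seller node and one transfer node. If a seller node precedes a transfer node on some path, then after observing the signal an uncommitted buyer has already obtained the information at no cost and will refuse any nonnegative payment demanded afterwards; hence that branch yields zero revenue. Thus the only paths that contribute are of the form ``report type $\to$ pay $t_\theta \to$ receive signal $Y_\theta$,'' which is exactly a Pricing Mappings contract. It then suffices to exhibit a context $(u,\mu)$ in which some multi-round protocol with no positive transfers strictly beats the optimal Pricing Mappings revenue.

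For (ii), I would use a minimal separating instance with $|\Omega|=|\Theta|=2$, $A=\Omega$, reward $u(\theta,\omega,a)=v_\theta \mathbbm{1}[\omega=a]$ for distinct values $v_1,v_2$, and a diagonally dominated joint $\mu$ (so that $\theta$ is informative about $\omega$). The separating multi-round protocol first charges a carefully tuned upfront fee $t_0$, then sends a signal $s_1$ that is conclusive for one value of $\omega$ (say $s_1=L \Rightarrow \omega=0$) and ambiguous for the other; at the conclusive realization the protocol terminates, and at the ambiguous realization the buyer may pay an additional fee $p_H$ for the exact value of $\omega$. The fee $p_H$ is set equal to the marginal willingness to pay of one type conditional on $s_1=H$, so that this type is indifferent while the other, whose posterior differs by correlation, strictly prefers to pay. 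Because $P(s_1 = H \mid \theta)$ differs across types, the expected total payment varies with $\theta$, implementing a form of signal-contingent price discrimination in the spirit of Pricing Outcomes, but without any post-signal transfer that an uncommitted buyer would refuse. Tuning $t_0$ to bind the lower-surplus type's IR constraint yields a concrete revenue figure; at each intermediate node the continuation utility is nonnegative, so following the protocol is indeed rational for the uncommitted buyer.

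The main obstacle is showing that no Pricing Mappings Mechanism matches this revenue. By Lemma~\ref{lemma:interesting-posteriors} the search can be reduced to a finite-dimensional LP over a small set of posteriors; for $|\Omega|=2$ the LP has only a handful of variables. I would resolve it by enumerating the natural menu structures --- full information for one or both types, no information for one, and partial-information signals that may or may not shift either type's optimal action --- and verifying that the optimum is achieved by the simple menu ``no information at price $0$ for the low-surplus type, full information at the high type's surplus price for the high type.'' The key step is to rule out that a more intricate partial-information signal exploiting correlation could beat this: any signal informative enough to raise one type's value raises the other's value as well, tightening an IC constraint and so reducing total revenue. Once this upper bound is in hand, a direct arithmetic comparison between the two revenues completes the proof.
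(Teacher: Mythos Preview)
Your reduction in part~(i) is correct and is exactly the observation the paper makes: in a one-round tree with no positive transfers, any branch where the seller node precedes the transfer node is useless against an uncommitted buyer, so we may as well study Pricing Mappings Mechanisms.

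Part~(ii), however, diverges substantially from the paper's construction, and the differences are not cosmetic. The paper's example (Example~\ref{example:uncommitted_separation}) gives \emph{both types the same} utility function $v(q)=\max(q,1-9q)$, with a heavily asymmetric kink at $q=0.1$. This buys two things you do not get with your scaled matching utility $u(\theta,\omega,a)=v_\theta\mathbbm{1}[\omega=a]$:
\begin{itemize}
\item The asymmetric kink lets the seller send a first-stage signal that is \emph{informationally free}---it moves both types' posteriors along the linear piece of $v$, so neither type gains any value from it---yet it drives the posterior close to the kink and thereby \emph{raises} the conditional surplus for the second-stage purchase. With your symmetric $v_\theta(q)=v_\theta\max(q,1-q)$ kinked at $q=1/2$, a signal that is conclusive for one value of $\omega$ necessarily crosses the kink for the type whose prior sits on the opposite side of $1/2$, so it is never free for both types.
\item Identical $v$ across types makes the Pricing Mappings upper bound nearly immediate: the only asymmetry between types comes from their priors, and the optimal menu collapses to a single posted price for full information. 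With your scaled $v_\theta$'s the Pricing Mappings LP is genuinely two-dimensional, and your proposed resolution---``enumerate natural menu structures'' and argue that any partial-information signal tightens an IC constraint---is an intuition, not an argument. This is the real gap in your plan: without concrete numbers and a verified LP optimum, you have not established the upper bound against which your protocol must be compared.
\end{itemize}

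There is also a structural difference worth noting. The paper's protocol has a buyer node at the root with two distinct branches, one per type; the ``free signal then pay'' trick lives only on the $\theta=1$ branch, while $\theta=0$ simply buys full information at a flat price of $0.533$. Your single-branch protocol with a common upfront fee $t_0$ forces both types down the same path and relies entirely on $P(H\mid\theta)$ to discriminate. That may still work for some parameters, but it forgoes the extra degree of freedom the paper exploits, and you have not exhibited parameters for which it actually beats the Pricing Mappings optimum.

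In short: the skeleton is right, but your choice of context makes both halves of part~(ii) harder rather than easier, and the Pricing Mappings upper bound is asserted rather than proven.
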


In order to prove this theorem, we observe that if it is possible to
extract revenue $R$ from an uncommitted buyer using a One-round Revelation
Mechanism, then it is possible to extract the same revenue using a Pricing
Mappings Mechanism. This follows easily from the fact that if a transfer node
$n$ is the last node before the leaf in a path of the protocol tree, an
uncommitted buyer will always defect before this node if $t(n) > 0$.

Then in Example~\ref{example:uncommitted_separation} we present a context for
which an interactive protocol with no positive transfers extracts strictly more
revenue than any Pricing Mappings Mechanism.

\begin{example}\label{example:uncommitted_separation}
We present a context $(u,\mu)$ for which a mechanism where the seller interacts
with the buyer twice (producing a protocol tree of height $4$) extracts
strictly more revenue than any direct revelation mechanism.

Let $\Omega = \{0,1\}$, $\Theta = \{0,1\}$ and the distribution
$\mu = \left[\begin{smallmatrix} 0.3 & 0.2 \\ 0.2 & 0.3  \end{smallmatrix} \right]$.
Let $A = \{0,1\}$ and define the utility such that $u(\theta,\omega,0) = \omega$
(for $a=0$) and $u(\theta,\omega,1) = 1-9\omega$ (for $a=1$).
As usual, since $\Omega = \{0,1\}$ we represent the posterior probability by one real number $q \in [0,1]$, the probability of the event $\omega = 1$.
The first part of Figure  \ref{fig:uncommitted_separation} depicts the utility as a function of the posterior $q$, the utility is
represented by functions $v_\theta :[0,1]\rightarrow \R$.  In this particular case $v_0(\cdot) = v_1(\cdot) = v(\cdot)$.

It is simple to see that the optimal Pricing Mappings Mechanism offers a single contract,
pricing the full information (value of $\omega$) at $0.4$, and getting revenue of $0.4$.
\comment{In the notation of Section \ref{sec:continuity}: $R_e(u,\mu) = R_c(u,\mu) = 0.4 < 0.6 = R_p(u,\mu) = R(u,\mu)$.}

\begin{figure}
\centering \includegraphics[scale=0.9]{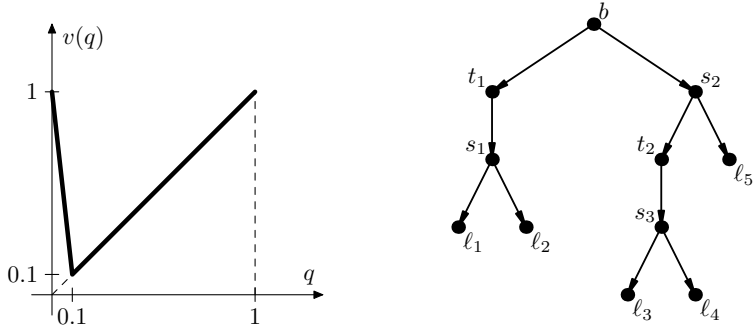}
\caption{Example \ref{example:uncommitted_separation}: a protocol for uncommitted
buyers generating strictly more revenue then any direct-revelation-mechanism.}
\label{fig:uncommitted_separation}
\end{figure}

We next present a protocol with no positive transfers that extracts strictly more revenue
than the optimal Pricing Mappings Mechanism from uncommitted buyers.
The protocol is represented by a tree of height $4$ depicted in the second part of Figure \ref{fig:uncommitted_separation}.
It consists of two transfer nodes with amounts $t_1 = 0.533$ and $t_2 = 0.8$. For
the seller nodes, the transition probabilities are as follows:
\begin{itemize}
 \item \textbf{node $\mathbf{s_1}$:} \comment{the seller outputs the full
information,
i.e., }the seller outputs $\ell_1$ whenever $\omega=0$ and outputs $\ell_2$
whenever $\omega=1$.
 \item \textbf{node $\mathbf{s_3}$:} \comment{is also a full-information node.
}the seller
outputs $\ell_3$ whenever $\omega=0$ and outputs $\ell_4$
whenever $\omega=1$.
  \item \textbf{node $\mathbf{s_2}$:} the seller moves either to node
$t_2$ or to node $\ell_5$ according to the following probabilities:
$\prob(\ell_5 \vert\omega=0)= 0$, $\prob(\ell_5 \vert \omega=1) = 5/6$ and
clearly $\prob(t_2 \vert \omega) = 1-\prob(\ell_5 \vert \omega)$.
\end{itemize}
Now, we claim that the optimal strategy for an uncommitted buyer is to
play left (to node $t_1$) whenever $\theta=0$ and
play right (to node $s_2$)
whenever $\theta=1$, and then follow the protocol (make the transfers when asked) without defecting.
We prove this in Claim~\ref{lem:uncommitted_separation} in
Appendix~\ref{sec:ap-uncommitted}.
Given these strategies we calculate the expected revenue of the protocol, which is:
$\mu(\theta=0) \cdot 0.533 + \mu(\theta=1) \cdot 0.8 \cdot \prob(t_2 \vert
\theta=1) = \comment{\tfrac{1}{2} \cdot 0.533 + \tfrac{1}{2} \cdot 0.8 \cdot
\tfrac{1}{2} = } 0.4665 > 0.4$
\end{example}

We have shown above that generic protocols can extract {\em strictly more}
revenue than the One-round Revelation Mechanism when buyers are
uncommitted and no positive transfers are allowed. One might wonder if such
interactive mechanisms can extract as much revenue from uncommitted buyers as
can be extracted from committed buyers. We show that the answer is no, as for
the setting of Example~\ref{example:uncommitted_separation} there is a gap
between the two.

\begin{theorem}\label{thm:gap-committed-uncommitted}
There exists a context with correlated $\omega$ and $\theta$
for which the optimal revenue that can be extracted from {\em uncommitted} buyers
using a protocol with no positive transfers is strictly less than the optimal
revenue that can be extracted from {\em committed} buyers.
\end{theorem}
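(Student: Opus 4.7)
The plan is to use the very context $(u,\mu)$ from Example~\ref{example:uncommitted_separation} as the witness. For the committed side, $[\mu(\omega,\theta)]$ has rank $2=|\Theta|$, so Theorem~\ref{thm:full-surplus} lets committed buyers yield the full surplus $\tfrac{1}{2}\xi(0)+\tfrac{1}{2}\xi(1) = \tfrac{1}{2}(0.6)+\tfrac{1}{2}(0.4) = 0.5$, realized by a single Pricing Outcomes contract that reveals $\omega$ at prices $t(0)=1,\, t(1)=0$ (both nonnegative). Thus the committed side already yields revenue $0.5$, and the task reduces to showing that no uncommitted protocol with no positive transfers can match it.

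The approach for the upper bound is by contradiction: suppose some such protocol attains $R = 0.5$. Because each type's payment is bounded by its surplus ($p_\theta \le \xi(\theta)$), the equality $R = \tfrac{1}{2}(p_0+p_1) = 0.5$ forces $p_0=0.6$ and $p_1=0.4$, making both IR constraints tight. Since $v_0=v_1=v$ with $v(q)=\max(q,1-9q)$, and $v$ attains its maximum value $1$ only at $q\in\{0,1\}$, tight IR forces each type's posterior at every reached leaf to concentrate on a single value of $\omega$, so both strategies reveal $\omega$ almost surely.

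Next I pin down Bob's payment profile. Writing $t_B(\omega):=\E[\text{total payment}\mid\omega,\phi_1]$ for Bob's (type $\theta=1$) optimal strategy, Bob's tight IR gives $0.4\,t_B(0)+0.6\,t_B(1)=0.4$. Since signal distributions depend only on $\omega$ and $\phi_1$ produces histories that uniquely identify $\omega$, Alice deviating to $\phi_1$ also receives full information, yielding deviation utility $1-(0.6\,t_B(0)+0.4\,t_B(1))$; comparing with her tight-IR utility $v(\text{prior}_0)=0.4$ gives the inequality $0.6\,t_B(0)+0.4\,t_B(1)\ge 0.6$. Combined with $t_B(\omega)\ge 0$, this linear system has the unique solution $t_B(0)=1,\,t_B(1)=0$.

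The main obstacle --- the heart of the argument --- is deriving a contradiction from $t_B(0)=1,\,t_B(1)=0$ combined with uncommittedness. Since $p_1=0.4>0$, $\phi_1$ visits with positive probability some transfer node $n^\ast$ with $t(n^\ast)>0$. Because $t_B(1)=0$ with nonnegative payments, every positive-probability path of $\phi_1$ under $\omega=1$ has total payment $0$, so $n^\ast$ is reached only when $\omega=0$; hence Bob's posterior at $n^\ast$ is concentrated on $\omega=0$. Then all future posteriors remain concentrated with $v=1$, so Bob's future information gain from $n^\ast$ is zero while his expected remaining payment is at least $t(n^\ast)>0$. Continuation yields strictly less utility than defection, so an uncommitted Bob would defect at $n^\ast$, contradicting the optimality of $\phi_1$ with $t_B(0)=1$. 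Hence $R<0.5$ and the theorem follows.
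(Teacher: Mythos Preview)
Your argument is elegant and the main chain of deductions is sound, but it has a genuine gap that prevents it from proving the theorem as stated. You establish that no protocol (with no positive transfers) for uncommitted buyers can achieve revenue \emph{exactly} equal to $0.5$. However, the \emph{optimal revenue} in this paper is defined as a supremum: the maximum $R$ such that $R-\epsilon$ is achievable for every $\epsilon>0$. Ruling out the endpoint $R=0.5$ does not rule out $\sup R = 0.5$; there could be a sequence of protocols with revenues approaching $0.5$ without attaining it, and no compactness argument is available here since protocol trees may have unbounded depth. Your proof uses $R=0.5$ in an essential way: it is precisely the equality $p_0+p_1 = \xi(0)+\xi(1)$ that forces $p_\theta = \xi(\theta)$ exactly, which in turn forces every posterior to lie in $\{0,1\}$ and pins down $t_B(0)=1,\ t_B(1)=0$ exactly, enabling the crisp ``$n^\ast$ is reached only when $\omega=0$'' step. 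If the revenue is only $0.5-\delta$, each of these conclusions becomes approximate, and your contradiction (``Bob already knows $\omega=0$ at $n^\ast$, so he defects'') no longer goes through cleanly.

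This is exactly why the paper's proof is substantially longer. The paper fixes constants $\delta \ll \epsilon \ll 1$ and shows that no such protocol can extract $0.5-\delta$. It then carries out a quantitative version of your outline: it shows (Fact~\ref{fact:high_value_deviation}) that if type $\theta=1$'s signal is $(1-2\delta)$-informative then it is also nearly fully informative for type $\theta=0$, and (Fact~\ref{small:small_payment_deviation}) that the expected payment type $\theta=0$ faces when mimicking $\theta=1$ is at most $0.6 - \Omega(\epsilon)$, by classifying transfer nodes according to the ratio $A^i_{11}/A^i_{00}$ and exploiting the uncommitted participation constraint at ``bad'' nodes. The extra machinery is there precisely to survive the loss of exact equalities. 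Your clean endpoint argument is a nice heuristic for why the gap should exist, but to make it a proof you would need to either establish that the supremum is attained (which seems hard here) or redo the argument with slack, which leads you to something like the paper's analysis.
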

We prove the theorem in Appendix~\ref{sec:ap-uncommitted}.
For context $(u,\mu)$, the optimal revenue that can be extracted from committed buyers using protocols that has no positive transfers is exactly the revenue $R_p(u,\mu)$ that can be extracted by
{\em Pricing Outcomes with No Positive Transfers Mechanism}.
To prove the claim we show in Appendix~\ref{sec:ap-uncommitted} that
for the setting of Example~\ref{example:uncommitted_separation}
for some $\delta>0$ it is impossible to extract revenue of $R_p(u,\mu)-\delta$
from uncommitted buyers using protocols that have no positive transfers.

The intuition behind the proof of Theorem \ref{thm:gap-committed-uncommitted} is the
following: in this context, it is possible to extract the full surplus from
committed buyers. In order to extract close to this much revenue from uncommitted
buyers, the mechanism must be offering an option that results with some posterior very
close to the full information. Now we can show that since a buyer of $\theta=1$
is paying at most his entire surplus, there is a deviation for a buyer
of type $\theta=0$ that guarantees him almost full information for a price
considerably below his surplus.

\onlineversion{
\section{Open Problems}\label{sec:open-problems}

We believe the design of mechanisms for selling information is an area full of
exciting possibilities. In particular,
there might be
potential connections with information theory \comment{, privacy} and
cryptography, that
could be
discovered and formalized. In this section we describe some
interesting open directions, listing them from the more precise to the more
vague problems:

\begin{enumerate}
\item \textbf{Mechanisms for Uncommitted Buyers:} The main open problem left
by our paper is the problem of designing revenue optimal protocols for
uncommitted buyers, when no positive transfers are allowed (Open Problem
\ref{open_problem:uncommitted}).
Theorem \ref{thm:revelation_failure}
reveals that
multiple rounds of partial information disclosure (interleaved by payment to the
seller) are sometimes necessary to achieve optimal revenue if the buyer is
allowed to abort his interaction with the seller prematurely.
Solving this open problem would probably
require the development of new tools for directly bounding
the revenue of interactive mechanisms
--- in sharp contrast to most optimal mechanisms in the
literature which satisfy some version of the One-round Revelation Property.

Possibly, tools from dynamic mechanism design \cite{AS07, Pavan08} and
from cheap talk \cite{AumannHart03} might be useful in dealing with this
question, since both involve repeated interactions between parties before an
outcome is effectively implemented. Nevertheless, our setting differs from them
in fundamental ways: the reason interaction is necessary here is different from
that in dynamic mechanism design. In our problem, information is just revealed
by nature in the beginning of time. No exogenous signals are revealed during
the interaction between seller and buyer. This fact brings our model closer to
cheap talk, yet there is a substantial difference as in our setting talking is
not ``cheap'' in the sense that payments are required so that the conversation
continues.

\item \textbf{Continuous Type Spaces and Structured Contexts:} Our results fix
a finite discrete type space and use tools from Linear and Convex Programming
to design optimal mechanisms. Is it possible to obtain a general theory of
``Selling Information'' where the signals from nature come from arbitrary spaces? Es\H{o} and Szentes \cite{EsoSzentes04} are able to deal with real-valued signals
but they severely restrict the utility function of the agents.

Is there a natural yet less restrictive condition on the context that makes the
problem tractable for more generic spaces?
Can one come up with a general condition that would be still be sufficient for
the  optimal mechanism to have explicit and natural representation (similar to
Myerson's mechanism), rather than being a solution to some mathematical program?

\item \textbf{Multiple buyers:} In our work, we consider only two agents: a
seller and a buyer. In the context of information advertisement, there are
usually many sellers (many information-providing agencies) and many buyers
(advertisers). Those buyers do not simply solve a decision problem once they
acquire the information, but play a game among each other.

A natural next step is to consider a variant of our model where there is a
single seller but multiple buyers, who play a game after acquiring information.
One needs to be careful when defining such problem since we must be sure that
the game has an unique equilibrium or that for each possible outcome of the
selling information phase, the Bayesian game played in the second phase has a
focal equilibrium on which we can concentrate when reasoning about the first
phase. Along those lines is the work of Es\H{o} and Szentes \cite{EsoSzentes07} and
the recent paper by Kempe, Tardos and Syrgkanis~\cite{SKP12}.

\item \textbf{Coupling Goods and Information:} As mentioned earlier, there are
many situations where goods and information are coupled together (for example,
the restaurant example in the introduction). What is the correct model in which
to analyze such situations?

\item \textbf{Dynamic selling of information:} We analyzed a
single interaction between the buyer and the seller. Alternatively one could
consider an ongoing (possibly interactive) relation between those two parties,
e.g. when the buyer is interested in information regarding multiple impressions.
Can the seller benefit from this and extract strictly more revenue from
the whole process than she would if running the optimal auction for each query
individually? It is known that for traditional goods, selling bundles might
generate strictly more revenue than selling goods individually. What is the
form of the analogous results for information?

\item \textbf{Computationally bounded agents:} We assumed that the seller sends
messages to the buyer coming from a certain distribution and the buyer uses
Bayes' rule to learn about the state of the world from those messages. We do not
impose computational constraints on the agents. If we were to assume
computationally-constrained buyers,
the seller could potentially take advantage of cryptographic
primitives, for example sending an encrypted piece of information
to the buyer and later selling the decryption key.
How would such capabilities affect the results?
Would the seller be able to
exploit them in a profitable way?
\comment{
\item \textbf{Privacy:} The ``selling information'' problem raises obvious
privacy concerns if the information being sold is user data.
The notion of \emph{Differential Privacy} \cite{DMNS06} captures what it means
for a data release to preserve privacy, a recent successful line of research
uses this notion to produce mechanisms that are privacy preserving. 
One might like to consider the question of precisely defining what {\em privacy}
means in the context of information selling, and to consider the revenue
maximization problem under privacy constraints. 
}
\end{enumerate}
}{
\section{Open Problems}\label{sec:open-problems}
The design of mechanisms for selling information is an area full of exciting
possibilities. In particular, we believe that there might be potential
connections with information theory\comment{, privacy} and cryptography.
The main concrete open problem left by our paper is the one of designing revenue
optimal protocols for uncommitted buyers, when no positive transfers are allowed
(Open Problem \ref{open_problem:uncommitted}). One might also consider extending
the model to include multiple buyers that are buying information that is useful
when they compete in a second stage (say an auction).  Other interesting
directions are extending the model to computationally bounded agents
(cryptography might prove useful), the dynamic selling of information when the
seller and buyer interact repeatedly, and the connections to privacy. In the
full version of the paper we elaborate on these and other extensions. 
}

%
\onlineversion{
\bibliographystyle{abbrv}
}{
\bibliographystyle{acmsmall}
}
\bibliography{sigproc-full}


\onlineversion{
\appendix
}{
\elecappendix

\medskip
}
\section{Proofs omitted from Section~\ref{sec:model}}\label{appendix:model}

\onlineversion{
\begin{proofof}{Theorem \ref{thm:myerson_revelation}}
}{
\begin{proof}[of Theorem~\ref{thm:rev-principle-correlated}]
}
The idea of the proof is simple: we add in the root a buyer node where  the
buyer is asked to report his type $\theta$. For each branch, there is a copy of
the original protocol where the seller simulates the behavior of the buyer.

Formally, we fix a context and a consider a protocol represented by a tree,
distributions $\{\psi^\omega_n\}_{\omega \in \Omega}$ for each seller node $n$
and transfers $t(n)$ on each transfer node. Now, consider
a set of moves for the buyer $\{\phi_n^\theta\}_{\theta \in \Theta}$ for each
buyer and transfer node. We can assume that for any buyer node $n$,
$\phi_n^\theta$ outputs $\bot$ with zero probability, since any other strategy
is weakly  dominated by a strategy in which
the buyer does not defect at the buyer node and instead defect at the first
transfer node encountered. Let $T$ represent this protocol. 

Now, define the protocol $T_\theta$ as the same tree with all buyer nodes $n$
substituted by seller nodes where the seller moves according to $\phi_n^\theta$
for all $\omega \in \Omega$. Now, the tree $T_\theta$ consists only of transfer
and seller nodes.

Now, design a new protocol with a buyer node in the root with
$\abs{\Theta}$ outgoing edges, one corresponding to each type. Attach to the
$\theta$-branch a copy of $T_\theta$. Now, it is simple to see that an optimal
committed (uncommitted) strategy is to report the true type and then follow the
protocol (defecting in the nodes of $T_\theta$ corresponding to the nodes in
$T$ where he defected in the original protocol). If any deviation is profitable
now, the corresponding deviation would be profitable in the original protocol.
Also, the revenue of the truthful strategy in the new protocol is clearly the
same as the revenue of the original strategy in the original protocol.
\onlineversion{
\end{proofof}
}{
\end{proof}
}

\section{Proofs omitted from Section~\ref{sec:independent}}
\label{sec:ap-independent}


\onlineversion{
\begin{proofof}{Theorem~\ref{thm:rev-principle-independent}}
}{
\begin{proof}[of Theorem~\ref{thm:rev-principle-independent}]
}
Fix the context, consider a voluntary protocol and an optimal strategy $\phi$
for a committed buyer achieving revenue $R$.
We show that we can reduce it to a protocol of the specified form,  achieving the
same revenue for uncommitted buyers. Let $Z(\omega, \theta)$ be the distribution over leaves obtained
when we start from the root and use $\psi^\omega_n, \phi^\theta_n$ to move down
the tree. Also, let $L$ be the set of leaves of the tree.

In the root of the reduced protocol, we place a buyer node with
$\abs{\Theta}$ edges out of it. This
node corresponds to the buyer being asked to report his type.
The children of the root are transfer nodes, each corresponding
to a distinct type $\theta$.
The amount at such a transfer node corresponds
to the expected payment of type $\theta$ in the mechanism, i.e.,
$\E_\omega[\tau(Z(\omega,\theta))]$.

Now, consider the node $n$ who is the child of the transfer-node in the
$\theta$-branch. Add a copy of $L$ as its children and for each $\omega$, have
the seller use the distribution $Z(\omega, \theta)$.

Now, we need to check that the new protocol is indeed equivalent to the
original one. Consider the uncommitted strategy where the buyers report their true
types in the first node and then pay according to the transfer node.
This strategy generates the same utility and payments as the ones in the original protocol, so one only needs to check that it is
indeed an optimal strategy for the buyer.

It suffices to see that a buyer of
type $\theta$ would not rather declare $\theta'$ instead or defect in the
middle of the reduced protocol.
Defecting in the middle is clearly not beneficial, since
the protocol is voluntary and the buyer does not learn anything until the last
move.
Also, if declaring
$\theta'$ were profitable for a buyer of
type $\theta$,
he would prefer to
play according to $\phi^{\theta'}$ in the original protocol. This happens for
two reasons.  First,
it generates the same distribution over leaves given $\omega$ as
declaring $\theta'$ in the new protocol would. So the value of the deviations
is the same.  Second,
it generates the same expected payment as declaring $\theta'$ in
the new protocol would.  This depends crucially on independence.  A
buyer playing $\phi^{\theta'}$ experiences the same expected payment
regardless of his type, since $\omega$ is drawn from the
same distribution (independent of $\theta$) and we thus
visit the same nodes with the same probabilities in all cases.
\onlineversion{
\end{proofof}
}{
\end{proof}
}

Notice that the last step in the proof clearly does not hold if
$\omega,\theta$ are correlated, since a buyer of type $\theta$ could change his
declaration to $\theta'$, but he can not prevent $\omega$ to be sampled with
probability $\mu(\omega \vert \theta)$.

\onlineversion{
\begin{proofof}{Observation \ref{obs:no-duplicated-posteriors}}
}{
\begin{proof}[of Observation \ref{obs:no-duplicated-posteriors}]
}
Simply notice that the utilities of each player for each contract in the
original and new menu are the same.
\onlineversion{
\end{proofof}
}{
\end{proof}
}

\onlineversion{
\begin{proofof}{Observation \ref{obs:feasibility-posteriors}}
}{
\begin{proof}[of Observation \ref{obs:feasibility-posteriors}]
}
 Let $q_i \in \Delta(\Omega)$ be the posterior associated with $i$, i.e.
$q_i(\omega) = \prob(\omega \vert Y=i)$ and let $x(q_i) = \prob(Y=i)$. Then:
$$ \sum_{q_i\in Q} x(q_i) \cdot q_i(\omega) = \sum_{i} \prob(Y=i) \cdot
\prob(\omega \vert Y=i) = \mu(\omega) $$
Conversely, given $(Q,x)$ satisfying~$(F)$ we can define a variable $Y$ taking
values in $[k]$ and set its joint distribution with $\omega$ to be
$\prob(\omega,Y=i) = x(q_i) \cdot q_i(\omega)$.
\onlineversion{
\end{proofof}
}{
\end{proof}
}

\onlineversion{
\begin{proofof}{Lemma~\ref{lemma:interesting-posteriors}}
}{
\begin{proof}[of Lemma~\ref{lemma:interesting-posteriors}]
}
The function $u$ defines piecewise linear functions $v_\theta :
\Delta(\Omega) \rightarrow \R$. Each of them induces a partition of
$\Delta(\Omega)$ in $\abs{A}$ polytopes in which $v_\theta(\cdot)$ is linear.
One can combine the $\abs{\Theta}$ partitions, by taking the coarser partition
of $\Delta(\Omega)$ that is simultaneously a subpartition of the one induced by
$v_\theta$ for all $\theta$. This way we obtain a finite partition such that
for all its regions all the $v_\theta$ functions are linear.

Now, let $Q^*$ be the set of vertices of the regions in this partition. Given any
posterior $q \in \Delta(\Omega)$, if it is in region $R$ of the partition, one
can write $q = \sum_i \gamma_i q_i$ where $q_i \in Q^*$ are vertices of region
$R$ and $\sum_i \gamma_i = 1, \gamma_i \geq 0$. Now, given a primal solution to
$\mathbf{LP_1}$, if $x_\theta(q) > 0$ and $q \notin Q^*$, simply increase
$x_\theta(q_i)$
by $\gamma_i x_\theta(q)$ and decrease $x_\theta(q)$ to zero. By the linearity
of $v_\theta(\cdot), v_{\theta'}(\cdot)$ in $R$, it is clear that the resulting
solution is still feasible and has the same objective. By repeating
this process as many times as needed, one ends with a solution where the
support is in $Q^*$.
\onlineversion{
\end{proofof}
}{
\end{proof}
}

\onlineversion{
\begin{proofof}{Lemma~\ref{lemma:convex}}
}{
\begin{proof}[of Lemma~\ref{lemma:convex}]
}
Given a convex programing problem $\min f(x) \text{ s.t. } Ax \leq b$, in order
to show that it can be solved exactly in polynomial time we need to show three
things (see section 5.3 of  \cite{ben2001lectures}):

\begin{itemize}
 \item that the function $f$ can be computed efficiently for each point $x_0$ in
the domain.
 \item that for each point $x_0$ in the domain it is possible to calculate a
subgradient $v \in \partial f(x_0)$. A subgradient is a vector $v \in \R^n$ such
that $f(x)-f(x_0) \geq v^t(x-x_0)$
 \item that there is an optimal solution that can be expressed with
polynomially-many bits.
\end{itemize}

Point $1$ above is trivial. For point number $2$ we use the linearity of the
subgradient: if $v \in \partial g(x_0)$ and $w \in \partial h(x_0)$ then $v+w
\in \partial (g+h)(x_0)$. And since the subgradient of linear functions is
trivial, we now need to show how to compute the subgradient of
$v_{\theta'}(\cdot)$. Notice that $v_{\theta'}(\cdot)$ is the maximum of linear
functions. We use the fact that if $g(x) = \max_j g_j(x)$, $g(x_0) = g_i(x_0)$
and $v_i \in \partial g_i(x_0)$, then $v_i \in \partial g(x_0)$, since:
$$g(x) - g(x_0) = g(x) - g_i(x_0) \geq g_i(x) - g_i(x_0) \geq v_i^t(x - x_0)$$

For point $3$, Lemma \ref{lemma:interesting-posteriors} says that
$\mathbf{LP_1}$
can be written with finitely many variables. Therefore, the dual LP
($\mathbf{DLP_1}$) can be
written with finitely many constraints. Now, the optimizer of the dual can be
found by taking $O(\abs{\Theta}^2)$ constraints,
making them tight and solving the resulting linear system. This solution has polynomially many bits.

Given an optimal solution $q \in \Delta(\Omega)$, for each
$\theta \in \Theta$ let $a(\theta)$ be a solution to
$\max_a \E_{\omega \sim q} u(\omega, \theta, a)$.  There is
a polytope $\mathcal{P} \subseteq \Delta(\Omega)$ that consists of
all posteriors $q'$ such that actions
$(a(\theta))_{\theta \in \Theta}$ remain optimal when
the posterior is $q'$.  The objective function of
the convex program is linear when restricted to $\mathcal{P}$,
so the set of optimal solutions includes at least one extreme point
of $\mathcal{P}$, and such an extreme point can be found in polynomial
time.  Recalling the construction of the set $Q^*$ in the
proof of Lemma~\ref{lemma:interesting-posteriors}, we see
that all of the extreme points of $\mathcal{P}$ are elements
of $Q^*$, so we have established that
our convex program has an optimal solution $q^*$ that belongs to $Q^*$,
and that we can find such a $q^*$ in polynomial time, as claimed
in the lemma.
\comment{
we show that the dual LP can be written with finitely many
linear constraints, each represented by polynomially-many bits. If we are able
to show that, the optimizer can be obtained by taking $O(\abs{\Theta}^2)$
constraints, making them tight and solving the linear system. The solution
clearly has polynomially-many bits.

In order to show that the LP can be expressed with finitely many constraints,
partition $\Delta(\Omega)$ in regions where in each region, all the
$v_\theta(\cdot)$ functions are linear. Now, there are finitely many of those
regions and each of them is a polytope with finite number of vertices that can
be represented with polynomially-many bits. Let $\tilde{\Delta}$
be such a region. Now, notice that we can express that:
$$-v_\theta(q) g_\theta + \sum_{\theta' \neq \theta} [  v_{\theta'}(q)
h_{\theta',\theta} - v_\theta(q) h_{\theta,\theta'} ] + q^t y_\theta \geq 0$$
for each $q \in \tilde{\Delta}$ simply by enforcing it on the vertices of
region $\tilde{\Delta}$. All the other points follow automatically by
linearity, since all the $v_\theta(\cdot)$ functions are linear in
$\tilde{\Delta}$.
}
\onlineversion{
\end{proofof}
}{
\end{proof}
}

The following example, discussed in Section~\ref{subsec:princing-contracts},
reveals that the Sealed Envelope Mechanism does not extract more than
$\Omega(\frac{1}{|\Theta|})$ of the revenue of the optimal mechanism.
\begin{example}\label{example:contracts-vs-envelope}
 Consider $\Omega = \{0,1\}$, $\Theta = [n]$ and
$\mu(\omega, \theta) = \frac{2^{-\theta-1}}{1-2^{-n}}$.
Now, for each $\theta$ we can represent $u(\theta, \omega, a)$ as a
piecewise-linear function $v_\theta(\cdot)$ on $[0,1]$ where $$v_\theta(q) =
\max_a [ u(\theta, 1, a) \cdot q + u(\theta, 0, a) \cdot (1-q) ]$$
Let $v_\theta(\cdot)$ be the function interpolating the following three points:
$(0,0), (1-\frac{1}{T^\theta},0), (1,2^\theta)$, where $T$ is some fixed large
number. Now, the value of a buyer of type $\theta$ for the envelope is
$2^\theta$ and occurs with probability $\mu(\theta) = O(2^{-\theta})$, so any
price $p$ will only be able to extract $O(1)$ revenue.

Now, we show a Pricing Mappings Mechanism that can extract $\Omega(n)$
revenue. Consider the menu where the price of contract $\theta$ is $t_\theta =
(1-\frac{1}{T}) 2^\theta - \epsilon$ and outputs the posterior
$1-\frac{1}{T^{\theta+1}}$ w.p. $\frac{1}{2}$ and the posterior
$\frac{1}{T^{\theta+1}}$ w.p. $\frac{1}{2}$. This is the same as outputting a
variable $Y_\theta$ with joint distribution:
$$[\prob(Y_\theta,\omega)] = \begin{bmatrix}
                                 1/(2 T^{\theta+1}) & 1/(2 (1-T^{\theta+1})) \\
                                 1/(2 (1-T^{\theta+1})) & 1/(2 T^{\theta+1}) \\
                                \end{bmatrix}$$
It is simple to check that this is a valid menu and that it generated revenue
$\Omega((1-\frac{1}{T})n)$.
\end{example}

\onlineversion{
\begin{proofof}{Theorem~\ref{thm:quadratic support}}
}{
\begin{proof}[of Theorem~\ref{thm:quadratic support}]
}
Given any context, fix an optimal solution $\{x_\theta(q), t_\theta\}_\theta$
for $\mathbf{LP_1}$.
We can assume $\Theta = [n]$ and that the solution is such
that $t_1 \geq t_2 \geq t_3 \geq \hdots \geq t_n$.  Let $Q_\theta \subset
\Delta(\Omega)$ be the posteriors in the support of $x_\theta$. We can restrict
our attention to $\mathbf{LP_1}$  with variables $x_\theta(q)$ for $q \in
\cup_\theta Q_\theta$.

For any given $\theta \in [n]$ we show that we can substitute $x_\theta(\cdot)$
by some $x'_\theta(\cdot)$ with support at most $\theta+m-1$. In order to do so
fix all $t_{\theta'}$ and $x_{\theta'}(\cdot)$ for all $\theta' \neq
\theta$.
Now find $x'_\theta(q),t'_\theta$ solving the following system:

$$\begin{aligned} & \max t'_\theta \text{ s.t. } \\
& \quad \begin{aligned}
   & \sum_{q\in Q} v_\theta(q) x'_\theta(q)  - t'_\theta \geq
\max\{v_\theta(p),\max_{\theta' \neq \theta} \sum_{q \in Q} v_\theta(q)
x_{\theta'}(q) - t_{\theta'} \}, &\quad(IR_{\theta})+(IC_{\theta,\theta'},
\theta' \neq \theta) \\
   & \sum_{q\in Q} v_{\theta'}(q) x_{\theta'}(q) - t_{\theta'} \geq \sum_{q\in
Q} v_\theta(q)
x'_{\theta}(q) - t_{\theta},\ \ \  \forall \theta' < \theta,
&\quad(IC_{\theta',\theta}, \theta' < \theta) \\
   & \sum_{q\in Q} x'_\theta(q) \cdot q(\omega) = \mu(\omega), \ \ \  \forall \omega
\in \Omega &\quad(F_\theta) \\
   & x'_{\theta}(q) \geq 0, \ \ \  \forall q \in Q
\end{aligned} \end{aligned}$$

The first constraint represents the IR$_\theta$ combined with
$IC_{\theta,\theta'}$ for all $\theta' \neq \theta$. Since all
$x_{\theta'}(\cdot), t_{\theta'}$ are fixed, the rhs of the inequality is
constant. The second family of constraints refer to the IC$_{\theta',\theta}$
constraints for $\theta' < \theta$.

The system of inequalities has $\theta + m$ constraints, so a basic
feasible solution has support at most $\theta + m$, which corresponds to
$t'_\theta$ and $x'_\theta(\cdot)$ such that $\Vert x'_\theta(\cdot)\Vert_0
\leq \theta+m-1$. Consider the solution we
get from substituting $x_\theta(\cdot)$ by $x'_\theta(\cdot)$. The only
constraints in $\mathbf{LP_1}$ which can be violated are:
$$\sum_q v_{\theta'}(q) x_{\theta'}(q) - t_{\theta'} \geq  \sum_q v_{\theta}(q)
x_{\theta}(q) - t_{\theta}$$
for some $\theta' > \theta$.
If this happens for some $\theta'$, substitute
$(x_{\theta'}, t_{\theta'})$ by $(x'_{\theta}, t_{\theta})$. It is simple to
see that the solution obtained is feasible and that the revenue did not decrease.

Finally, observe that the theorem follows as
$$\sum_\theta \Vert x_\theta\Vert_0 \leq \sum_\theta (m-1+\theta) = mn +
\binom{n}{2}$$

We show in Example \ref{example:quadratic-dependence} that the quadratic
dependence in
$\abs{\Theta}$ is necessary if one wants to obtain the optimal revenue.
\onlineversion{
\end{proofof}
}{
\end{proof}
}

\begin{example}\label{example:quadratic-dependence}
Consider $\Omega = \{0,1\}$, $\Theta = [n]$ and as
in Example \ref{example:contracts-vs-envelope} represent $u$ by
piecewise-linear convex functions $v_\theta : [0,1] \rightarrow \R$. For some
small enough parameter $\delta > 0$, define $v_\theta$ as the function that
interpolates points
$$ (0,0), (1-\delta \theta, 0), (1-\delta (\theta-1), \delta^{2\theta -2}),
(1-\delta (\theta-2), \delta^{2\theta
-1}),..., (1-\delta, \delta^\theta), (1, \delta^{\theta-1}) $$
For example, $v_1$ is the function interpolating $(0,0), (1-\delta, 0), (1,1)$.
Function $v_2$ interpolates $(0,0), (1-2\delta,0), (1-\delta, \delta^2), (1,
\delta)$ and so on.

For this example we represent all the posteriors $q \in \Delta(\Omega)$ by a
real number $q(1)$, the posterior probability that $\omega=1$.

Now, define $\mu$ such that $\mu(\omega=0) = \mu(\omega=1) = \frac{1}{2}$.
Before we define the distribution over $\Theta$ we note two things about the
primal LP:
\begin{enumerate}
 \item there is an optimal solution to $\mathbf{LP_1}$ where all
posteriors are in $Q = \{0, 1-n\delta, 1-(n-1)\delta, \hdots, 1-\delta, 1\}$ by
the argument in the proof of the Interesting Posteriors Lemma (Lemma
\ref{lemma:interesting-posteriors}).\comment{
In fact, suppose there is an optimal solution where $q$ is in the support of
$x_\theta$ but $q \notin Q$. Then consider $q_1, q_2 \in Q$ such that $q$ is
between $q_1$ and $q_2$. One can write $q = \gamma q_1 + (1-\gamma) q_2$ for
some $\gamma$. Then we can construct a $x'_\theta(\cdot)$ such that
$x'_\theta(q) = 0$, $x'_\theta(q_1) = x_\theta(q_1) + \gamma x_\theta(q)$,
$x'_\theta(q_2) = x_\theta(q_2) + (1-\gamma) x_\theta(q)$ and $x_\theta(\cdot)
= x'_\theta(\cdot)$ for all other posteriors. Now, by substituting
$x_\theta(\cdot)$ by $x'_\theta(\cdot)$ we still get a feasible solution with
the same revenue. This happens since in the interval $[q_1, q_2]$ all
$v_\theta$ functions are linear, so utilities are unchanged when passing from
$x_\theta$ to $x'_\theta$.}
 \item we can write the primal only with the variables $x_\theta(q)$ for $q
\in Q$. This generates a finite LP. Further notice that the polytope that
defines the feasible set is define by $v_\theta(\cdot)$ and the
distribution $\mu(\omega)$, both already defined. The distribution
$\mu(\theta)$ just appears in the objective function.
\end{enumerate}

Since the feasible region
is already defined and it is a polytope (since by restricting to $Q$ we have a
finite number of constraints), let $V$ be the set of all vertices of this
polytope. Define $\tilde{V} \subset \R_+$ as the set of all $t$ such that there
is $(x_{\theta'}, t_{\theta'})_{\theta' \in \Theta} \in V$ and $\theta$ such
that $t = t_\theta$. In other words, it is the set of all prices appearing in
the vertices of the feasible regions. Given that, pick some $\epsilon$ such
that:

$$\epsilon < \frac{1}{n} \cdot \frac{\min_{t,t' \in \tilde{V}, t \neq t'}
\abs{t-t'}}{\max_{t,t' \in \tilde{V}} \abs{t-t'}}$$
and define $\mu(\theta) = \epsilon^\theta \cdot
\frac{1-\epsilon}{\epsilon(1-\epsilon^n)}$. Now the context is define, we can
analyze the optimal menu of contracts for this context. We will show that the
optimal menu is unique and $\sum_\theta \Vert x_\theta \Vert_0 = \Theta(n^2)$.

We claim that for this value of $\epsilon$, the optimal solution to the primal
LP can be found by optimizing $t_1$ over the feasible set, fixing it, then
optimizing for $t_2$ (having $x_1, t_1$ fixed), and so on. The definition of
$\epsilon$ guarantees that no possible gain in  $t_\theta+1, \hdots, t_n$ can
justify a decrease in $t_\theta$.

For example, for $\theta=1$, if we want to find $\max t_1$ with respect to the
feasible set of the primal LP, the unique solution is to take $t_1$ to be the
surplus $\frac{1}{2}[v_1(0) + v_1(1)] - v_1(\frac{1}{2}) = \frac{1}{2}
v_1(1)$, since $v_\theta(0) = v_\theta(\frac{1}{2}) = 0$ by construction,
and
$x_1(0) = x_1(1) = \frac{1}{2}$. Notice that buyer of type $1$ gets no utility
from this contract (the seller is able to extract full surplus). Now, after
fixing $(x_1, t_1)$ we need to guarantee that for all the other contracts buyer
with $\theta = 1$ is not able to extract any utility from them.

We claim that there is an unique optimal solution and the support of $x_\theta$
is $0, 1-(\theta-1)\delta, 1-(\theta-2)\delta, \hdots, 1$. Moreover, in the
optimal solution, the utility a buyer of type $\theta$ extracts from his own
contract is zero, i.e., $\sum_q x_\theta(q) v_\theta (q) - t_\theta =
v_\theta(p)$.
We show this by induction on $\theta$: in order to find the
optimal $(x_\theta, t_\theta)$ one needs to solve the following program (we are
using the fact that the players $\theta' < \theta$ get zero utility from their
own contracts):

$$
\begin{aligned}
& \max t_\theta \text{ s.t. } \\
& \quad \begin{aligned}
& \sum_{q \in Q} v_\theta(q) x_\theta(q) - t_\theta \geq 0, , & (IR_\theta)\\
& \sum_{q \in Q} v_{\theta'}(q) x_\theta(q) - t_{\theta} \leq 0, \ \  \forall
\theta' < \theta, &(IC_{\theta'}, \theta' < \theta)  \\
& \sum_{q \in Q} (q, 1-q) \cdot x_\theta(q) = (1/2, 1/2), &(F_\theta)\\
&  x_\theta(q) \geq 0,
\end{aligned} \end{aligned}
$$

Substituting the values of $v_\theta(\cdot)$, we get:

$$
\begin{aligned}
& \sum_{j=0}^{\theta-1} \delta^{\theta+j-1} x_\theta(1-j\delta)  \geq
t_\theta\\
& \sum_{j=0}^{\theta'-1} \delta^{\theta'+j-1} x_\theta(1-j\delta) \leq
t_\theta, \forall
\theta' =1, ... ,\theta-1\\
\end{aligned} $$

Now we claim that in the solution maximizing $t_\theta$ all the inequalities
above must be tight. Since it is a triangular system, it guarantee that the
support of $x_\theta$ needs to be of size $\theta+1$.

Consider an optimal $(x_\theta, t_\theta)$ maximizing $t_\theta$ and suppose
that one of the constraints above is not tight. Clearly if the IR constraint is
not tight, we can simply increase $t_\theta$, which is an absurd since the
solution is optimal.

Now, suppose that one of the IC constraints are not tight.
Then let $\theta'$ be the largest $\theta'$ such that $IC_{\theta'}$ is not
tight. Now, it must be the case that $x_\theta(1-\theta'\delta) > 0$. In order
to see that, consider two cases (i) $\theta' < \theta-1$ : this would
imply that constraint $IC_{\theta'+1}$ is also slack, violating the maximality
of $\theta'$ and (ii) $\theta' = \theta-1$ : this would cause the
solution to be infeasible due to a conflict between $IC_{\theta-1}$ and $IR$.

Given that, one can construct a new solution by decreasing
$x_\theta(1-\theta'\delta)$ by a small amount $\rho$ and increasing
$x_\theta(1-(\theta'-1)\delta)$ by the amount $\delta \rho$. This keeps IR and
IC constraints still valid, but violates $\sum_{q \in Q} (q, 1-q) \cdot
x_\theta(q) = (1/2, 1/2)$. It is simple to see one can increase $x_\theta(0)$
and $x_\theta(1-(\theta-1)\delta)$ by suitable values to restore this
constraint. By doing so, we do not violate any of the IC constraints and we make
the IR constraint a little bit more slack, which allows us to increase
$t_\theta$. Since it was optimal from beginning, then all IC constraints must
be tight.

Now, since all IC constraints and IR constraints are tight, and together they
form a triangular system, it is easy to see the solution must have support
$\theta+1$. Now, summing for all $\theta$, we get $\sum_\theta \Vert x_\theta
\Vert_0 = \sum_\theta 1+\theta$ exactly matching the bound in theorem
\ref{thm:quadratic support} for $m=2$.
\end{example}

\section{Proofs omitted from Section~\ref{sec:correlated}}
\label{sec:ap-correlated}

\begin{proof}[of Theorem~\ref{thm:rev-principle-correlated}]
Using the same notation of Theorem \ref{thm:rev-principle-independent}, we
can mimic that proof, but whenever constructing the new protocol, we add a
seller node just after the buyer node with a transition to a copy of $L$ using
$Z(\omega, \theta)$ in the $\theta$-branch. Now, for each child of the seller
node, if it corresponds to a leaf $\ell$ in the original protocol, add a
transfer node with amount $\tau(\ell)$.

Again, the buyer committed strategy where each buyer reports his type truthfully
generates the same outcomes and payments as the original protocol. We only need
to argue that it is optimal. The argument is very similar to the one in
Theorem \ref{thm:rev-principle-independent}, however the inversion of the buyer
and seller nodes allows us to prove point 2 without independence. We notice
that if a buyer of type $\theta$ plays using $\theta'$ he has the same
distribution over leaves as he would get if he played $\phi^{\theta'}$ in the
original protocol - by the new construction, it also means the same
distribution over transfers.
\end{proof}

\xhdr{Computing the optimal Pricing Outcomes Mechanism}
Consider a buyer of type $\theta$ that buys a
contract $(Y,t)$. Let $Y$ be represented by a family $\psi^\omega \in
\Delta(\sspace)$ for each $\omega \in \Omega$.
If a signal $s \in \sspace$ is observed by a buyer with type $\theta$
and also by an outside observer, the two observers have different beliefs
on $\omega$ --- the buyer's belief is $\mu(\cdot | \theta)$ while
the outside observer's is $\mu(\cdot)$ --- so Bayes' rule dictates
their respective posteriors as follows.
$$q^s_\theta (\omega) = \prob(\omega \vert \theta, s) =  \frac{\prob(\omega,
s, \theta)}{\prob(s,\theta)} = \frac{\prob(\omega,
s) \cdot \mu(\theta \vert \omega)}{\prob(s) \cdot \prob(\theta \vert s)},
\qquad
q^s(\omega) = \prob(\omega \vert s) = \frac{\prob(\omega, s)}{\prob(s)}
 $$
So, up to rescaling there is a very simple linear transformation that relates
the posteriors of the buyer of type $\theta$ for a signal and the posterior of
the outside observer. Let $D_\theta$ be an $\abs{\Omega} \times \abs{\Omega}$
diagonal matrix with $\mu ( \theta \vert \omega)$ in the diagonal. Then, we can
write: $$q^s_\theta = \frac{D_\theta \cdot q^s}{\prob(\theta \vert s)} \qquad
\qquad (BU)$$
Given the 1-1 mapping between posteriors from the perspective of buyer of type
$\theta$ and posteriors with respect to an outside observer, from this point
on, whenever referring to posteriors, we always refer to posteriors with
respect to the outside observer and apply the suitable transformation when
necessary.

\xhdr{Representation of contracts} Analogously to what was done in
the independent case, we represent $Y$ by a distribution over
(outside observer) posteriors.
We can use Observation~\ref{obs:no-duplicated-posteriors} to see that it is possible to assume w.l.o.g.\ that
the posteriors associated with each $s \in \sspace$ are different. So, we
can represent $Y_\theta$ as a distribution over the set of posteriors
$\Delta(\Omega)$. More formally, we represent $Y$ by $x : \Delta(\Omega)
\rightarrow \R_+$ with finite support satisfying the feasibility condition in
Observation \ref{obs:feasibility-posteriors}.
Similarly $t$ can be represented as
$t : \Delta(\Omega) \rightarrow \R$, simply by associating each signal with its
posterior. Given that, we can
represent the valuation that a buyer of type
$\theta$ has
for contract $(Y,t)$ as:
$$\begin{aligned}
& \E_Y [ \max_a \E[u(\omega,\theta, a) \vert \theta, Y]] =
\sum_{s \in \sspace} \prob(Y=s \vert \theta) \cdot v_\theta\left( \frac{D_\theta
q^s}{\prob(\theta \vert s)} \right) =
\sum_{s \in \sspace} \frac{\prob(s \vert \theta)}{\prob(\theta \vert s)} \cdot
v_\theta(D_\theta q^s) =
\\
& \quad
= \sum_{s \in \sspace} \frac{\prob(s)}{\prob(\theta)} v_\theta(D_\theta q^s)
= \frac{1}{\prob(\theta)} \cdot \sum_{q}
v_\theta(D_\theta q) \cdot x(q)
  \end{aligned}
 $$
where the first line uses the fact that $v_\theta$ is homogeneous
of degree 1, and the transition from the first to the second line
makes use of Bayes' rule in the form
$\prob(s \vert \theta) / \prob(\theta \vert s) =
\prob(s) / \prob(\theta)$.
The buyer's expected payment for this contract is given by:
$$\begin{aligned}
&\E_Y[t(Y=s) \vert \theta
] = \sum_{s \in \sspace} t(s) \cdot
\frac{\sum_\omega \prob(s, \omega, \theta)}{\prob(\theta)} =
\frac{1}{\prob(\theta)} \sum_{s \in \sspace} t(s) x(q^s) \sum_\omega q^s(\omega)
\mu(\theta \vert \omega) = \\ & \quad = \frac{1}{\prob(\theta)} \sum_{q} t(q)
x(q) \cdot (\one^t D_\theta q)&
\end{aligned}$$
where $\one$ is the vector in $\R^\Omega$ with all $1$ coordinates.\\

\noindent \textbf{Linear Programming Formulation:} Now, in order to optimize the
revenue from the protocol we need to design a menu of contracts that is
\emph{valid}, in the sense that player $\theta$ indeed chooses contract
$(Y_\theta, t_\theta)$ and this choice is voluntary. One can define this in a
similar manner to Definition \ref{dfn:valid_contracts} making the appropriate
changes. Since the definition and follow up discussion very closely mirrors the
one in the independent case, we proceed to designing the LP.  As in the independent case, there is an ``interesting posteriors lemma'' that allows us to limit our attention to contracts whose posterior vectors belong to a predefined finite set $Q^*$.  This time the set $Q^*$ depends not only on $u$ but on $\mu$, since the operator $D_\theta$ depends on $\mu$.

\begin{lemma}[Interesting Posteriors for correlated signals]
Given $u:\Theta \times \Omega \times A \rightarrow \R$ and $\mu \in \Delta(\Theta \times \Omega)$, there is a finite set
$Q^* \subset \Delta(\Omega)$ such that for all $\mu$, the
maximum revenue that can be extracted by any
protocol can also be extracted by a protocol that is limited
to use posteriors in $Q^*$.
Moreover, all the
elements $q \in Q^*$ can be represented with polynomially many bits.
\end{lemma}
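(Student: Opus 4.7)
The plan is to mimic the proof of Lemma~\ref{lemma:interesting-posteriors}, but with two modifications that reflect the features specific to the correlated case: the value functions that must be ``linear on each region'' are now composed with the Bayesian update operators $D_\theta$, and the redistribution must simultaneously move probability mass and rescaled transfers. As explained in the derivation of the convex program for Pricing Outcomes, we work with the primal variables $x_\theta(q)$ and $\tilde t_\theta(q)=x_\theta(q)\,t_\theta(q)$; in these variables the objective, the IR/IC inequalities, and the feasibility equations $\sum_q x_\theta(q)\,q(\omega)=\mu(\omega)$ all become finite weighted sums whose ``weights'' are either the linear function $q\mapsto\one^t D_\theta q$ or the piecewise-linear functions $q\mapsto v_\theta(D_\theta q)$.

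First I would build the partition. For each $\theta$, the map $q\mapsto v_\theta(D_\theta q)=\max_{a\in A}\sum_\omega u(\theta,\omega,a)\,\mu(\theta\mid\omega)\,q(\omega)$ is the maximum of $|A|$ linear functions of $q$, so it induces a partition of $\Delta(\Omega)$ into $|A|$ polytopes on which it is linear. Let $\Pi$ be the common refinement of these $|\Theta|$ partitions, so that on each full-dimensional region $R\in\Pi$ every $v_\theta(D_\theta\cdot)$ is a single linear function (the purely linear map $q\mapsto\one^t D_\theta q$ is of course already linear on $R$). Define $Q^\ast$ to be the vertex set of $\Pi$.

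Next I would run the redistribution argument. Fix an optimal LP solution $(x_\theta,\tilde t_\theta)_\theta$ and suppose some $q\notin Q^\ast$ lies in the support of $x_\theta$. Locate the region $R\in\Pi$ containing $q$, write $q=\sum_i\gamma_i q_i$ as a convex combination of vertices $q_i\in Q^\ast$ of $R$, and update
\[
x_\theta(q_i)\;\leftarrow\;x_\theta(q_i)+\gamma_i\,x_\theta(q),\qquad
\tilde t_\theta(q_i)\;\leftarrow\;\tilde t_\theta(q_i)+\gamma_i\,\tilde t_\theta(q),
\]
and then zero out $x_\theta(q)$ and $\tilde t_\theta(q)$. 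The feasibility equation is preserved because $\sum_i\gamma_i q_i=q$; the IR and IC expressions for type $\theta$ are preserved because $v_\theta(D_\theta\cdot)$ and $\one^t D_\theta(\cdot)$ are linear on $R$; and the IC expressions relating other types to contract $\theta$ are preserved because each $v_{\theta'}(D_{\theta'}\cdot)$ is also linear on $R$. The objective, being a linear combination of the $\tilde t_\theta(q_i)(\one^t D_\theta q_i)$ values, is likewise unchanged. Iterating over all offending $(\theta,q)$ pairs terminates in finitely many steps and produces an optimal solution supported on $Q^\ast$.

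Finally, for the bit-complexity claim, every vertex of $\Pi$ is the unique solution of a nonsingular linear system whose coefficients are entries of $u$ and of the diagonal matrices $D_\theta$ (hence entries of $\mu$); by Cramer's rule the coordinates of such a vertex are ratios of polynomially-sized determinants in these entries, and therefore have polynomially many bits. The main obstacle is the subtle one already flagged: because the payment associated to a signal now depends on $q$, the naive redistribution of $x_\theta(q)$ alone would alter the expected payment, so one must redistribute $\tilde t_\theta(q)$ in the same proportions and exploit linearity of $\one^t D_\theta q$ on $R$ to conclude that both the seller's revenue and every buyer's perceived payment are preserved.
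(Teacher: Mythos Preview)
Your proposal is correct and follows essentially the same approach as the paper: define the common refinement $\Pi$ of the partitions induced by the piecewise-linear functions $q\mapsto v_\theta(D_\theta q)$, take $Q^\ast$ to be its vertex set, and redistribute any off-vertex mass to the vertices of the containing cell. The paper's proof sketch is terser---it simply says the argument ``exactly parallels'' Lemma~\ref{lemma:interesting-posteriors} after composing each $v_\theta$ with $D_\theta$---whereas you spell out the one genuinely new wrinkle: in $\mathbf{LP_2}$ the transfers $\tilde t_\theta(q)$ are also indexed by $q$, so one must redistribute them in the same convex proportions and use the (global) linearity of $q\mapsto \one^t D_\theta q$ to see that the objective and the payment terms in IR/IC are preserved. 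That extra care is warranted and your handling of it is correct.
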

\begin{proof}[sketch]
The proof exactly parallels the proof of Lemma~\ref{lemma:interesting-posteriors}, except that now we must use the linear transformation $D_\theta$
to translate every buyer's piecewise-linear valuation function
$v_\theta$ into the outside observer's frame of reference.  In
other words, we partition $\Delta(\Omega)$ into polyhedral
pieces on which each of the function $q \mapsto v_\theta(D_\theta q)$
is linear, and let $Q^*$ be the set of vertices of the regions in
this partition.
\end{proof}

Let us continue deriving the LP formulation of the revenue maximization
problem.  The variables in the LP are
$(x_\theta(q), \tilde{t}_\theta(q))$ for all
$\theta \in \Theta, q \in Q^*$. We define $\tilde{t}_\theta(q) =
t_\theta(q) \cdot x_\theta(q)$. We write a linear program
that expresses maximization of revenue subject to
the analogues of the usual IR, IC and feasibility constraints for the
correlated case.

$$\begin{aligned}
& \mathbf{LP_2:} \max \sum_\theta \sum_q (\one^t D_\theta q) \cdot
\tilde{t}_\theta \text{ s.t. }\\
& \quad \begin{aligned}
& \sum_q v_\theta(D_\theta q) x_\theta(q) - (\one^t D_\theta q)
\tilde{t}_\theta (q) \geq v_\theta(D_\theta p), & \forall \theta,
&\quad(IR_\theta)
\\
&  \sum_q v_\theta(D_\theta q) x_\theta(q) - (\one^t D_\theta q)
\tilde{t}_\theta (q)  \geq \sum_q v_\theta(D_\theta q)
x_{\theta'}(q) - (\one^t D_\theta q) \tilde{t}_{\theta'} (q), & \forall \theta'
\neq \theta, &\quad(IC_{\theta,\theta'}) \\
& \sum_q x_\theta(q) \cdot q(\omega) = \mu(\omega), & \forall \theta,\omega,
&\quad(F_\theta) \\
& x_\theta(q) \geq 0,  &\forall q, \theta
\end{aligned} \end{aligned}$$

\begin{remark}
Before solving $\mathbf{LP_2}$ to find the optimal mechanism, it is important
to remark that the variables of the LP correspond to the posteriors as seen an
{\em outside observer}, i.e., someone that knows $\mu$ but only observes the
signal that the seller outputs and not the value of $\theta$. Equation $(BU)$
shows how to calculate the posteriors of the buyer given the posterior of the
outside observer. It is instructive to consider the feasibility constraint
$(F_\theta)$ in $\mathbf{LP_2}$ when viewed from the perspective of the buyer.
First, we can re-write $(F_\theta)$ as:
$\sum_s \prob(s) \cdot q^s(\omega) = \mu(\omega) $, multiplying by $\mu(\theta
\vert \omega)$ and substituting by the definition of $q^s_\theta$ given by
the Bayesian update equation $(BU)$ we have:
$$\sum_s \prob(s) \cdot q^s_\theta(\omega) \cdot \prob (\theta \vert s) =
\mu(\omega, \theta)$$
Now, by manipulating the probabilities, we get:
$$\sum_s \prob(s\vert \theta) \cdot q^s_\theta(\omega) =
\mu(\omega \vert \theta)$$
which shows that the feasibility constraints from the perspective of the buyers
is simply the feasibility constraints from the perspective of the observer
transformed by the Bayesian update $(BU)$ operator.
\end{remark}

Solving $\mathbf{LP_2}$ can be done using exactly the
same techniques used in the
independent case: passing to the dual and using convex programming to design a
separation oracle. We omit the details on how to solve it, since it essentially
mirrors the independent case.

Once we have a solution
$x_\theta(q), \tilde{t}_\theta(q)$ with support on $\hat{Q} \subseteq Q^*$, we
want to recover a solution $(x_\theta, t_\theta)$.
Clearly one can obtain $t_\theta(q)$ by taking $\tilde{t}_\theta(q) /
x_\theta(q)$ whenever $x_\theta(q)$ is positive.  What if it is equal to zero?
In order to solve this problem we take the following steps: (1) we show that
there is one feasible solution with revenue
at least $(1-\epsilon)$ of the optimal, where all the IC and IR constraints are
slack (2) then we increase $x_\theta(q)$ slightly for all $q \in \hat{Q}$ and
renormalize so that feasibility holds and (3) we safely calculate
$t_\theta(q) = \tilde{t}_\theta(q) / x_\theta(q)$, since now we are sure that
$x_\theta(q)$ is positive. The following lemmas show how to perform steps (1)
and (2):

\begin{lemma}[Strict Preferences]\label{lemma:strict-preferences}
Given an optimal solution $(x_\theta,
\tilde{t}_\theta)$ to $\mathbf{LP_2}$, then there is a solution $(x'_\theta,
\tilde{t}'_\theta)$ whose objective is at least a $(1-\epsilon)$-fraction of the
optimal, such that all IC and IR constraints are slack, except
possibly $IC_{\theta,\theta'}$ when
$(x'_\theta,\tilde{t}'_\theta) = (x'_{\theta'}, \tilde{t}'_{\theta'})$,
i.e., when the contracts for $\theta$ and $\theta'$ are exactly the same.
\end{lemma}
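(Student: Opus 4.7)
The plan is to take a convex combination of the given optimal solution $(x,\tilde{t})$ with a carefully chosen auxiliary feasible solution $(x^\circ,\tilde{t}^\circ)$ whose IR and IC constraints are all strict (between distinct contracts). Because the feasible region of $\mathbf{LP_2}$ is a convex polytope, feasibility is preserved under mixing, and strict inequality in a linear constraint is inherited from either summand, so this is the natural mechanism to convert a vertex-type optimum into a ``strictly interior'' near-optimum. I expect the proof to break into three steps: constructing $(x^\circ,\tilde{t}^\circ)$, forming the mixture, and bounding the revenue loss.

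For the auxiliary, I would first try the fully-revealing contract: set $x^\circ_\theta(e_\omega)=\mu(\omega)$ for every $\omega\in\Omega$ and $\tilde{t}^\circ_\theta\equiv 0$. Feasibility $(F_\theta)$ is immediate. By Jensen's inequality applied to the convex function $q\mapsto v_\theta(D_\theta q)$ and the fact that the posteriors $\{D_\theta e_\omega\}_\omega$ average to $D_\theta p$, the IR inequality holds, and is strict precisely when $\xi(\theta)>0$. For the degenerate types with $\xi(\theta)=0$ no revenue is extractable in the optimum, so strict IR can be obtained by giving them a tiny negative payment. To additionally get strict IC between every pair of distinct types, I would perturb by replacing $x^\circ_\theta$ with $(1-\gamma)x^\circ_\theta+\gamma z_\theta$, where the $z_\theta$'s are feasible distributions chosen in ``general position'' so that each type $\theta$ strictly prefers its own auxiliary contract: the set of feasible $x$ for which two types tie in $\sum_q v_\theta(D_\theta q) x(q)$ is a proper algebraic subvariety of the feasibility polytope, so generic perturbations avoid it.

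Next, for a small $\delta>0$ I would set
$$(x'_\theta,\tilde{t}'_\theta)=(1-\delta)(x_\theta,\tilde{t}_\theta)+\delta(x^\circ_\theta,\tilde{t}^\circ_\theta).$$
Linearity of $(F_\theta)$ gives feasibility. Each IR or IC constraint evaluated on the mixture is the convex combination of its values on the two summands; since the auxiliary contributes a strictly positive slack and the original contributes a non-negative slack, the combined constraint is strict. For pairs $(\theta,\theta')$ where the combined contracts happen to coincide, $IC_{\theta,\theta'}$ is trivially tight, which is exactly the exception allowed by the lemma; otherwise at least one of the original or auxiliary contracts differs across $\theta$ and $\theta'$, so the auxiliary's strict IC makes the combined IC strict as well. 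The revenue equals $(1-\delta)R^* + \delta\cdot \mathrm{obj}(x^\circ,\tilde{t}^\circ)$, and by arranging the auxiliary's payments to be zero (or a tiny rebate proportional to $\delta$), this is at least $(1-\epsilon)R^*$ once $\delta=\Theta(\epsilon)$.

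The main obstacle I anticipate is verifying strict IC in the auxiliary for \emph{every} pair of distinct types: one must ensure that the $z_\theta$ perturbations simultaneously make type $\theta$ strictly prefer $x^\circ_\theta$ over $x^\circ_{\theta'}$ for all $\theta'\neq\theta$, without breaking feasibility $\sum_q x^\circ_\theta(q)q=p$. A clean way to handle this is an inductive selection: pick $z_{\theta_1},z_{\theta_2},\dots$ one at a time, at each step choosing a feasible distribution that avoids the finitely many lower-dimensional ``tie'' hyperplanes determined by previously chosen auxiliaries. Everything else reduces to a routine linearity check.
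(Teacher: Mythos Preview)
Your approach is genuinely different from the paper's, and substantially more complicated. The paper does not construct any auxiliary solution or mix anything; it simply keeps $x_\theta$ unchanged and scales all payments down, setting $\tilde{t}'_\theta=(1-\epsilon)\tilde{t}_\theta$. The revenue loss is exactly a factor $(1-\epsilon)$, and every previously tight IR constraint becomes strict because the payment term shrinks. For IC, the paper first observes a clean structural fact you missed: whenever $IC_{\theta,\theta'}$ is tight in the optimum with distinct contracts, one may assume without loss that $\theta$'s expected payment strictly exceeds what $\theta$ would pay under $\theta'$'s contract (otherwise replacing $(x_\theta,\tilde{t}_\theta)$ by $(x_{\theta'},\tilde{t}_{\theta'})$ weakly increases revenue, and if payments are equal one simply makes the contracts identical). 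Given this, scaling payments by $(1-\epsilon)$ helps the left-hand side of $IC_{\theta,\theta'}$ more than the right, turning the tight constraint strict.

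Your mixing argument, by contrast, has a real gap at the step where you claim ``the set of feasible $x$ for which two types tie \dots\ is a proper algebraic subvariety.'' This is false precisely for types $\theta$ with $\xi(\theta)=0$: then $q\mapsto v_\theta(D_\theta q)$ is affine, so $\sum_q v_\theta(D_\theta q)x(q)=v_\theta(D_\theta p)$ for \emph{every} feasible $x$, and no perturbation of the information structure can make such a $\theta$ strictly prefer its own auxiliary contract. You gesture at fixing this with tiny rebates, but a rebate to type $\theta$ is attractive to every other type as well, so you would need a further argument to re-establish strict IC for the other types against $\theta$'s auxiliary. This can likely be repaired, but it is exactly the delicate bookkeeping the paper's one-line ``scale down payments'' trick avoids.
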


\begin{proof}
 First, suppose a certain IC constraint is tight, say:
 $$\sum_q v_\theta(D_\theta q) x_\theta(q) - (\one^t D_\theta
q) \tilde{t}_\theta (q) = \sum_q v_\theta(D_\theta q)
x_{\theta'}(q) - (\one^t D_\theta q) \tilde{t}_{\theta'} (q). $$
Then it must be the case that $(\one^t D_\theta q)
\tilde{t}_\theta (q) \geq (\one^t D_\theta q) \tilde{t}_{\theta'} (q)$.
Otherwise one could replace $(x_\theta, \tilde{t}_\theta)$ with $(x_{\theta'},
\tilde{t}_{\theta'})$ as buyer $\theta$'s contract and improve revenue
while still getting a feasible solution. We can assume
w.l.o.g.\ that either $(\one^t D_\theta q)
\tilde{t}_\theta (q) > (\one^t D_\theta q) \tilde{t}_{\theta'} (q)$ or
$(x_\theta, \tilde{t}_\theta) = (x_{\theta'},
\tilde{t}_{\theta'})$.  Otherwise, if $(\one^t D_\theta q)
\tilde{t}_\theta (q) = (\one^t D_\theta q) \tilde{t}_{\theta'} (q)$, one can
simply replace $(x_\theta, \tilde{t}_\theta)$ with $(x_{\theta'},
\tilde{t}_{\theta'})$.

Now, consider the menu where the contract for a buyer of type $\theta$ is
$(x'_\theta, \tilde{t}'_\theta) = (x_\theta, (1-\epsilon) t_\theta)$. We claim
that for sufficiently small $\epsilon$, this is feasible and all constraints are
slack.

Any constraint that held strictly is still strict for sufficiently small
$\epsilon$. Any tight IR constraint becomes clearly strict for any $\epsilon >
0$; in addition, no IR constraint becomes violated.
Now for the IC constraints that were previously
tight and  $(x_\theta, \tilde{t}_\theta) \neq
(x_{\theta'}, \tilde{t}_{\theta'})$, we know that $(\one^t D_\theta q)
\tilde{t}_\theta (q) > (\one^t D_\theta q) \tilde{t}_{\theta'} (q)$, hence
the inequality becomes strict.
\end{proof}

\begin{lemma}
Given an optimal solution $(x_\theta, \tilde{t}_\theta)$ to $\mathbf{LP_2}$
where all the IC and IR constraints are slack except
possibly $IC_{\theta,\theta'}$ when $(x_\theta,
\tilde{t}_\theta) = (x_{\theta'}, \tilde{t}_{\theta'})$,
there is another solution $(x_\theta,
\tilde{t}'_\theta)$ with the same objective such that $x_\theta(q) > 0$ for all
$q \in \hat{Q}$.
\end{lemma}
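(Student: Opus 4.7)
The plan is to perturb the given optimal solution by mixing each $x_\theta$ with a single auxiliary distribution that is supported on all of $\hat{Q}$, and to match this with a perturbation of $\tilde{t}_\theta$ that preserves the objective exactly. Concretely, I define $y(q) = \frac{1}{\abs{\Theta}}\sum_{\theta'} x_{\theta'}(q)$. Since $\hat{Q}$ equals the union of the supports of the $x_{\theta'}$, we have $y(q)>0$ for every $q\in\hat{Q}$, and linearity of $(F_{\theta'})$ gives $\sum_q y(q) q(\omega) = \frac{1}{\abs{\Theta}}\sum_{\theta'}\mu(\omega) = \mu(\omega)$, so $y$ itself satisfies the feasibility constraint. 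I then set $x'_\theta = (1-\epsilon)x_\theta + \epsilon y$; the resulting allocation is nonnegative, satisfies $(F_\theta)$, and has $x'_\theta(q)>0$ for every $q\in\hat{Q}$ as required.

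For the payments I need to choose $\tilde{t}'_\theta$ so that the objective $\sum_\theta\sum_q c_\theta(q)\tilde{t}_\theta(q)$ is preserved exactly, where $c_\theta(q) := \one^t D_\theta q$. The key identity is
$$\textstyle\sum_\theta c_\theta(q) \;=\; \sum_\omega q(\omega)\sum_\theta \mu(\theta\mid\omega)\;=\;1.$$
Using this, I define $\bar{t}(q) = \sum_{\theta'} c_{\theta'}(q)\,\tilde{t}_{\theta'}(q)$ and set $\tilde{t}'_\theta = (1-\epsilon)\tilde{t}_\theta + \epsilon\bar{t}$. A one-line calculation then yields $\sum_\theta c_\theta(q)\tilde{t}'_\theta(q) = (1-\epsilon)\bar{t}(q) + \epsilon\bar{t}(q)\cdot 1 = \bar{t}(q) = \sum_\theta c_\theta(q)\tilde{t}_\theta(q)$, so the per-posterior contribution, and hence the total objective, are unchanged.

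What remains is to verify IR and IC. Since every constraint of $\mathbf{LP_2}$ is linear in $(x,\tilde{t})$, each new LHS is $(1-\epsilon)$ times the old LHS plus $\epsilon$ times a bounded quantity depending on $y$ and $\bar{t}$. By the hypothesis of the lemma, every non-exceptional IR and IC constraint is satisfied with strict slack in $(x,\tilde{t})$; this fixed positive slack dominates the $O(\epsilon)$ perturbation once $\epsilon$ is chosen small enough, so the perturbed constraints hold. The exceptional IC constraints, where $(x_\theta,\tilde{t}_\theta)=(x_{\theta'},\tilde{t}_{\theta'})$, pose no issue because identical original contracts receive identical perturbations, so $(x'_\theta,\tilde{t}'_\theta) = (x'_{\theta'},\tilde{t}'_{\theta'})$ and the tight IC continues to hold with equality.

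The delicate point is the choice of $\bar{t}$: unlike the independent case, the objective weight $c_\theta(q)$ multiplying $\tilde{t}_\theta(q)$ depends on $\theta$, so a naive type-blind averaging of the transfers would fail to preserve revenue. Weighting the auxiliary transfer at $q$ by the outside observer's Bayesian probabilities $c_\theta(q)$ is precisely what makes the objective exactly invariant, via the identity $\sum_\theta c_\theta(q)=1$; everything else is then a routine linear perturbation argument driven by the slackness from Lemma~\ref{lemma:strict-preferences}.
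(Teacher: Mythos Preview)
Your proof is correct, but it takes a more elaborate route than the paper's and, in one respect, introduces an unnecessary complication.

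The paper's argument simply keeps $\tilde{t}'_\theta = \tilde{t}_\theta$ and only perturbs $x_\theta$. This works because the objective of $\mathbf{LP_2}$, namely $\sum_\theta\sum_q (\one^t D_\theta q)\,\tilde{t}_\theta(q)$, depends on $\tilde{t}$ alone and not on $x$; hence any perturbation of $x$ that respects feasibility and keeps the slack constraints satisfied preserves the objective for free. Your ``delicate point'' about choosing $\bar{t}$ via the identity $\sum_\theta c_\theta(q)=1$ is clever, but it solves a self-imposed problem: there is no reason to touch $\tilde{t}$ at all.

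Where the two arguments genuinely differ is in the construction of the auxiliary distribution. You take $y = \tfrac{1}{|\Theta|}\sum_{\theta'} x_{\theta'}$, which is supported on the union of the supports of the $x_{\theta'}$ and automatically satisfies $(F)$; this is a slick shortcut that works precisely because $\hat{Q}$ is that union. The paper instead uses the interiority of the prior $p$: for each $q\in\hat{Q}$ it writes $p=\gamma(q)\,q+(1-\gamma(q))\,r(q)$ and injects mass $\tfrac{\delta}{|\hat{Q}|}$ split between $q$ and $r(q)$. That construction is more involved but is oblivious to where the original $x_{\theta}$'s were supported, so it would put positive mass on any prescribed finite $\hat{Q}\subset\Delta(\Omega)$. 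Both arguments handle the tight $IC_{\theta,\theta'}$ constraints the same way: the perturbation is type-independent, so identical contracts remain identical.
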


\begin{proof}
Recall that $p \in \R^\Omega_+$ is the prior on $\omega$, i.e., $p(\omega) =
\mu(\omega)$. We can assume wlog that $p(\omega) > 0, \forall \omega\in \Omega$.
(Otherwise, we can simply restrict our attention to a subset of $\Omega$).
Therefore, $p$ is in the relative interior of $\Delta(\Omega)$. For all $q \in
\Delta(\Omega)$ there is $r(q) \in \Delta(\Omega)$ and $\gamma(q) \in [0,1]$ such
that:
$$p = \gamma(q) \cdot q + (1-\gamma(q)) \cdot r(q)$$
Now, consider $\tilde{t}'_\theta = \tilde{t}_\theta$ for all $\theta$ and
$x'_\theta(q) = (1-\delta) x_\theta(q)$. Now, for all $q \in \hat{Q}$ increase
$x'_\theta(q)$ by $\frac{1}{\abs{\hat{Q}}} \gamma(q) \delta$ and
$x'_\theta(r(q))$ by $\frac{1}{\abs{\hat{Q}}} (1-\gamma(q)) \delta$. If $\delta$
is sufficiently small, the IC and IR constraints are not violated. For the IC
constraints that held with equality and $(x_\theta,
\tilde{t}_\theta) = (x_{\theta'}, \tilde{t}_{\theta'})$, they still hold since
$(x'_\theta,
\tilde{t}'_\theta) = (x'_{\theta'}, \tilde{t}'_{\theta'})$. It is trivial to
check that by construction, the feasibility constraints still hold.
\end{proof}

Also, we would like to observe that as observed in Theorem \ref{thm:quadratic
support} for the Pricing Mappings Mechanism, the optimal protocol in this case
can be represented by a tree of $O(n^2+mn)$ size. Given the structural
similarities between $\mathbf{LP_1}$ and $\mathbf{LP_2}$, the proof follows by
a simple adaptation of the proof of Theorem \ref{thm:quadratic support}.
Formally:

\begin{observation}
Let $n = \abs{\Theta}$ and $m = \abs{\Omega}$. 
The program $\mathbf{LP_2}$ has a solution where
$\Vert x_\theta \Vert_0 \leq O(m+n)$ for all $\theta$.
\end{observation}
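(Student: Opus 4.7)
The plan is to adapt the proof of Theorem~\ref{thm:quadratic support} to $\mathbf{LP_2}$, with two structural modifications forced by the correlated setting. First, each posterior $q$ now carries two variables, $x_\theta(q)$ and $\tilde t_\theta(q)$ (the latter free in sign), rather than one variable plus a single scalar $t_\theta$. Second, since payments depend on the signal, there is no natural scalar ordering of types analogous to sorting by $t_\theta$, so we cannot split the IC constraints into ``downward'' and ``upward'' halves and dispose of one half by a swap argument.

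I would begin by invoking the Interesting Posteriors Lemma for correlated signals to restrict to a finite posterior set $Q^*$, making $\mathbf{LP_2}$ finite-dimensional, and fix an optimal solution $\{(x_\theta,\tilde t_\theta)\}_\theta$. For each $\theta$ in turn, I would form the single-type subproblem: hold $(x_{\theta'},\tilde t_{\theta'})$ fixed for $\theta'\neq\theta$ and maximize $\sum_q(\one^t D_\theta q)\tilde t_\theta(q)$ subject to $(F_\theta)$, $(IR_\theta)$, $(IC_{\theta,\theta'})$ for every $\theta'\neq\theta$, $(IC_{\theta',\theta})$ for every $\theta'\neq\theta$, and $x_\theta(q)\geq 0$. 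The original $(x_\theta,\tilde t_\theta)$ is feasible, and the objective is bounded above by combining $(IR_\theta)$ with the fact that $(F_\theta)$ forces $\sum_q x_\theta(q)=1$; hence the subproblem attains its optimum at a basic feasible solution.

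The support bound follows by counting tight constraints at such a vertex. After splitting the free variable $\tilde t_\theta(q)$ into nonnegative parts $\tilde t_\theta^+(q)-\tilde t_\theta^-(q)$ to put the LP in standard form, the number of strictly positive variables at any vertex is at most the number of linearly independent non-sign-constraint rows: $m$ equalities from $(F_\theta)$ plus at most $2n-1$ tight IR/IC inequalities, for a total of $m+2n-1$. Each $q\in S:=\mathrm{supp}(x_\theta)$ contributes the positive variable $x_\theta(q)$ to this count, so $|S|\leq m+2n-1=O(m+n)$. Replacing $(x_\theta,\tilde t_\theta)$ by this vertex solution preserves feasibility of the full $\mathbf{LP_2}$ because every IC constraint involving $\theta$ was included in the subproblem, and it does not decrease total revenue; iterating over $\theta$ yields the claim.

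The only real point requiring care --- rather than a genuine obstacle --- is the absence of the scalar ordering. In Theorem~\ref{thm:quadratic support}, sorting by $t_\theta$ lets one omit the upward constraints $(IC_{\theta',\theta})$ with $\theta'>\theta$ from the subproblem and reinstate them via a contract swap, yielding the sharper bound $m+\theta-1$. Here, payments being per-signal rule out any analogous ordering, so all $2n-1$ IR/IC inequalities must remain in the subproblem; the resulting factor of two in the $n$-term is harmlessly absorbed into the $O(\cdot)$.
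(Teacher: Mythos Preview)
Your proposal is correct and follows essentially the same approach the paper has in mind: the paper's own argument for this observation is literally the one-line remark that ``the proof follows by a simple adaptation of the proof of Theorem~\ref{thm:quadratic support},'' and you have carried out precisely that adaptation. Your identification of the two places where the argument must change --- the per-signal payment variables $\tilde t_\theta(q)$ replacing the scalar $t_\theta$, and the absence of a price ordering forcing you to keep all $2(n-1)+1$ IR/IC constraints in the single-type subproblem rather than only the downward ones --- is exactly right, and the resulting bound $m+2n-1=O(m+n)$ is what the observation claims.
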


\section{Proofs omitted from Section~\ref{sec:continuity}}
\label{sec:ap-continuity}

This appendix contains the full details of the material
sketched in Section~\ref{sec:continuity}.

\subsection{Semicontinuity Proofs}
\label{subsec:ap-continuity-semicont}

\begin{theorem}\label{thm:lower-semicontinuity}
 The revenue function  $R:\mathcal{C} \rightarrow \R$ is lower-semicontinuos.
\end{theorem}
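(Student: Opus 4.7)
\begin{proofsk}
The plan is to fix a context $(u,\mu)$ and an arbitrary $\epsilon>0$, and to exhibit a mechanism that (i) achieves revenue at least $R(u,\mu) - \epsilon$ at $(u,\mu)$, and (ii) remains valid and has nearly the same revenue throughout an open neighborhood of $(u,\mu)$ in $\mathcal{C}$. This suffices for lower-semicontinuity: for any sequence $(u_i,\mu_i)\to(u,\mu)$, eventually $(u_i,\mu_i)$ lies in that neighborhood, giving $R(u_i,\mu_i)\geq R(u,\mu)-2\epsilon$, so $\liminf_i R(u_i,\mu_i)\geq R(u,\mu)-2\epsilon$, and then we let $\epsilon\to 0$.

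First I would invoke Theorem~\ref{thm:rev-principle-correlated} to replace the optimal mechanism at $(u,\mu)$ by a Pricing Outcomes Mechanism $\mathcal{M}$ given by a menu $\{(x_\theta,\tilde t_\theta)\}_{\theta\in\Theta}$ with posteriors drawn from the finite set $Q^*$, so $\mathcal{M}$ is described by finitely many real parameters and its revenue equals $R(u,\mu)$. Next I would apply Lemma~\ref{lemma:strict-preferences} to obtain a mechanism $\mathcal{M}'$ extracting revenue at least $(1-\epsilon)R(u,\mu)$ and in which every IR and IC inequality in $\mathbf{LP_2}$ holds \emph{strictly}, with the harmless exception of those $IC_{\theta,\theta'}$ where the two types receive identical contracts (a case which creates no difficulty under perturbation since both sides vary together). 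Let $\delta>0$ be a uniform lower bound on the slack across all of these finitely many constraints.

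The key observation is that for a fixed menu $\mathcal{M}'$, every coefficient appearing in $\mathbf{LP_2}$ --- namely $v_\theta(D_\theta q)$, $\one^t D_\theta q$, and the marginals $\mu(\omega)$, $\mu(\theta)$ --- is a continuous function of $(u,\mu)$, since $D_\theta$ depends continuously on $\mu$ (on the locus where $\mu(\theta)>0$; degenerate types carry zero weight in both objective and feasibility, so they can be ignored), and $v_\theta$ depends continuously on $u$. Consequently, the left- and right-hand sides of each IR, IC, and feasibility constraint, as well as the objective, are continuous functions of $(u,\mu)$ evaluated on the fixed menu $\mathcal{M}'$. Choosing a neighborhood $U$ of $(u,\mu)$ small enough that each of these finitely many continuous quantities moves by at most $\delta/2$ and the objective moves by at most $\epsilon R(u,\mu)$ ensures that $\mathcal{M}'$ remains feasible for $\mathbf{LP_2}$ at every $(u',\mu')\in U$ and yields revenue at least $(1-2\epsilon)R(u,\mu)$ there. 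This in turn lower-bounds $R(u',\mu')$ by $(1-2\epsilon)R(u,\mu)$, completing the argument.

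The main obstacle is conceptual rather than technical: one must resist the temptation to let the optimal mechanism itself vary with the context, because the optimal menu is not a continuous function of $(u,\mu)$ (as the failure of upper-semicontinuity shows). The trick is that we freeze a single \emph{near-optimal} mechanism with strict slack, and then continuity of the LP data --- not continuity of the optimizer --- is all that is needed. The role of Lemma~\ref{lemma:strict-preferences} is precisely to convert an arbitrary optimal solution, which may sit on the boundary of the feasible region and be destroyed by arbitrarily small perturbations, into an interior point that survives perturbation.
\end{proofsk}
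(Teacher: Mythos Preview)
Your approach coincides with the paper's: freeze a near-optimal mechanism via Lemma~\ref{lemma:strict-preferences} so that all IR and non-trivial IC constraints are strictly slack, then invoke continuity of these constraints in $(u,\mu)$ to conclude that the same mechanism remains valid for all nearby contexts.

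There is, however, one technical slip in your framing. You parametrize the frozen menu by the $\mathbf{LP_2}$ variables $(x_\theta(q),\tilde t_\theta(q))$ with $q$ ranging over a fixed finite set, and then assert that $\mathcal{M}'$ ``remains feasible for $\mathbf{LP_2}$'' at every $(u',\mu')\in U$. But the constraint $(F_\theta)$ in $\mathbf{LP_2}$ is the \emph{equality} $\sum_q x_\theta(q)\,q(\omega)=\mu(\omega)$: with the left side frozen and the right side perturbed to $\mu'(\omega)$, it is violated for every $\mu'\neq\mu$, no matter how large the slack $\delta$ in the IR/IC inequalities. Slack buys nothing against an equality constraint.

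The repair is to freeze the mechanism as a \emph{protocol} rather than as an $\mathbf{LP_2}$ point: describe each contract by its signaling kernel $\psi^\omega_\theta\in\Delta(\sspace_\theta)$ together with the payment map $t_\theta:\sspace_\theta\to\R$. Such a description is a valid protocol for \emph{any} underlying distribution, so feasibility is automatic and $(F_\theta)$ never enters. The IR and IC inequalities, written directly as expectations over $\omega\mid\theta$ and then $s\sim\psi^\omega_\theta$, are manifestly continuous in $(u,\mu)$, and your slack argument goes through verbatim. This is implicitly how the paper phrases it (``the mechanism is feasible for $(u_i,\mu_i)$'') and explicitly how the compactness proofs of Lemmas~\ref{lemma:continuity_Rc} and~\ref{lemma:continuity_Rp} parametrize mechanisms.
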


\begin{proof}
 Consider a sequence of contexts $(u_i, \mu_i)$ such that $(u_i, \mu_i)
\rightarrow (u, \mu)$, then consider a revenue optimal mechanism for the context
$(u, \mu)$. Invoking Lemma \ref{lemma:strict-preferences} we know that for every
$\epsilon > 0$, there is a mechanism for context $(u,\mu)$ generating revenue
at least $(1-\epsilon) \cdot R(u,\mu)$ such that all IR constraints are slack
and all IC constraints for non-identical-contracts are slack. Given that the
inequalities are strict and that the constraints are continuous in the context,
then for sufficiently large $i$, the mechanism is feasible for $(u_i,\mu_i)$,
so: $\liminf R(u_i, \mu_i) \geq R(u,\mu) - \epsilon$ for all $\epsilon > 0$.
\end{proof}

We note that a proof in the same spirit for the Cremer and McLean setting can
be found in Robert \cite{robert91}. Notice that the heart of the proof is the
application of Lemma \ref{lemma:strict-preferences}. In this specific
ingredient, our proof is different then the one of Robert.

The proof of Theorem \ref{thm:lower-semicontinuity} can be interpreted as a
robustness result: if the seller believes that the context $(u, \mu)$ and
designs an protocol extracting $R(u, \mu)$, but the real context is actually
slightly different $(u', \mu')$, the seller can design a protocol extracting
revenue $(1-\epsilon) \cdot R(u, \mu)$ from any context $(u', \mu')$ that is
sufficiently close to $(u, \mu)$.

This does not mean, however, that the seller is close to the
optimal revenue, since it might be the case that the a context $(u', \mu')$
allows the seller to extract a lot more revenue then $(u, \mu)$ even though the
two contexts are close.

\xhdr{Two Continuous Mechanisms}
In what follows we present two mechanisms that fix the two highlighted problems
of
the \emph{Pricing Outcomes Mechanism}: they are continuous and they do not
involve transfers from the seller to the buyers. Then we compare their
revenue-extration power with the optimal mechanism. For independent
$(\omega,\theta)$, they extract optimal revenue, but there are contexts with
correlated $(\omega,\theta)$ where
they extract only a $\Omega(\abs{\Theta}^{-1})$-fraction of the optimal
revenue.\\

\noindent \textbf{1. Pricing Outcomes with No Positive Transfers:} The auctioneer
offers to the buyers a menu of contracts $\{(Y_\theta, t_\theta)\}_{\theta}$,
where $Y_\theta$ is a random variable taking values in $\sspace_\theta$ and
$t_\theta: \sspace \rightarrow \R_+$. The only change with respect to the
optimal mechanism is that we require $t_\theta(s) \geq 0, \forall s \in
\sspace_\theta$. To find the optimal mechanism of this type, one can simply
solve $\mathbf{LP_2}$ with the additional
constraints that $\tilde{t}_\theta(q) \geq 0, \forall \theta,q$.\\

\noindent \textbf{2. Pricing Mappings:} The auctioneer
offers to the buyers a menu of contracts $\{(Y_\theta, t_\theta)\}_{\theta}$
where  $Y_\theta$ is a random variable taking values in $\sspace_\theta$ and
$t_\theta \in \R_+$ is a fixed value that the buyer needs to pay in order to
get the signal $Y_\theta$. The optimal mechanism of this type can be found using
a simplified version of $\mathbf{LP_2}$ :
simply substitute all the occurrences of $\sum_q (\one^t D_\theta q)
\tilde{t}_{\theta'} (q)$ by $\mu(\theta) \cdot t_{\theta'}$.\\

We presented four types of mechanisms which are increasing in complexity and
revenue-extracting power. The following diagram expresses which types are a
special case of which:

{ \small
$$\left\{ \begin{array}{c} \text{Sealed} \\ \text{Envelope} \end{array} \right\}
\subseteq
\left\{ \begin{array}{c} \text{Pricing} \\ \text{Mappings} \end{array} \right\}
\subseteq
\left\{ \begin{array}{c} \text{Pricing Outcomes} \\ t\geq 0
\end{array} \right\}
\subseteq
\left\{ \begin{array}{c} \text{Pricing} \\ \text{Outcomes} \end{array}
\right\}$$
}

For a given context $(u,\mu) \in \mathcal{C}$, we define $R_e(u,\mu)$,
$R_c(u,\mu)$ and $R_p(u,\mu)$ as the optimal revenue that can be obtained using
a Sealed Envelope Mechanism, Pricing Mappings Mechanism and Pricing Outcomes
With No Positive Transfers Mechanism, respectively. We know that:

$$R_e(u,\mu) \leq R_c(u,\mu) \leq R_p(u,\mu) \leq R(u,\mu)$$

And by Theorem
\ref{thm:rev-principle-independent} we know that for independent $\mu$, i.e,
$\mu(\omega,\theta) = \mu(\omega) \cdot \mu(\theta)$, we have:

$$R_e(u,\mu) \leq R_c(u,\mu) = R_p(u,\mu) = R(u,\mu)$$

First we will show that $R_p$, $R_c$ and $R_e$ are continuous in $\mathcal{C}$
and then we analyze the gap in their revenue extraction power. By a similar
argument then the one in Theorem
\ref{thm:lower-semicontinuity}, it is simple to see that they are
lower-semicontinuous. In order to prove continuity, we only need to show
upper-semicontinuity. We say that a function $f:\R^k \rightarrow \R$ is
{\em upper-semicontinuous} if for
for all converging sequences $x_i \rightarrow x$, $f(x) \geq \limsup f(x_i)$. A
function is continuous if it is both lower-semicontinuous and
upper-semicontinuous.

\begin{lemma}\label{lemma:continuity_Rc}
 The revenue function $R_c:\mathcal{C} \rightarrow \R$ is
upper-semicontinuous.
\end{lemma}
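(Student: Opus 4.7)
The plan is to prove upper-semicontinuity of $R_c$ by a compactness argument on the space of Pricing Mappings Mechanisms. Fix a sequence $(u_i,\mu_i) \to (u,\mu)$ in $\mathcal{C}$; after passing to a subsequence it suffices to handle the case in which $R_c(u_i,\mu_i) \to L := \limsup_i R_c(u_i,\mu_i)$, and then show $R_c(u,\mu) \geq L$. For each $i$, fix an optimal Pricing Mappings Mechanism $\mathcal{M}_i$ for $(u_i,\mu_i)$, represented as a menu $\{(Y^i_\theta, t^i_\theta)\}_{\theta\in\Theta}$, where $Y^i_\theta$ is specified by conditional distributions $\psi^{\omega,i}_\theta \in \Delta(\sspace_\theta)$.

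The first step is to place all the $\mathcal{M}_i$ into a common compact parameter space. By Theorem~\ref{thm:quadratic support}, we may assume without loss of generality that $|\sspace_\theta| \leq m+n-1$ for every $\theta$, so we identify each signal set with $\sspace := [m+n-1]$; then each $\psi^{\omega,i}_\theta$ lies in the compact simplex $\Delta(\sspace)$. For the transfers, the constraint $t^i_\theta \geq 0$ from $\mathbf{LP_1}$ and individual rationality give $0 \leq t^i_\theta \leq \E_{\omega \sim \mu_i}[\max_a u_i(\theta,\omega,a)] \leq \|u_i\|_\infty$, and since $u_i \to u$ in $\R^{\Theta \times \Omega \times A}$ the right-hand side is uniformly bounded for all sufficiently large $i$. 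Hence the parameters $(\psi^i, t^i)$ describing $\mathcal{M}_i$ lie in a fixed compact set independent of $i$.

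The second step is to extract a convergent subsequence $(\psi^i, t^i) \to (\psi^*, t^*)$ and verify that the limit $\mathcal{M}^*$ is a valid Pricing Mappings Mechanism for $(u,\mu)$. Clearing the $1/\mu(\theta)$ factor, the IR$_\theta$ inequality takes the form
$$\textstyle \sum_{s\in\sspace} \max_a \sum_\omega \mu(\omega,\theta)\,\psi^\omega_\theta(s)\, u(\theta,\omega,a) \;-\; \mu(\theta)\, t_\theta \;\geq\; \max_a \sum_\omega \mu(\omega,\theta)\, u(\theta,\omega,a),$$
and IC$_{\theta,\theta'}$ has the analogous polynomial form. Each such expression is jointly continuous in $(\psi,t,u,\mu)$, since the only nonlinearity is a maximum over the finite action set $A$. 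Passing to the limit preserves all IR and IC inequalities together with the nonnegativity constraints, so $\mathcal{M}^*$ is feasible at $(u,\mu)$; its revenue satisfies $\sum_\theta \mu(\theta) t^*_\theta = \lim_i \sum_\theta \mu_i(\theta) t^i_\theta = L$, which certifies $R_c(u,\mu) \geq L$, as required.

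The main obstacle is the first step: without the uniform support bound from Theorem~\ref{thm:quadratic support}, the signal sets $\sspace_\theta$ in $\mathcal{M}_i$ could grow unboundedly with $i$, and the mechanisms would not lie in any single finite-dimensional compact set, obstructing the extraction of a subsequential limit. The uniform transfer bound from IR is easy because utilities are uniformly bounded near $(u,\mu)$, and once both bounds are in place the rest of the argument is a routine continuity check of the defining constraints on a finite-dimensional compact space.
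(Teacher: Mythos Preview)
Your argument is essentially the same compactness proof as the paper's: bound the signal support uniformly, bound the transfers via IR, pass to a convergent subsequence, and use continuity of the IR/IC constraints to certify feasibility of the limit mechanism. One small imprecision: Theorem~\ref{thm:quadratic support} and $\mathbf{LP_1}$ are stated for the independent case, whereas $R_c$ is defined on all of $\mathcal{C}$; the paper instead appeals to the variant of $\mathbf{LP_2}$ for Pricing Mappings and bounds the support by the number of constraints, but your bound extends by the same basic-feasible-solution reasoning, so this does not affect correctness.
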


\comment{
\begin{proof}
Consider a context $(u, \mu) \in \mathcal{C}$ and $(u_i, \mu_i) \rightarrow
(u,\mu)$. The main ingredient of the proof is to show that the optimal contract
for the context $(u_i, \mu_i)$ can be described as an element of a compact set.
Therefore, it is possible to take the limit and find a mechanism for context
$(u,\mu)$.

The optimal mechanism for context $(u_i, \mu_i)$ can be described by a price
vector $t^i \in \R^\Theta$ and random variables $Y^i_\theta$. Since there is an
optimal contract where the support of $Y^i_\theta$ is $O(\abs{\Theta}^2)$
(same argument as in Theorem \ref{thm:quadratic support}), one can consider a
generic $\sspace$ of this size and think of all $Y^i_\theta$ take values in this
set. Therefore, one can represent $Y^i_\theta$ as $\abs{\Omega}$ distributions
$\psi^\omega \in \Delta(\sspace)$. So, $(Y^i_\theta)_{\theta \in \Theta}$ can be
represented as an element of a compact set.

As for the price vector $t^i$, notice that $t^i \leq \buyersurplus^i$ for any
individually rational mechanism, where $\buyersurplus^i$ is the surplus vector
for
context $(u_i, \mu_i)$. If $\buyersurplus$ is the surplus vector for context
$(u,\mu)$ then for $i \geq i_0$, then $\Vert \buyersurplus^i - \buyersurplus
\Vert_\infty \leq 1$. For the $i \geq i_0$, therefore $t^i$ is in the compact
$\prod_\theta [0,\buyersurplus(\theta)+1]$.

We proved that the optimal menu of contracts for each context can be described
as an element of a compact set $K$, we can do the following: we can assume,
passing to a subsequence if necessary that $R_c(u_i, \mu_i)$ is monotone
non-decreasing and converging to the $\limsup R_c(u_i, \mu_i)$. Let
$\mathcal{M}^i \in K$ be the description of the optimal menu for $(u_i,
\mu_i)$. Passing to a subsequence if necessary, we can assume that
$\mathcal{M}^i \rightarrow \mathcal{M}$.

Since the individual rationality and incentive compatibility constraints are
continuous, the fact that $\mathcal{M}_i$ is a valid menu for context $(u_i,
\mu_i)$ implies that $\mathcal{M}$ is a valid menu for context $(u,
\mu)$ and its revenue is $\limsup R_c(u_i,\mu_i)$ and therefore
$R_c(u,\mu) \geq \limsup R_c(u_i,\mu_i)$.
\end{proof}
}

\begin{proof}
Consider a context $(u, \mu) \in \mathcal{C}$ and $(u_i, \mu_i) \rightarrow
(u,\mu)$. Passing to a subsequence if necessary we can assume that $R_c(u_i,
\mu_i)$ converges to the $\limsup$ of the original sequence. Now, we need to
prove that $R_c(u,\mu) \geq \lim R_c(u_i, \mu_i)$. We do so in two steps: (1)
we show that the optimal Pricing Mappings Mechanism can be represented by an
element in a compact set; (2) we use compactness arguments to take the limit of
such mechanism obtaining a mechanism for context $(u,\mu)$.

{\em Compact representation:} The revenue $R_c(u_i,\mu_i)$ is the solution of a
variant the $\mathbf{LP_2}$ described above. As the original $\mathbf{LP_2}$,
it has $O(\abs{\Theta}^2)$ constraints and therefore has an optimal solution
that has support $O(\abs{\Theta}^2)$. Now, the optimal mechanism for context
$(u_i,\mu_i)$ can be represented by a price vector $t^i \in \R^\Omega_+$ and
random variables $Y_\theta^i$ with  $O(\abs{\Theta}^2)$ support. This
means that each of those random variables can be represented as $\abs{\Omega}$
distributions $(\psi^\omega_\theta)^i \in \Delta([O(\abs{\Theta}^2)])$. So, the
set of variables $\{Y_\theta^i\}_{\theta \in \Theta}$ admits a representation
as an element of a compact set. Now, we need to argue that the set of vectors
$t^i$ are bounded and therefore can be also represented as elements of a
compact set.

As for the price vector $t^i$, notice that $t^i \leq \buyersurplus^i$ for any
individually rational mechanism, where $\buyersurplus^i$ is the surplus vector
for
context $(u_i, \mu_i)$. If $\buyersurplus$ is the surplus vector for context
$(u,\mu)$ then for $i \geq i_0$, then $\Vert \buyersurplus^i - \buyersurplus
\Vert_\infty \leq 1$. For the $i \geq i_0$, therefore $t^i$ is in the compact
$\prod_\theta [0,\buyersurplus(\theta)+1]$.

{\em Limit mechanism:} Now, for each context $(u_i, \mu_i)$, the optimal
mechanism can be represented a point $\mathcal{M}_i \in K$ where $K$ is a
compact set. Therefore, passing to a subsequence if necessary, we can assume
that $\mathcal{M}_i \rightarrow \mathcal{M}$. Now, since IR and IC properties
are preserved in the limit (since they are continuous properties),
$\mathcal{M}$ is a valid Pricing Mappings Mechanism for context $(u,\mu)$ and
its revenue is at least the limit of the revenue extracted by $\mathcal{M_i}$
on $(u_i,\mu_i)$. Therefore, $R_c(u,\mu) \geq \limsup R_c(u_i,\mu_i)$.
\end{proof}

An even simpler compactness argument can be formulated for $R_e$. For $R_p$ we
need to add a mild non-degeneracy assumptions: we say that a context $(u,\mu)$
is non-degenerated if $\mu(\omega, \theta) > 0, \forall \omega,\theta$.

\begin{lemma}\label{lemma:continuity_Rp}
The revenue function $R_p : \mathcal{C} \rightarrow \R$ is
upper-semicontinuous in the neighborhood of all non-degenerated context.
\end{lemma}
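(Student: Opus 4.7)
The plan is to adapt the compactness argument of Lemma~\ref{lemma:continuity_Rc} to a setting in which payments depend on the realized signal. Fix a non-degenerate context $(u,\mu) \in \mathcal{C}^{\circ}$ and a sequence $(u_i,\mu_i) \to (u,\mu)$. Passing to a subsequence, I may assume $R_p(u_i,\mu_i) \to \limsup_j R_p(u_j,\mu_j)$. Writing the optimal Pricing Outcomes with No Positive Transfers mechanism for $(u_i,\mu_i)$ in the $\mathbf{LP_2}$ variables $(x^i_\theta(\cdot), \tilde t^i_\theta(\cdot))$ with $\tilde t^i_\theta(q) = x^i_\theta(q)\,t^i_\theta(q) \ge 0$, my goal is to place the entire description in a single fixed compact set $K$, extract a subsequential limit $\mathcal{M} \in K$, and then use continuity of the constraints to conclude that $\mathcal{M}$ is valid for $(u,\mu)$ with revenue equal to the target limit.

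The signal-distribution part of the compactness argument is essentially the one from Lemma~\ref{lemma:continuity_Rc}: the support-size bound for $\mathbf{LP_2}$ lets me assume that every $Y^i_\theta$ takes values in a common finite set of size $O(|\Theta|+|\Omega|)$, so the weights $x^i_\theta(\cdot)$ live in a product of simplices over $\Delta(\Omega)$, itself compact. The genuinely new ingredient is bounding the per-signal transfers $\tilde t^i_\theta(q)$. Since each $\tilde t^i_\theta(q)$ is nonnegative, the IR constraint bounds the \emph{individual} summands $(\one^t D^i_\theta q)\,\tilde t^i_\theta(q)$ by the surplus $\buyersurplus^i(\theta)$, which in turn is uniformly bounded for large $i$ because $\buyersurplus^i \to \buyersurplus$.

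The non-degeneracy hypothesis is exactly what converts this into a bound on $\tilde t^i_\theta(q)$ itself. Because $\mu(\omega,\theta) > 0$ for every $(\omega,\theta)$, there is $\alpha > 0$ with $\mu(\theta\mid\omega) \ge \alpha$ for all $\omega,\theta$, and so $\mu_i(\theta\mid\omega) \ge \alpha/2$ for $i$ large. This gives the uniform lower bound $\one^t D^i_\theta q = \sum_\omega \mu_i(\theta\mid\omega)\,q(\omega) \ge \alpha/2$ on every $q \in \Delta(\Omega)$, hence $\tilde t^i_\theta(q) \le 2\buyersurplus^i(\theta)/\alpha$ uniformly in $q$ and in $i$. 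Combined with the earlier bounds, the whole mechanism is described by a point in a fixed compact set $K$.

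Passing to a further subsequence, $\mathcal{M}_i \to \mathcal{M} \in K$. The IR, IC, feasibility ($F_\theta$), and nonnegativity constraints, as well as the revenue objective, are polynomial in the coordinates of $\mathcal{M}_i$ and in $(u_i,\mu_i)$, so they pass to the limit: $\mathcal{M}$ is a valid Pricing Outcomes with No Positive Transfers mechanism for $(u,\mu)$ whose revenue equals $\lim R_p(u_i,\mu_i)$, giving $R_p(u,\mu) \ge \limsup R_p(u_i,\mu_i)$. The main obstacle is precisely the per-signal transfer bound in the previous paragraph; once non-degeneracy supplies it, the rest is standard. This is also what restricts the statement to $\mathcal{C}^{\circ}$: on the boundary, a signal that becomes vanishingly rare in some $\mu_i$ could carry an arbitrarily large nonnegative transfer, the sequence of mechanisms need not inhabit any compact set, and upper-semicontinuity can genuinely fail.
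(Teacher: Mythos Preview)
Your argument is correct and follows essentially the same route as the paper's proof: both reuse the compactness framework from Lemma~\ref{lemma:continuity_Rc} for the signal distributions, and both obtain the new bound on the per-signal transfers $\tilde t^i_\theta(q)$ by combining the IR constraint with nonnegativity of the summands and then using non-degeneracy to uniformly lower-bound $\one^t D^i_\theta q$ away from zero. The constants differ only cosmetically (the paper takes $\delta = \tfrac{1}{2}\min_{\omega,\theta}\mu(\theta\mid\omega)$ and bounds $\tilde t^i_\theta(q)$ by $(\buyersurplus(\theta)+1)/\delta$, which matches your $2\buyersurplus^i(\theta)/\alpha$), and your explicit remark on why the argument fails on the boundary of $\mathcal{C}^{\circ}$ is a helpful addition.
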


\begin{proof}
As in the proof of Lemma \ref{lemma:continuity_Rc}, the main goal is to show
that the set of optimal mechanisms for contexts $(u_i, \mu_i)$ are members of a
compact set and then proceed as in the second part of the proof of that lemma.
For the first part, we proceed for the $Y_\theta^i$ as in Lemma
\ref{lemma:continuity_Rc}. But we need a different argument to show that the
transfers in the optimal mechanism are bounded.

 Given a non-degenerated context, let $\delta = \frac{1}{2}
\min_{\omega,\theta} \mu(\theta \vert \omega)$, then if $(u_i, \mu_i)
\rightarrow (u,\mu)$, then for $i \geq i_0$, then $\mu_i(\theta \vert \omega)
\geq \delta$ for all $\omega,\theta$. Increasing $i_0$ if necessary, we
can also assume that $\Vert \buyersurplus^i - \buyersurplus \Vert_\infty \leq 1$
for $i
\geq i_0$. So, if
$t^i_\theta(q)$ is the payment of buyer $\theta$ for posterior $q$ in the
optimal mechanism for context $(u_i, \mu_i)$, then $\tilde{t}^i_\theta(q) =
x^i_\theta(q) \cdot t^i_\theta(q)$ is in the compact set $\prod_i[0, \frac{1}{
\delta} (\buyersurplus(\theta) + 1)]$ for all individually rational mechanism,
since:

$$\sum_q v_\theta(D_\theta q) x_\theta(q) - (\one^t D_\theta q)
\tilde{t}_\theta (q) \geq v_\theta(D_\theta p)$$
implies that:
$$\begin{aligned}
\buyersurplus(\theta) +1 & \geq  \buyersurplus^i(\theta)  \geq
\frac{1}{\mu^i(\theta)}
\left[ \sum_q v^i_\theta(D_\theta q) x^i_\theta(q) - v^i_\theta(D_\theta p)
\right] \\ & \geq \sum_q (\one^t D_\theta q) \cdot
\tilde{t}^i_\theta (q) \geq (\one^t D_\theta q) \cdot
\tilde{t}^i_\theta (q) \geq \delta \cdot \tilde{t}^i_\theta (q)
\end{aligned}$$

Notice that inequalities crucially depend on $\tilde{t}^i_\theta(q)$ being
non-negative.
\end{proof}

The following theorem summarizes the previous discussion:

\begin{theorem}
The revenue functions $R_c, R_e : \mathcal{C} \rightarrow \R$ are
continuous and the revenue function $R_p:\mathcal{C} \rightarrow \R$ is
continuous around any non-degenerated context.
\end{theorem}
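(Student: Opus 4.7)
The plan is to split each continuity claim into a lower- and an upper-semicontinuity part and then combine. Lower-semicontinuity of $R_e, R_c, R_p$ can be handled uniformly by mimicking the argument already used for $R$ in Theorem~\ref{thm:lower-semicontinuity}: the crucial input is an analogue of Lemma~\ref{lemma:strict-preferences} for each restricted class (Sealed Envelope, Pricing Mappings, and Pricing Outcomes with No Positive Transfers). Given an optimum in one of these classes, an $\epsilon$-scaling of the transfers yields a feasible mechanism of the same type whose IR and IC inequalities are all strict, and whose non-negativity of transfers (for the restricted classes) is preserved. Strictness is an open condition, the objective is jointly continuous in the context and the mechanism description, so the same mechanism remains feasible and nearly as profitable for every $(u_i,\mu_i)$ sufficiently close to $(u,\mu)$; letting $\epsilon\to 0$ gives $\liminf R_*(u_i,\mu_i)\geq R_*(u,\mu)$.

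For upper-semicontinuity I would directly invoke Lemma~\ref{lemma:continuity_Rc} for $R_c$ and Lemma~\ref{lemma:continuity_Rp} for $R_p$ (on $\mathcal{C}^{\circ}$, exactly the set to which the theorem restricts $R_p$). The remaining case $R_e$ is simpler still: a Sealed Envelope Mechanism is described by a single price $t\geq 0$, and any IR-feasible price obeys $t_i \leq \max_\theta \buyersurplus^i(\theta)$. Since $\buyersurplus^i \to \buyersurplus$ continuously in the context, the optimal prices $t_i$ eventually lie in a fixed compact interval. Passing to a convergent subsequence $t_i \to t^*$, the revenue $t\cdot \Pr_{\theta\sim\mu}[\buyersurplus(\theta)\geq t]$ is jointly upper-semicontinuous in $(t,u,\mu)$ because $\Theta$ is finite (the probability is a right-continuous non-increasing step function of $t$, and continuous in $(u,\mu)$ for each fixed $t$), giving $R_e(u,\mu)\geq \limsup R_e(u_i,\mu_i)$.

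The main obstacle, and the reason the theorem must restrict $R_p$ to non-degenerate contexts, is the $R_p$ case: the transfer bound in Lemma~\ref{lemma:continuity_Rp} depends on $\delta = \tfrac{1}{2}\min_{\omega,\theta}\mu(\theta\mid\omega)>0$, which vanishes on degenerate $\mu$ and allows the transfers $\tilde t_\theta(q)$ in the sequence of optimal mechanisms to diverge, so the compactness step breaks and upper-semicontinuity can genuinely fail. For $R_c$ and $R_e$ no such issue arises because the prices are uniformly bounded by the (continuous) buyer surplus, so upper-semicontinuity holds throughout $\mathcal{C}$.
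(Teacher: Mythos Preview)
Your proposal is correct and follows essentially the same approach as the paper: lower-semicontinuity via the strict-preferences scaling argument (as in Theorem~\ref{thm:lower-semicontinuity}), upper-semicontinuity via the compactness arguments of Lemmas~\ref{lemma:continuity_Rc} and~\ref{lemma:continuity_Rp}, and for $R_e$ the paper likewise notes that an ``even simpler compactness argument'' suffices, which is exactly what you spell out. Your identification of why the non-degeneracy hypothesis is needed for $R_p$ also matches the paper's reasoning.
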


\subsection{Approximating Revenue}
\label{sec:ap-continuity-approx}

We know from Example \ref{example:contracts-vs-envelope} that the Sealed
Envelope Mechanism performs very poorly in terms of revenue when compared to
the optimal mechanism, even if the $\omega,\theta$-signals are independent. A
natural question is how do the other mechanisms perform in comparison to the
optimal mechanism. The Pricing Mappings Mechanism is simple and
allows for a much more natural implementation then the Pricing Outcomes.
Unfortunately, as we show in this section, both the Pricing Mappings Mechanism
and the Pricing Outcomes with no positive transfers Mechanism can perform very
poorly compared to the Pricing Outcomes Mechanism for certain contexts:

\begin{lemma} The revenue extracted by a Pricing Mappings Mechanism and by a
Pricing Outcomes with no positive transfers Mechanism is at least a
$1/\abs{\Theta}$-fraction of the optimal mechanism:
$$R_c(u,\mu) \geq \frac{1}{\abs{\Theta}} R(u,\mu), \quad R_p(u,\mu) \geq
\frac{1}{\abs{\Theta}} R(u,\mu)$$ and those bounds are tight up to constant
factors.
\end{lemma}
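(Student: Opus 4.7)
No mechanism can extract more than the expected full surplus, so $R(u,\mu) \leq \sum_\theta \mu(\theta)\buyersurplus(\theta)$. By pigeonhole there exists $\theta^\star$ with $\mu(\theta^\star)\buyersurplus(\theta^\star) \geq R(u,\mu)/|\Theta|$. The Sealed Envelope Mechanism that posts a single price $\buyersurplus(\theta^\star)-\epsilon$ is guaranteed to sell to at least type $\theta^\star$ (since his willingness to pay for observing $\omega$ exactly equals $\buyersurplus(\theta^\star)$), yielding revenue at least $\mu(\theta^\star)(\buyersurplus(\theta^\star)-\epsilon)$. Taking $\epsilon\to 0$ and chaining the inclusions $\{\text{Sealed Envelope}\}\subseteq\{\text{Pricing Mappings}\}\subseteq\{\text{Pricing Outcomes with } t\geq 0\}$ gives $R_p(u,\mu)\geq R_c(u,\mu)\geq R_e(u,\mu)\geq R(u,\mu)/|\Theta|$.

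\textbf{Plan for tightness.} Following the sketch in Section~\ref{sec:continuity}, I would construct a context with \emph{independent} $\omega,\theta$ in which the full surplus exceeds $R_c(u,\mu)$ by a factor $\Omega(|\Theta|)$, and then perturb $\mu$ to full rank, applying Theorem~\ref{thm:full-surplus} to show $R$ jumps to the full surplus, while applying Theorem~\ref{thm:continuity-summary} to show $R_c$ and $R_p$ barely move.

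\textbf{The example.} Take $\Omega = A = [n]$, $\Theta = [n]$, uniform prior $p(\omega) = 1/n$, $\mu(\theta) = 2^{-\theta}/(1-2^{-n})$, drawn independently; and $u(\theta,\omega,a) = 2^\theta \cdot \one[a=\omega]$. Then $\buyersurplus(\theta) = 2^\theta(1-1/n)$ and the full surplus is $\sum_\theta \mu(\theta)\buyersurplus(\theta) = \Theta(n)$. The crucial structural point is that the value of any signal $Y$ to a type-$\theta$ buyer is $2^\theta \cdot I(Y) - t$, where $I(Y) := \E_Y[\max_\omega q(\omega \mid Y)]$ is a single scalar in $[1/n,1]$. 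Hence the Pricing Mappings revenue-maximization reduces to single-parameter Myerson screening with values $(2^\theta)_\theta$ and allocation $I_\theta-1/n \in [0,1-1/n]$; for $\mu(\theta)\propto 2^{-\theta}$, every posted price yields revenue $O(1)$ and Myerson's optimality shows no menu does better, so $R_c(u,\mu) = O(1)$. Since signals are independent, Theorem~\ref{thm:rev-principle-independent} forces $R_c(u,\mu)=R_p(u,\mu)=R(u,\mu)=O(1)$, while the full surplus is $\Theta(n)$.

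\textbf{Perturbing and the main obstacle.} Because $|\Omega|=|\Theta|=n$, I can perturb $\mu$ within $\mathcal{C}^\circ$ to a full-rank joint distribution $\mu'$ arbitrarily close to $\mu$, with the full surplus shifting by $o(1)$. Theorem~\ref{thm:full-surplus} then gives $R(u,\mu')\geq \Omega(n)$, while Theorem~\ref{thm:continuity-summary} (continuity of $R_c$ on $\mathcal{C}$ and of $R_p$ on $\mathcal{C}^\circ$) gives $R_c(u,\mu'),R_p(u,\mu') = O(1)$, so $R_c(u,\mu')\leq R_p(u,\mu') = O(1/n)\cdot R(u,\mu')$. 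The main obstacle is the reduction to single-parameter screening: I must argue carefully that, since all types share one decision problem up to the scalar factor $2^\theta$, a Pricing Mappings menu cannot exploit any richer structural discrimination, and hence the classical posted-price bound on a geometric value distribution really does cap $R_c$ at $O(1)$.
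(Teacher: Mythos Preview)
Your proposal is correct and follows essentially the same route as the paper. The lower bound argument (pigeonhole on $\mu(\theta)\buyersurplus(\theta)$, then Sealed Envelope) is exactly the paper's; the tightness construction is the very same example $\Omega=\Theta=A=[n]$, $u(\theta,\omega,a)=2^\theta\one[a=\omega]$, $\mu(\omega)=1/n$, $\mu(\theta)\propto 2^{-\theta}$, followed by the same full-rank perturbation plus Theorem~\ref{thm:full-surplus} and Theorem~\ref{thm:continuity-summary}. The only cosmetic difference is in how you cap $R_c(u,\mu)=O(1)$: you observe that $v_\theta(q)=2^\theta\max_\omega q(\omega)$ collapses the problem to single-parameter screening with allocation $I(Y)\in[1/n,1]$ and invoke Myerson (posted price is optimal for a single buyer), whereas the paper writes out the LP in the variable $\hat{x}_\theta=\sum_X x_\theta(q_X)/|X|$ and exhibits an explicit dual combination (IR for $\theta=1$ telescoped with the downward IC constraints) bounding $\sum_\theta 2^{-\theta}t_\theta\leq 2$. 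These are the same argument in different clothing.
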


The bound is trivial and it follows directly from the fact that the optimal
revenue is bounded by the surplus, and extracting full surplus from a single
type is trivial. Now, we show that the bound is tight using the continuity of
$R_c(\cdot)$ and $R_p(\cdot)$.

\begin{example}
In this example we show that there exist a context $(u,\mu)$ such that
$R_c(u,\mu) \leq O(1/\abs{\Theta}) R(u,\mu)$ and $R_p(u,\mu) \leq
O(1/\abs{\Theta}) R(u,\mu)$.

For $\Omega = \Theta = [n]$, we present a non-degenerated context $(u,\mu)$
where
$(\omega,\theta)$
are independent and the optimal revenue is a $1/n$-fraction of the
full surplus $\sum_\theta \mu(\theta) \buyersurplus(\theta)$. For this context:
$$R(u,\mu) = R_p(u,\mu) = R_c(u,\mu) = \frac{1}{n} \sum_\theta \mu(\theta)
\buyersurplus(\theta)$$
Now, if one sees $\mu$ as a $n \times n$-matrix and $\eta$ is any
$n \times n$-matrix of full rank with $\sum_{\omega,\theta} \eta(\omega,\theta)
= 0$, then
for any small $t>0$, $\mu+t \eta$ represents a probability distribution
over $\Theta \times \Omega$ close to $\mu$ but with $\text{rank}(\mu+t \eta) =
n$,
therefore, the optimal mechanism extracts full revenue (Theorem
\ref{thm:full-surplus}). In other words:
$$R(u,\mu+t\eta) = \sum_\theta \mu(\theta)
\buyersurplus_t(\theta)$$
where $\buyersurplus_t$ is the surplus for context $(u, \mu+t \eta)$. By
continuity:
$$\lim_{t \rightarrow 0+} R_c(u,\mu+t\eta) = R_c(u,\mu) = \frac{1}{n}
\sum_\theta \mu(\theta)
\buyersurplus(\theta)$$
while
$$\lim_{t \rightarrow 0+} R(u,\mu+t\eta) = \sum_\theta \mu(\theta)
\buyersurplus(\theta)$$
showing that the bound it tight. So, all we need to so is to exhibit a context
where the optimal revenue is a $\beta/n$ fraction of the full surplus for some
constant $\beta$.

Let $\Omega = \Theta = A = [n]$, $u(\theta,\omega,a) = v_\theta \cdot \one \{a
= \omega\}$, $v_\theta = 2^\theta$, $\mu(\omega) = \frac{1}{n}$ and $\mu(\theta)
= \frac{2^{-\theta}}{1-2^{-n}}$. The surplus is given by $\buyersurplus(\theta)
=
2^\theta (1-\frac{1}{n})$, so $\sum_\theta \mu(\theta) \buyersurplus(\theta) =
\frac{n-1}{1-2^{-n}}$. The revenue optimal mechanism for this context is a
Pricing Mappings Mechanism.

Using the construction in the Interesting Posteriors Lemma (Lemma
\ref{lemma:interesting-posteriors}) we can search for a solution where the only
posteriors in the support are of the form $q_X(\omega) = \frac{1}{\abs{X}}$ for
$\omega \in X$ and $q_X(\omega) = 0$ otherwise, where $X$ is any subset of
$[n]$. Therefore we can solve the primal LP only considering the
$x_\theta(q_X)$ variables, i.e., we need to solve:
$$
\begin{aligned}
& \max \sum_\theta \frac{2^{-\theta}}{1-2^{-n}} t_\theta \text{ s.t. }\\
& \quad \begin{aligned}
& -t_\theta + \sum_{X \subseteq [n]} x_\theta(q_X) \frac{v_\theta}{\abs{X}}
\geq \frac{v_\theta}{n}, \\
& -t_\theta + \sum_{X \subseteq [n]} x_\theta(q_X) \frac{v_\theta}{\abs{X}}
\geq -t_{\theta'} + \sum_{X \subseteq [n]} x_{\theta'}(q_X)
\frac{v_\theta}{\abs{X}},
 \\
& \sum_\theta x_\theta(q_X) = 1,  x_\theta(q_X) \geq 0 \\
\end{aligned} \end{aligned}
$$

Notice one can define $\hat{x}_\theta = \sum_{X \subseteq [n]}
\frac{1}{\abs{X}} x_\theta(q_X)$ and re-write the LP as:

$$
\begin{aligned}
& \max \sum_\theta \frac{2^{-\theta}}{1-2^{-n}} t_\theta \text{ s.t. }\\
& \quad \begin{aligned}
& -t_\theta + \hat{x}_\theta v_\theta
\geq v_\theta /n, \\
& -t_\theta + \hat{x}_\theta v_\theta
\geq -t_{\theta'} +\hat{x}_{\theta'} v_\theta,
 \\
& 1/n \leq \hat{x}_\theta \leq 1 \\
\end{aligned} \end{aligned}
$$

Now it suffices to show that the solution of the above LP is bounded by a
constant as $n$ grows. This can be seen very easily by showing a feasible dual.
We have that:

$$\left[ -t_1 + \hat{x}_1 v_1
- \frac{v_1}{n} \right] + \sum_{\theta=1}^{n-1} 2^{-\theta} \left[
-t_{\theta+1} + \hat{x}_{\theta+1} v_{\theta+1}
 + t_{\theta} -\hat{x}_{\theta} v_{\theta+1} \right] \geq 0$$
substituting $v_\theta = 2^\theta$ and simplifying the expression, we get:

$$\sum 2^{-\theta} t_\theta \leq 2 \hat{x}_n - \frac{t_n}{2^n} - \frac{2}{n}
\leq 2 $$

\end{example}

One might also wonder what is the relation between $R_c$ and $R_p$. Clearly
$R_c(u,\mu) \geq \frac{1}{\abs{\Theta}} R_p(u,\mu)$. The following simple
example shows that this bound is tight.

\begin{example}
\label{ex:rcrp}
 In this example we show that there exists a context $(u,\mu)$
such that $R_c(u,\mu) \leq O(1/\abs{\Theta}) R_p(u,\mu)$.

We consider in this example
the case where the seller learns
exactly the type of the buyer. We encode this information in his signal
$\omega$. It will have two components: $\omega_1$ which is the information the
buyer is interested and $\omega_2$ which tells the seller the type $\theta$ of
the buyer. A Pricing Outcomes Mechanism will be able to explore this
information in order to extract full surplus from the buyer, but a pricing
contracts cannot.

Let $\Omega_0 = \{0,1\}$ and $\Theta_0 = [n]$ and consider a probability
distribution $\mu_0$ on $\Omega_0 \times \Theta_0$ where
$\mu_0(\omega_0,\theta_0) = \frac{1}{2^{1-\theta_0}} \cdot \frac{1}{1-2^{-n}}$.
Now, define utilities of the buyer as functions of $u_0(\omega_0, \theta_0,a)$.
As usual, we define as a piecewise-linear function $v_{0,\theta_0}:[0,1]
\rightarrow \R$. Consider the function interpolating
$(0,2^{\theta_0}),(\frac{1}{2},0),(1,2^{\theta_0})$. Now, let's define $\Theta =
\Theta_0$ and $\Omega = \Omega_0 \times \Theta_0$, and if $\omega = (\omega_1,
\omega_2)$, then we define a probability distribution $\mu$ on $\Omega \times
\Theta$ as $\mu(\omega,\theta) = \mu_0(\omega_1,\theta)$ and $u(\omega,\theta,a)
= u_0(\omega_1, \theta, a)$.

By using a Pricing Mappings Mechanism, the seller is not able to use the
information in $\omega_2$, and therefore he can only extract $O(1)$ revenue. If
the seller is able to price signals, he can design the contract $\theta$ such
that the price of $(0,\theta)$ and $(1,\theta)$ is $2^\theta-\epsilon$ and the
price of the signals $(\omega_0, \theta')$ is $\infty$ for $\theta'\neq
\theta$. Therefore, buyers will be forced to buy the contract for their correct
type.

The implementation above casts the mechanism in the general ``Pricing Outcomes''
form given by Theorem \ref{thm:rev-principle-correlated}, but a more natural
implementation would
be for the seller to learn the $\theta = \omega_2$ and then announce price
$2^\theta - \epsilon$ for the full information signal.
\end{example}

\section{Proofs omitted from Section~\ref{sec:uncommitted}}
\label{sec:ap-uncommitted}

\begin{claim}
\label{lem:uncommitted_separation}
In the interactive protocol presented in Example~\ref{example:uncommitted_separation}
the optimal strategy for an uncommitted buyer is to
play left (to node $t_1$) whenever $\theta=0$ and
play right (to node $s_2$)
whenever $\theta=1$, and then follow the protocol (make the transfers when asked) without defecting.
\end{claim}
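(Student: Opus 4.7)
The plan is a direct enumeration: I will list every pure uncommitted strategy in this protocol, compute the expected utility of each for each type, and verify that the claimed strategy achieves the maximum. First I would set up the ingredients. Since $v_0 = v_1 = v$ with $v(q) = \max\{q,\, 1-9q\}$ for $q = \prob(\omega=1)$, the value function is piecewise linear with a kink at $q=1/10$ where it equals $1/10$. The interim posteriors from $\theta$ alone are $\prob(\omega=1\mid\theta=0) = 0.4$ and $\prob(\omega=1\mid\theta=1) = 0.6$. Applying Bayes' rule through the seller nodes, the leaves $\ell_1,\ldots,\ell_4$ reveal $\omega$ exactly; the leaf $\ell_5$ is reachable only when $\omega=1$, so its posterior is a point mass on $\omega=1$ under both types; and the posterior at $t_2$ is $1/10$ conditional on $\theta=0$ and $1/5$ conditional on $\theta=1$, with $\prob(t_2\mid\theta=0) = 2/3$ and $\prob(t_2\mid\theta=1) = 1/2$.

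Next I would enumerate the buyer's pure strategies---a choice at the root from $\{L, R, \bot\}$ and at each transfer node between ``pay'' and $\bot$---and compute the expected utility in each case. For $\theta=0$ the non-dominated options yield: $v(0.4) = 0.4$ (exit or defect at $t_1$); $1 - 8/15 = 7/15$ (go $L$ and pay); $\tfrac{1}{3}\cdot 1 + \tfrac{2}{3}\cdot(1 - 0.8) = 7/15$ (go $R$ and pay at $t_2$); and $\tfrac{1}{3}\cdot 1 + \tfrac{2}{3}\cdot v(1/10) = 6/15$ (go $R$ and defect at $t_2$). For $\theta=1$ they yield, respectively, $v(0.6) = 0.6$, $7/15$, $\tfrac{1}{2}\cdot 1 + \tfrac{1}{2}\cdot(1 - 0.8) = 0.6$, and $\tfrac{1}{2}\cdot 1 + \tfrac{1}{2}\cdot v(1/5) = 0.6$. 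The claimed strategy---$(L,\text{pay})$ for $\theta=0$ and $(R,\text{pay at }t_2)$ for $\theta=1$---thus attains the maximum expected utility for both types, proving the claim.

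The main subtlety is that several alternative strategies tie with the claimed one: $(R,\text{pay})$ also attains $7/15$ for $\theta=0$, and for $\theta=1$ each of $\bot$, $(L,\bot)$, and $(R,\bot)$ ties with $(R,\text{pay})$ at $0.6$. Since the claim only asserts optimality (not uniqueness) of the prescribed strategy, the enumeration already suffices. However, for the revenue computation that follows in the example to be rigorous, one must break ties in favor of the seller---a convention that can be justified in the spirit of Lemma~\ref{lemma:strict-preferences} by lowering $t_1$ and $t_2$ by $\varepsilon$ and nudging the mixing probability at $s_2$, which makes the claimed strategies the unique optimal ones at the cost of an arbitrarily small revenue loss.
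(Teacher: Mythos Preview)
Your proof is correct and follows essentially the same approach as the paper: a direct case analysis of the buyer's options, computing posteriors via Bayes' rule and comparing expected utilities. The paper organizes the cases more narratively (first ruling out the left branch for $\theta=1$, then arguing that the signal from $s_2$ alone is worthless to both types, then comparing the expected price of left versus right for $\theta=0$), but the underlying computations are identical to yours.

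One minor numerical point: the paper sets $t_1 = 0.533$, not $8/15$; with the stated value, $(L,\text{pay})$ gives $\theta=0$ utility $0.467$, which strictly exceeds the $7/15 \approx 0.4667$ from $(R,\text{pay})$, so there is no tie for $\theta=0$. Your identification of the ties for $\theta=1$ is accurate and, as you note, more explicit than the paper, which glosses over the fact that at $t_2$ the type-$1$ buyer is exactly indifferent between paying $0.8$ and defecting. Your remark about breaking ties in the spirit of Lemma~\ref{lemma:strict-preferences} is the right way to handle this for the revenue computation.
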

\begin{proof}
It is simple to see that a buyer of type $\theta=1$ will not play left
at the root, since his surplus for the information is only $\buyersurplus(0) =
0.4 <
0.533$.
A buyer of type $\theta=0$ playing left and not defecting gets the full
information and pays $0.533$, so his utility is $1-0.533=0.467$.
information and pays $0.533$, so his utility is $1-0.533=0.467$.
To complete the proof we still need to show that a buyer of type $\theta=0$
would not prefer to
play right 
and that the buyer of type $\theta=1$ will follow the entire protocol.

In the node $s_2$, upon receiving $\ell_5$ both buyer types update their
beliefs to $\omega=1$. Now, upon receiving $t_2$, they update their beliefs
about $\omega$ to:
$\prob(\omega \vert \theta, t_2) =  \mu(\omega \vert \theta)~\cdot~\prob(t_2
\vert \omega) / \sum_{\omega'} \left(\mu(\omega' \vert \theta) \cdot
\prob(t_2 \vert \omega')\right) $.
Doing the calculations, we get that $\prob(\omega=1\vert\theta=0,t_2) =
0.1$ and $\prob(\omega=1 \vert \theta=1, t_2) = 0.2$. By the shape of the
$v_\theta(\cdot)$ function, no player has increase in utility by this signal.
Hence, a buyer of type $\theta=0$ would not play right and then defect in $t_2$.
At this point (node $t_2$) the surplus for knowing the value of $\omega$
exactly is $0.9$ for $\theta=0$ and $0.8$ for $\theta=1$. So, it is clear that
type $\theta=1$ would pay the amount of $0.8$ to get the signal revealed.

Now, we are left with the question of whether the player of type $\theta=0$
would not prefer to play right. 
In both cases (left and right), he gets
the full information about the value of $\omega$. Now, we need to compare which
one has the smallest expected price in his perspective.
The price of the left path is $0.533$ and the left path has price $0.8 \cdot
\prob(t_2 \vert
\theta=0) = \frac{2}{3} \cdot 0.8 = 0.5333..$.
\end{proof}

\subsection{Proof of Theorem~\ref{thm:gap-committed-uncommitted}}
To prove Theorem~\ref{thm:gap-committed-uncommitted} we consider the setting of 
Example \ref{example:uncommitted_separation}.

We first observe that it is possible to extract revenue $0.5$ from committed
buyers using a Pricing Outcomes with No Positive Transfers Mechanism. The
interim
belief of buyer of type $0$ signal is $\mu(\omega=1\vert \theta=0) = 0.4$, so
his surplus for the information is $\buyersurplus(0) = 0.6$, while for buyer of
type
$1$ his interim belief is $0.6$ and his surplus is  $\buyersurplus(1) = 0.4$.
By Theorem \ref{thm:full-surplus} we can extract the full surplus
$\frac{1}{2}\buyersurplus(0)+\frac{1}{2}\buyersurplus(1)=0.5$ from {\em committed}
buyers. This can be done by a Pricing Outcomes with No Positive Transfers
Mechanism
in which the seller reveals the value of $\omega$ and charges $t(\omega)$, where $t(0) = 1$ and $t(1) = 0$.

The next lemma completes the proof of the theorem.

\begin{lemma}\label{lem:gap-committed-uncommitted}
For the context $(u,\mu)$ with correlated $\omega$ and $\theta$ that is defined in Example
\ref{example:uncommitted_separation} it hold that there exists some $\delta>0$ such that
the revenue of any protocol for uncommitted buyers that has no positive transfers is at most $0.5-\delta$.
\end{lemma}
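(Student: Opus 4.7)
The strategy is to argue by contradiction: suppose that for every $\delta>0$ there exists a no-positive-transfer protocol $\Pi_\delta$ extracting at least $R_p(u,\mu)-\delta=0.5-\delta$ from uncommitted buyers. Since $\buyersurplus(0)=0.6$, $\buyersurplus(1)=0.4$, and $\mu(\theta=0)=\mu(\theta=1)=0.5$, the total surplus is $0.5$, and individual rationality of the uncommitted buyer gives $p_\theta\leq\buyersurplus(\theta)$ for each type. Extracting revenue at least $0.5-\delta$ thus forces $p_0\geq 0.6-2\delta$ and $p_1\geq 0.4-2\delta$, which in turn forces the expected information value $V_\theta\geq 1-2\delta$ for each type, i.e.\ both types obtain nearly full information in equilibrium. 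By Theorem~\ref{thm:myerson_revelation} I will assume the protocol begins with the buyer announcing a type and then playing a fixed sub-protocol $\phi^\theta$.

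The key device is to analyze the deviation in which type $0$ declares type $1$ and plays $\phi^1$ to completion, without ever defecting. If $\pi(\tau\mid\omega)$ denotes the conditional probability that $\phi^1$ produces trace $\tau$ given $\omega$, and $A=\sum_\tau\pi(\tau\mid 0)p(\tau)$, $B=\sum_\tau\pi(\tau\mid 1)p(\tau)$, then $p_1=0.4A+0.6B\leq 0.4$ while the deviation costs $p'_0=0.6A+0.4B$. The deviation's information value $V'_0$ tends to $1$: the lower bound $V_1\geq 1-2\delta$, together with the shape of $v(q)=\max(q,1-9q)$, forces $q^{(1)}_\tau$ to lie near $\{0,1\}$ for a $(1-O(\sqrt\delta))$-mass of $\phi^1$-traces (by Markov on the submartingale $v(q^{(1)}_t)$), and Bayes' rule with the bounded likelihood ratio between the $\theta=0$ and $\theta=1$ priors on $\omega$ propagates this to $q^{(0)}_\tau$, yielding $V'_0\geq 1-O(\sqrt\delta)$. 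Uncommitted IC for type $0$ against this deviation implies $V'_0-p'_0\leq V_0-p_0\leq 0.4+2\delta$, so $p'_0\geq 0.6-O(\sqrt\delta)$. Combining this with $0.4A+0.6B\leq 0.4$ and $A,B\geq 0$ forces $A\geq 1-O(\sqrt\delta)$ and $B\leq O(\sqrt\delta)$.

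The hard part is to turn $A\to 1,\ B\to 0$ into a contradiction using the uncommitted IR inside $\phi^1$. The intuition is that $B\to 0$ concentrates type $1$'s paying mass on traces where $\pi(\tau\mid 1)$ is negligible relative to $\pi(\tau\mid 0)$; on such traces Bayes' rule sends $q^{(1)}_\tau$ close to the $\omega=0$ vertex, so $v(q^{(1)})$ reaches values near $1$ well before the bulk of the payment on $\tau$ is made. But the uncommitted IR at every interior transfer node $n$ caps the expected remaining payment by $E_1[V_\infty^{(1)}-v(q^{(1)}_n)\mid F_n]$, which is nearly $0$ once $v(q^{(1)}_n)$ approaches $1$. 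I would make this quantitative by introducing the stopping time $T^{*}=\inf\{t:v(q^{(1)}_t)\geq 1-\sqrt\delta\}$, using the submartingale property of $v(q^{(1)}_t)$ and Markov's inequality on the slack $1-V_1\leq 2\delta$ to show that under type $1$'s subjective distribution $T^{*}$ occurs before the end on all but an $O(\sqrt\delta)$-mass of traces, and then applying the uncommitted IR at $T^{*}$ together with the bounded Radon-Nikodym derivative $\mathrm{d}\mu(\cdot\mid\omega=0)/\mathrm{d}\mu(\cdot\mid\theta=1)$ to translate the expected-remaining-payment bound into a bound of the form $A\leq(\text{payment before }T^{*})+O(\sqrt\delta)$. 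The central technical obstacle is precisely this change of measure, since uncommitted IR is naturally expressed under type $1$'s prior that mixes over $\omega$, while $A$ is a conditional expectation given $\omega=0$; controlling the payment before $T^{*}$ is then a matter of comparing it to the initial IR slack $V_1-v(q^{(1)}_0)\leq 0.4$ on the $\omega=0$ slice, which should give $A\leq 0.4/0.4+O(\sqrt\delta)=1$ loosely, but refining the argument at $T^{*}$ yields a strict gap $A\leq 1-c$ for a constant $c>0$, flatly contradicting $A\geq 1-O(\sqrt\delta)$ and completing the proof for sufficiently small $\delta$.
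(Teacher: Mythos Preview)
Your overall architecture matches the paper's: assume a protocol extracts $0.5-\delta$, use the revelation principle to reduce to a two-branch tree, observe that each type must pay within $2\delta$ of its surplus, analyze the deviation where type $0$ plays type $1$'s branch, show the deviation yields near-full information for type $0$ via Bayes' rule (this is the paper's Fact~E.3, and your sketch of it is correct), and then bound the deviation's expected payment to derive an IC violation. The detour through $A\to 1,\ B\to 0$ is not in the paper and is not needed: the paper directly upper-bounds $p_0' = 0.6A+0.4B$ and then contradicts IC.

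The genuine gap is in your final step. You define $T^{*}$ with threshold $1-\sqrt{\delta}$ and claim that ``refining the argument at $T^{*}$ yields a strict gap $A\le 1-c$'' for a constant $c>0$, but you never say how, and with a single scale it does not work. Concretely: before $T^{*}$ you have $q^{(1)}_n(1)>\sqrt{\delta}/9$, which gives only $\pi(n\mid 0)/\pi_1(n)\le 2.5 - c'\sqrt{\delta}$; after $T^{*}$ the IR bound gives type-$1$ remaining payment $\le\sqrt{\delta}$, hence $\omega=0$ remaining payment $\le 2.5\sqrt{\delta}$. Summing, $A\le (2.5-c'\sqrt{\delta})\cdot 0.4 + 2.5\sqrt{\delta} = 1 + (2.5-0.4c')\sqrt{\delta}$, and a direct computation of $c'$ from the likelihood-ratio bound shows $0.4c'<2.5$, so no gap emerges.

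The missing idea is the paper's \emph{two-scale} argument: fix a constant $\epsilon$ with $\delta\ll\epsilon\ll 1$, and partition the transfer nodes of the right branch into $N_1=\{i:\mathbf{1}^{t}A^{i}q^{0}\le(\tfrac32-\epsilon)\mathbf{1}^{t}A^{i}q^{1}\}$ and $N_2$ its complement. On $N_1$ one has the improved ratio $\tfrac32-\epsilon$ by definition. On $N_2$ the likelihood ratio $A^{i}_{11}/A^{i}_{00}\le O(\epsilon)$, so the posterior is already near the $\omega=0$ vertex; applying the uncommitted-IR constraint at the \emph{maximal} nodes $N_2^{g}\subseteq N_2$ (those with no $N_2$ ancestor) and using that these events are disjoint under $\omega=0$ shows the type-$1$ payment accrued on $N_2$ is $O(\epsilon)$. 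The point is that this $O(\epsilon)$ enters the final bound multiplied by the \emph{extra} factor $\epsilon$ (the difference between $3/2$ and $3/2-\epsilon$), giving
\[
p_0' \;\le\; (\tfrac32-\epsilon)\,p_1 \;+\; \epsilon\cdot O(\epsilon)\;\le\;0.6-0.4\epsilon + O(\epsilon^2),
\]
which is strictly below $0.6$ for small $\epsilon$, independent of $\delta$. Your stopping-time formulation can be salvaged, but only by taking the threshold in $T^{*}$ to be a fixed $\epsilon$ (not $\sqrt{\delta}$) and by recognizing that the loss after $T^{*}$ must be accounted as $\epsilon\times O(\epsilon)$ rather than $2.5\times O(\epsilon)$; that is exactly the content of the paper's Fact~E.4.
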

\begin{proof}
Fix $\delta = 10^{-10} \ll 10^{-2} = \epsilon$.
Let's assume that there is a protocol with no positive transfers that is able to extract revenue
$0.5-\delta$ from uncommitted buyers. We can assume by Theorem
\ref{thm:uncommitted_partial_revelation} that the protocol has the following
format: the root is a buyer node with two children and all the other nodes of
the tree are either seller or transfer nodes. We can also assume that the
optimal strategy for the $\theta=0$ buyer is to take the left path and follow
the protocol until then end and that the optimal strategy for the $\theta=1$
buyer is to take the right path and follow the protocol until the end.

Now, since the surplus of the types are $0.6$ and $0.4$ respectively, the
protocol should extract at least $0.6-2\delta$ from buyer of type $0$ and at
least $0.4-2\delta$ from buyer of type $1$. We want to claim that in such a
setting, a buyer of type $\theta = 0$ can benefit from taking the right
path (i.e., behaving like a buyer of type $\theta=1$) paying strictly less
then $0.6-2\delta$. This generates a contradiction.

Now, we start analyzing the change in utility for $\theta=0$ if he does as
$\theta=1$, i.e. takes the right path.  Let $L$ be the set of leaves that are in
the right side of the root. And let
$\psi^\omega \in \Delta(L)$ for be the distributions induced on
the leaves for each $\omega \in \{0,1\}$. Now, if the output of the protocol is
leaf $\ell$ the buyer of type $\theta$ updates his belief to:
$$\prob(\omega \vert \ell, \theta) = \frac{\mu(\omega \vert \theta)
\psi^\omega(\ell)}{\sum_{\omega'} \mu(\omega' \vert \theta)
\psi^{\omega'}(\ell)}$$
In vector notation, the posterior of a buyer of type $\theta$ upon
observing a leaf $\ell$ of the right path is:
$$q^{\theta,\ell} = \frac{A^\ell q^\theta}{\one^t A^\ell q^\theta}, \text{
where } A^\ell = \begin{bmatrix} \psi^0(\ell) & \\ & \psi^1(\ell) \end{bmatrix}
\text{ and } q^\theta = \begin{bmatrix}  \mu(\omega=0 \vert\theta) \\
\mu(\omega=1 \vert\theta) \end{bmatrix}$$
He obtains this posterior with probability $\sum_{\omega'} \mu(\omega' \vert
\theta) \psi^{\omega'}(\ell) = \one^t A^\ell q^\theta$. Therefore, his total
value (not counting payments) for taking the right path is: $\sum_\ell (\one^t
A^\ell q^\theta) \cdot v ( \frac{A^\ell q^\theta}{\one^t A^\ell q^\theta} ) $.
The previous expression has a slightly notation abuse: we defined $v:[0,1]
\rightarrow \R_+$. We overload the function $v$ with $v:\Delta(\Omega)
\rightarrow \R_+$ such that for $[\begin{smallmatrix}
q(0)\\q(1)\end{smallmatrix}]$ we define $v(q) = v(q(1))$.

\begin{fact}\label{fact:high_value_deviation} If $\sum_\ell (\one^t
A^\ell q^1) \cdot v ( q^{1,\ell} )
\geq 1-2\delta$, then it must be the case that $\sum_\ell (\one^t
A^\ell q^0) \cdot v ( q^{0,\ell} )
\geq 1-36 \cdot \delta$. In other words, if a buyer of type $\theta=1$ can
extract value very close to $1$ from learning the leaves reached by the
protocol, then a buyer of type $\theta=0$ should also be able to extract values
very close to $1$ from it.
\end{fact}

\begin{proof}
First let $L_1 = \{\ell; \sqrt{2\delta} < q^{1,\ell}(1) <
1-\sqrt{2\delta}\}$. First, we claim that $\sum_{\ell \in L_1} \one^t A^\ell
q^1 < \sqrt{2\delta}$. If not, then:
$$\begin{aligned}1-2\delta & < \sum_\ell (\one^t
A^\ell q^1) \cdot v ( q^{1,\ell} ) \leq \sum_{\ell \in L_1} (\one^t
A^\ell q^1) \cdot v ( q^{1,\ell} ) + \sum_{\ell \notin L_1} (\one^t
A^\ell q^1) \cdot v ( q^{1,\ell} ) \leq \\
& \leq (1-\sqrt{2\delta}) \sum_{\ell \in L_1} (\one^t
A^\ell q^1) + 1 \cdot \sum_{\ell \notin L_1} (\one^t
A^\ell q^1) = 1 - \sqrt{2\delta} \sum_{\ell \in L_1} (\one^t
A^\ell q^1) \leq \\
& \leq 1 -  \sqrt{2\delta}  \cdot  \sqrt{2\delta}  = 1-2\delta
\end{aligned}$$
using the fact that $v(q) \leq \max\{q,1-q\}$ and the fact that $\sum_\ell
A^\ell = [\begin{smallmatrix} 1 & \\ & 1 \end{smallmatrix}]$ by the definition
of $A^\ell$. Now, in order to get an upper bound for $\sum_\ell (\one^t
A^\ell q^0) \cdot v ( q^{0,\ell} )$  we focus on $\ell \notin
L_1$.  For those vectors, we still need to bound  $v ( q^{0,\ell} )$. We
consider two cases.
\begin{itemize}
 \item \textbf{case 1:} $q^{1,\ell}(1) \leq \delta' := \sqrt{2\delta}$.
$$\delta' \geq q^{1,\ell}(1) = \frac{A^\ell_{11} q^1(1)}{A^\ell_{00}
q^1(0)+A^\ell_{11} q^1(1)} = \frac{A^\ell_{11} 0.6}{A^\ell_{00} 0.4+A^\ell_{11}
0.6} \Rightarrow \frac{A^\ell_{00}}{A^\ell_{11}} \geq \frac{3
(1-\delta')}{2\delta'} \geq \frac{1}{2\delta'} $$
Now, calculating $q^{0,\ell}(1)$, we get:
$$ q^{0,\ell}(1) = \frac{A^\ell_{11} q^0(1)}{A^\ell_{00}
q^0(0)+A^\ell_{11} q^0(1)} = \frac{A^\ell_{11} 0.4}{A^\ell_{00} 0.6+A^\ell_{11}
0.4} \leq  \frac{ 0.4}{(1/2\delta') 0.6+ 0.4} \leq 2 \delta' $$
Therefore:
$v(q^{0,\ell}) \geq 1-9(2 \delta') = 1-18 \cdot \delta'$
 \item \textbf{case 2:} $q^{1,\ell}(1) \geq 1-\delta' = 1-\sqrt{2\delta}$.
$$\delta' \geq q^{1,\ell}(0) = \frac{A^\ell_{00} q^1(0)}{A^\ell_{00}
q^1(0)+A^\ell_{11} q^1(1)} = \frac{A^\ell_{00} 0.4}{A^\ell_{00} 0.4+A^\ell_{11}
0.6} \Rightarrow \frac{A^\ell_{11}}{A^\ell_{00}} \geq \frac{2
(1-\delta')}{3\delta'} \geq \frac{1}{3\delta'} $$
Now, calculating $q^{0,\ell}(0)$, we get:
$$ q^{0,\ell}(0) = \frac{A^\ell_{00} q^0(0)}{A^\ell_{00}
q^0(0)+A^\ell_{11} q^0(1)} = \frac{A^\ell_{00} 0.6}{A^\ell_{00} 0.6+A^\ell_{11}
0.4} \leq  \frac{ 0.6}{ 0.6+ 0.4(1/3\delta')} \leq 5 \delta' $$
Therefore:
$v(q^{0,\ell}) \geq 1-(5 \delta')$
\end{itemize}
The important fact is that
$v(q^{0,\ell}) \geq 1-18 \sqrt{2\delta}$ for each $\ell \notin L_1$. Now,
we are ready to bound the value that type $\theta=0$ gets from the right path:
$$\sum_{\ell \notin L_1} (\one^t
A^\ell q^0) \cdot v ( q^{0,\ell} ) \geq \sum_{\ell\in L_1} (\one^t
A^\ell q^0) \cdot v ( q^{0,\ell} ) \geq (1-\sqrt{2\delta})\cdot(1-18
\sqrt{2\delta}) \geq 1-36 \cdot \delta$$
\end{proof}

Now, for the next step, we bound the total expected payment of buyer $\theta=0$
if he decides to deviate to the right path and follow the protocol until the
end. In order to do so, let $N$ the set of transfer nodes in the left path and
for each $i \in N$, let $t_i$ be the amount associated with it. Also, let
$\prob(i\vert \omega)$ be the probability that a buyer following the right
branch of the tree reaches $i$ condioned on $\omega$. Define also the matrix:
$$A^i = \begin{bmatrix} \prob(i\vert \omega=0) & 0\\ 0& \prob(i\vert \omega=1)
\end{bmatrix}$$

Now, remember we had fixed in the beginning a constant $\epsilon \gg \delta$ but
still $\epsilon \ll 1$. Now the expected payment of player of type $\theta$ in
the right branch of the protocol is given by:
$$\sum_\omega \mu(\omega \vert \theta) \sum_i \prob(i \vert \omega) t_i =
\sum_i (\one^t A^i q^\theta) \cdot t_i$$
We know that player of type $\theta=1$ pays at least $0.4-2\delta$, so $\sum_i
(\one^t A^i q^1) \cdot t_i \geq 0.4-2\delta$. Now, we show that the payment of
buyer $\theta=0$ if he switches to the right branch of the protocol (i.e. moves
as a buyer of type $\theta=1$) is $1-O(\epsilon)$, giving him utility
$O(\epsilon) \gg 2\delta$ for doing so. We formalize this below.

\begin{fact}\label{small:small_payment_deviation}
 If the payment of player of type $\theta=1$ is $\sum_i (\one^t A^i q^1) \cdot
t_i \leq 0.4$, then the expected payment $\sum_i (\one^t A^i q^0) \cdot
t_i$ of the player of type $\theta=0$ after moving is at most $0.6
-\epsilon(0.4-\epsilon)$.
\end{fact}

\begin{proof}
We will decompose the sum $\sum_i (\one^t A^i q^0) \cdot t_i$ in two parts and
bound each of them separately. First, we divide the transfer nodes in two
classes:
$$N_1 = \{i \in N; \one^t A^i q^0 \leq (\tfrac{3}{2}-\epsilon) \one^t A^i
q^1\}$$
$$N_2 = \{i \in N; \one^t A^i q^0 > (\tfrac{3}{2}-\epsilon) \one^t A^i q^1\}$$

First note that for all $i$ (even $i \in N_2$), $\one^t A^i q^0 \leq
\tfrac{3}{2}\one^t A_i q^1$ simply because $q^0 \leq \tfrac{3}{2} q^1$
component-wise. Now, we can write:

\begin{equation}\label{eq:payment_decomposition}\sum_i (\one^t A^i q^0) \cdot
t_i \leq \sum_{i \in N_1} (\tfrac{3}{2}-\epsilon) \cdot \one^t A^i q^1 + \sum_{i
\in N_2}
\tfrac{3}{2} \cdot \one^t A^i q^1\end{equation}
If we show that most of $\sum_i (\one^t A^i q^1) \cdot t_i$ is concentrated on
$N_1$ we are done.

Now, consider some transfer node $i \in N_2$ and recall that $q^0 =
[\begin{smallmatrix} 0.6\\0.4 \end{smallmatrix}]$ and $q^1 =
[\begin{smallmatrix} 0.4\\0.6 \end{smallmatrix}]$. Therefore:
\begin{equation}\label{eq:diagonal_ratio}i \in N_2 \Leftrightarrow 0.6 A^i_{00}
+ 0.4 A^i_{11} \geq (\tfrac{3}{2}-\epsilon) (0.4 A^i_{00} + 0.6 A^i_{11})
\Leftrightarrow \frac{A^i_{11}}{A^i_{00}} \leq
\frac{4\epsilon}{5-6\epsilon}\end{equation}

Observe also that once in a node $i \in N_2$, a buyer of type $\theta=1$ is
willing to
follow the protocol until the end, the expected amount he is paying by
following this node, must be at least as big as his surplus for the full
information at that node. Formally:
\begin{equation}\label{eq:voluntary_participation}1 - v(\frac{A^i q^1}{\one^t
A^i q^1}) \geq t_i + \sum_{j \in N(i)} t_j \cdot \frac{\one^t A^j q^1}{\one^t
A^i q^1} \end{equation}
where $N(i)$ is the set of transfer nodes that have $i$ as an ancestor
(excluding $i$ itself). In order to justify the term $\frac{\one^t A^j
q^1}{\one^t A^i q^1}$, notice that if $i$ is an ancestor of $i$ then the
probability of reaching $j$ given $i$ is given by:
$$\prob(j \vert i,\theta) = \frac{\prob(i,j \vert \theta)}{\prob(i \vert
\theta)} = \frac{\prob(j \vert \theta )}{\prob(i \vert \theta)} = \frac{\one^t
A^j q^1}{\one^t A^i q^1} $$
 Now, given $i \in N_2$, since
the matrix $A^i$ is of the form $[\begin{smallmatrix} O(1) & \\ & O(\epsilon)
\end{smallmatrix}]$, the posterior probability of each player in that node is
$O(\epsilon)$, so $v(q) = w^t q$ where $w =
[\begin{smallmatrix} 1\\-8\end{smallmatrix}]$. Therefore, for $i \in
N_2$, we can re-write equation (\ref{eq:voluntary_participation}) as:
$$   t_i \cdot (\one^t A^i q^1) + \sum_{j \in N(i)}
t_j \cdot (\one^t A^j q^1) \leq \hat{w}^t A^i q^1$$ where $\hat{w} =
\one-w = [\begin{smallmatrix} 0\\9\end{smallmatrix}]$. Now, let $N_2^g$ be the
nodes in $N_2$ that have no ancestor in $N_2$.
Therefore:
$$\sum_{i \in N_2} t_i \cdot (\one^t A^i q^1) \leq \sum_{i \in N_2^g} \left[
t_i \cdot (\one^t A^i q^1) + \sum_{j \in N(i)} t_j \cdot (\one^t A^j q^1)
 \right] \leq \sum_{i \in N_2^g} \hat{w}^t A^i q^1 \leq 9 \cdot \sum_{i \in
N_2^g} A^i_{11} \cdot 0.6$$
Now, we know by equation (\ref{eq:diagonal_ratio}) that:
$$\sum_{i \in N_2^g} A^i_{11} \leq \sum_{i \in N_2^g}
\frac{4\epsilon}{5-6\epsilon} \cdot A^i_{00} \leq \epsilon \sum_{i \in N_2^g}
\prob(i \vert \omega=0) \leq \epsilon$$
where we are using the definition that $A^i_{00}$ is the probability that the
protocol reached node $i$ given $\omega=0$ and we are using the fact that
$\sum_{i \in N_2^g} \prob(i \vert \omega=0) \leq 1$ since the protocol reaching
$i \in N_2^g$ are disjoint events, since for all paths in the protocol-tree
there is at most one node in $N_2^g$. Therefore: $\sum_{i \in N_2} t_i \cdot
(\one^t A^i q^1) \leq 9 \epsilon$, which allows us to finish the bound in
equation~(\ref{eq:payment_decomposition}):
$$\sum_i (\one^t A^i q^0) \cdot
t_i \leq \sum_{i \in N_1 \cup N_2} (\tfrac{3}{2}-\epsilon) \cdot \one^t A^i q^1
+ \epsilon \sum_{i\in N_2} \cdot \one^t A^i q^1 \leq 0.4 (\tfrac{3}{2}-\epsilon)
- \epsilon^2 = 0.6 - \epsilon(0.4-\epsilon)$$
\end{proof}

Taking Fact \ref{fact:high_value_deviation} and Fact
\ref{small:small_payment_deviation} together, it is clear that there cannot be a
protocol extracting $0.5-\delta$ from uncommitted buyers without transfers from
the seller to the buyer. Summarizing the argument, if there were such protocol,
then the buyer of type $\theta=1$ would be faced with the option of getting
almost full information (i.e. getting a signal that would give him at least
$0.4-2\delta$ value) and would pay at most $0.4$. If a buyer of type $\theta=0$
pretends to be of type $\theta=1$, he guarantees by fact Fact
\ref{fact:high_value_deviation} a contract that has almost full information
(gives him value at least $1-36\delta$) and charges him at most $0.6 -
0.4\epsilon + \epsilon^2$ (Fact \ref{small:small_payment_deviation}). This
contradicts the fact that we can extract at least $0.6-2\delta$ revenue from
type $\theta=0$.
\end{proof}

%
%

\end{document}